\documentclass[11pt]{article}

\newif\ifcomments
\commentstrue

\usepackage{fullpage}
\usepackage[utf8]{inputenc}

\usepackage{graphicx}
\usepackage{amsmath,amsthm,amssymb}
\usepackage{algorithm}
\usepackage{algpseudocode}
\usepackage{multirow}
\usepackage{makecell}
\usepackage{svg} 

\usepackage{thm-restate,color,xspace}
\usepackage{comment}
\usepackage{thmtools}
\usepackage{xcolor}
\usepackage{nameref}
\usepackage{array}
\usepackage{authblk}
\usepackage[bottom]{footmisc}

\definecolor{ForestGreen}{rgb}{0.1333,0.5451,0.1333}
\definecolor{DarkRed}{rgb}{0.65,0,0}
\definecolor{Red}{rgb}{1,0,0}
\usepackage[linktocpage=true,
pagebackref=true,colorlinks,
linkcolor=DarkRed,citecolor=ForestGreen,
bookmarks,bookmarksopen,bookmarksnumbered]
{hyperref}
\usepackage{cleveref}

\usepackage{tikz}
\usetikzlibrary{positioning}

\declaretheorem[numberwithin=section]{theorem}
\declaretheorem[numberlike=theorem]{lemma}

\declaretheorem[numberlike=theorem]{corollary}

\declaretheorem[numberlike=theorem]{conjecture}

\declaretheorem[numberlike=theorem]{claim}

\declaretheorem[numberlike=theorem,style=definition]{definition}
\declaretheorem[numberlike=theorem,style=definition]{remark}

\global\long\def\polylog{\mathrm{polylog}}

\newcommand{\Otil}{\tilde{O}}
\newcommand{\Ohat}{\hat{O}}
\newcommand{\Omegahat}{\hat{\Omega}}

\global\long\def\APVC{\mathrm{APVC}}
\global\long\def\fclique{\mathrm{4clique}}
\global\long\def\dem{\mathrm{dem}}

\global\long\def\Thetahat{\hat{\Theta}}

\newcommand{\ignore}[1]{}

\ifcomments
\def\thatchaphol#1{\marginpar{$\leftarrow$\fbox{TS}}\footnote{$\Rightarrow$~{\sf\textcolor{purple}{#1 --Thatchaphol}}}}
\def\benyu#1{\marginpar{$\leftarrow$\fbox{BW}}\footnote{$\Rightarrow$~{\sf\textcolor{red}{#1 --Benyu}}}}
\def\yaowei#1{\marginpar{$\leftarrow$\fbox{YL}}\footnote{$\Rightarrow$~{\sf\textcolor{blue}{#1 --Yaowei}}}}
\def\zhiyi#1{\marginpar{$\leftarrow$\fbox{ZH}}\footnote{$\Rightarrow$~{\sf\textcolor{olive}{#1 --Zhiyi}}}}

\else
\newcommand{\benyu}[1]{}
\newcommand{\zhiyi}[1]{}
\newcommand{\thatchaphol}[1]{}
\newcommand{\yaowei}[1]{}
\fi 

\begin{document}

\title{Tight Conditional Lower Bounds for Vertex Connectivity Problems}

\author[1]{Zhiyi Huang}
\author[2]{Yaowei Long}
\author[2]{Thatchaphol Saranurak\thanks{Supported by NSF CAREER grant 2238138.}}
\author[1]{Benyu Wang}
\affil[1]{Tsinghua University}
\affil[2]{University of Michigan}

\date{April, 2023}
\maketitle


\pagenumbering{gobble}
\begin{abstract}
We study the fine-grained complexity of graph connectivity problems in unweighted undirected graphs. Recent development shows that all variants of \emph{edge connectivity} problems, including single-source-single-sink, global, Steiner, single-source, and all-pairs connectivity, are solvable in $m^{1+o(1)}$ time, collapsing the complexity of these problems into the almost-linear-time regime. While, historically, \emph{vertex connectivity} has been much harder, the recent results showed that both single-source-single-sink and global vertex connectivity can be solved in $m^{1+o(1)}$ time, raising the hope of putting all variants of vertex connectivity problems into the almost-linear-time regime too. 

We show that this hope is impossible, assuming conjectures on finding 4-cliques. Moreover, we essentially settle the complexity landscape by giving tight bounds for \emph{combinatorial} algorithms in dense graphs. There are three separate regimes:
\begin{enumerate}
\item all-pairs and Steiner vertex connectivity have complexity $\hat{\Theta}(n^{4})$,
\item single-source vertex connectivity has complexity $\hat{\Theta}(n^{3})$, and
\item single-source-single-sink and global vertex connectivity have complexity $\hat{\Theta}(n^{2})$.
\end{enumerate}
For graphs with general density, we obtain tight bounds of $\hat{\Theta}(m^{2})$, $\hat{\Theta}(m^{1.5})$, $\hat{\Theta}(m)$, respectively, assuming Gomory-Hu trees for element connectivity can be computed in almost-linear time.
\end{abstract}
\pagebreak

\pagenumbering{arabic}

\section{Introduction}

Vertex connectivity and edge connectivity are central concepts in graph theory. In an unweighted undirected graph $G$ with $n$ vertices and $m$ edges, the \emph{vertex connectivity }(\emph{edge connectivity}) between two vertices $u,v$ is the maximum number of internally vertex-disjoint (edge-disjoint) paths from $u$ to $v$. Efficient algorithms for variants of both problems have been extensively studied for at least half a century \cite{FF56,Kle69,GH61,GR98,CKMST11,She13,KLOS14,Mad16,van2020bipartite,CKL+22}.

Until very recently, the graph algorithm community has reached a very satisfying conclusion on the complexity of edge connectivity problems: \emph{all variants of edge connectivity problems can be solved in almost-linear time}. There are five variants of the connectivity problems studied in the literature, including (1) global, (2) single-source single-sink, (3) Steiner, (4) single-source, and (5) all-pairs connectivity. See the detailed definitions in \Cref{sect:prelim}. Among these problems, all-pairs connectivity is the hardest via straightforward reductions. For the global edge connectivity problem, Karger showed in 2000 that the problem admits a near-linear time algorithm \cite{Kar00}. In the recent line of work \cite{LP20,AKT21,LP21,AKT22,LPS22,AKL+22}, Abboud et al.~\cite{AKL+22} finally showed that even the all-pairs edge connectivity case could be reduced to the single-source single-sink case, which solvable in almost-linear time via the recent max flow result by Chen et al. \cite{CKL+22}. This finally puts all five problems on edge connectivity into the almost-linear time regime and raises the following hope:
\begin{center}
\emph{Can we solve all variants of vertex connectivity problems in almost-linear time too? }
\par\end{center}

Historically, the vertex connectivity problems have been much more difficult than their edge connectivity counterpart. Furthermore, Abboud et al. \cite{AKT20} showed that the all-pairs vertex connectivity in \emph{weighted} graphs with $\Otil(n)$ edges requires $\Omegahat(n^{3})$ time assuming SETH.\footnote{We use $\Otil(\cdot)$ to hide a $\polylog(n)$ factor and $\Ohat(\cdot)$ to hide a $n^{o(1)}$ factor.} In \emph{directed} unweighted graphs, Abboud et al.~\cite{abboud2018faster} also showed that the problem even requires $\Omegahat(n^{4})$ for combinatorial algorithms and $\Omegahat(n^{\omega +1 })$ time for general algorithms. Thus, at least in weighted or directed graphs, the problem does not admit almost-linear algorithms. 

But, again, the recent development on vertex connectivity in the standard \emph{unweighted undirected} graphs raises the hope for almost-linear time algorithms. Li et al.~\cite{LNP+21} showed how to compute global vertex connectivity using polylogarithmic calls to max flows, which implied an $\Ohat(m)$-time algorithm via the max flow algorithm of \cite{CKL+22} and improved upon  the known $\Otil(mn)$ bound by \cite{HRG00}. Indeed, the recent max flow algorithm \cite{CKL+22} also implies a $\Ohat(m)$-time algorithm for the $(s,t)$-vertex connectivity problem. 
Until now, there are still no nontrivial lower bounds for Steiner, single-source, and all-pairs vertex connectivity in unweighted graphs, and the technique of \cite{AKT20} is quite specific for weighted graphs. The complexity landscape for vertex connectivity problems is still very unclear.

\subsection{Our Results}

We give a firm negative answer to the above open problem. Based on well-known conjectures, we settle the complexity landscape for all five vertex connectivity problems in dense graphs by giving tight bounds for \emph{combinatorial} algorithms, which generally refer to algorithms that do not use fast matrix multiplication. We can obtain tight bounds even in general graphs by assuming a possible hypothesis about computing Gomory-Hu trees for element connectivity. Below, we discuss our results in more detail.

\paragraph{All-pairs vertex connectivity (\Cref{sect:APVClowerbound}).} 
Our first result is a conditional lower bound of the all-pair vertex connectivity (APVC) problem based on the \emph{4-clique conjecture}, which postulates that the running time for deciding the existence of a 4-clique in a graph must be at least $\hat{\Omega}(n^{4})$ time for combinatorial algorithms \cite{BGL17} and $\Omegahat(n^{\omega(1,2,1)})=\Omega(n^{3.25})$ time for general algorithms \cite{DV22}.\footnote{$\omega(1, 2, 1)$ is the exponent of multiplying an $n\times n^{2}$ matrix by an $n^{2}\times n$ matrix.} 

\begin{theorem}
Assuming the 4-clique conjecture, the all-pairs vertex connectivity problem on an undirected unweighted graph with $n$ vertices requires $\Omegahat(n^4)$ time for combinatorial algorithms, and  $\Omega(n^{3.25})$ time for general algorithms.
\label{thm:1.1}
\end{theorem}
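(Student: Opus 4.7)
The plan is to reduce 4-clique detection on an $n$-vertex graph $H$ to all-pairs vertex connectivity on a graph $G$ with only $\Ohat(n)$ vertices. Any combinatorial APVC algorithm running in time $O(N^{4-\epsilon})$ would then decide 4-clique in $O(n^{4-\epsilon})$ time, contradicting the combinatorial form of the 4-clique conjecture; the analogous argument with the rectangular-matrix-multiplication form yields the $\Omega(n^{3.25})$ bound for general algorithms. After the standard color-coding preprocessing I may assume $H$ is 4-partite with balanced parts $V_1,V_2,V_3,V_4$ of size $n/4$.

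First I would set up $G$ as a layered graph whose vertex set is $V_1\cup V_2\cup V_3\cup V_4$ together with a small number of auxiliary gadget vertices; inter-layer edges are inherited from $E(H)$. The intention is that for every pair $(v_1,v_4)\in V_1\times V_4$ the quantity $\kappa_G(v_1,v_4)$ should, by Menger's theorem, equal the maximum number of vertex-disjoint length-3 paths $v_1\to v_2\to v_3\to v_4$ whose internal vertices lie in $V_2$ and $V_3$ respectively and that witness all the cross-edges of a 4-clique containing $v_1,v_4$. By König's theorem this count equals the maximum matching, hence also the minimum vertex cover, in the bipartite subgraph of $H$ supported on $N_H(v_1)\cap V_2$ and $N_H(v_4)\cap V_3$. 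I would then add uniform padding (for instance apex-like gadget vertices adjacent to designated parts) so that there is a single threshold $k$ with the property that $\kappa_G(v_1,v_4)\le k$ iff $\{v_1,v_4\}$ extends to a 4-clique in $H$.

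With this construction, soundness and completeness amount to checking two directions. If $H$ has a 4-clique $\{v_1,v_2,v_3,v_4\}$ with $v_i\in V_i$, the bipartite subgraph above contains the edge $(v_2,v_3)$, so its minimum vertex cover is at most $1$, and the padding brings $\kappa_G(v_1,v_4)$ down to $k$ or below. Conversely, if $\kappa_G(v_1,v_4)\le k$ for some pair, the min-cut analysis forces the bipartite subgraph to contain an edge, recovering a 4-clique. Running APVC on $G$ and scanning the $\Ohat(n^2)$ pair values therefore decides 4-clique, finishing the reduction.

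The main obstacle is designing the gadgets so that (i) $|V(G)|=\Ohat(n)$ is preserved, (ii) the vertex-disjoint $v_1$--$v_4$ paths in $G$ are \emph{exactly} the 4-clique witnesses, with no spurious ``side-channel'' paths contributing to $\kappa_G$, and (iii) the construction lives in undirected unweighted graphs, where earlier reductions for directed or weighted settings \cite{AKT20,abboud2018faster} do not directly transfer. Overcoming (ii) in particular will likely require high-degree apex vertices that raise the baseline connectivity uniformly while leaving a one-unit sensitivity to the presence of a matching edge in the relevant bipartite subgraph, together with a tight König-type duality argument for the min vertex cut.
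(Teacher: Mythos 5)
Your high-level plan---reduce $4$-clique detection on an $n$-vertex graph to APVC on an $\hat O(n)$-vertex layered graph whose pairwise connectivities encode the presence of a $4$-clique witness---is exactly the strategy the paper takes. However, the sketch has a genuine logical error and leaves the central difficulty unresolved.

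First, the direction of your threshold condition is backwards. Adding vertex-disjoint paths can only \emph{increase} vertex connectivity, never decrease it. If $\{v_1,v_4\}$ extends to a $4$-clique, the relevant bipartite graph between $N_H(v_1)\cap N_H(v_4)\cap V_2$ and $N_H(v_1)\cap N_H(v_4)\cap V_3$ contains an edge, which yields an \emph{extra} vertex-disjoint $v_1\to v_4$ path and hence makes $\kappa_G(v_1,v_4)$ one \emph{larger} than the no-clique baseline. Your claim that an edge gives ``minimum vertex cover at most $1$'' and ``brings $\kappa_G(v_1,v_4)$ down to $k$ or below'' therefore cannot be salvaged; the correct statement (and what the paper proves) is $\kappa_G(v_1,v_4)\ge\text{baseline}+1$ iff a $4$-clique containing $v_1,v_4$ exists. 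Relatedly, your bipartite subgraph should be restricted to \emph{common} neighbors of $v_1$ and $v_4$ in each middle layer, not merely neighbors of $v_1$ in $V_2$ and of $v_4$ in $V_3$; a path $v_1\to v_2\to v_3\to v_4$ without the diagonal edges $v_1v_3$ and $v_2v_4$ is not a $4$-clique.

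Second, and more importantly, everything you identify in your final paragraph as ``the main obstacle'' is the actual content of the theorem, and you do not supply it. Naively combining all the pairwise instances into a single graph creates exactly the spurious side-channel paths you warn about: paths through vertices of $V_2,V_3$ that are not adjacent to both endpoints, and paths through the wrong vertices of the source/sink parts. The paper resolves this with two concrete gadgets: a \emph{source-sink isolating gadget} $Q(A,D)$ that forces every vertex of $A\cup D$ other than the chosen pair $(a,d)$ into any minimum cut, and two \emph{set-intersection filters} $P^B_{ad},P^C_{ad}$ that force all vertices of $B\setminus(B_a\cap B_d)$ and $C\setminus(C_a\cap C_d)$ into any minimum cut via carefully constructed matchings and duplicate layers. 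These gadgets give an exact additive decomposition
$\kappa_H(a,d)=\kappa_{\hat H_{ad}}(a,d)+4n+|N_G(a)\cap\bar N_G(d)|+|\bar N_G(a)\cap N_G(d)|$,
from which the reduction follows. Note also that the resulting threshold is \emph{pair-dependent} (it involves $|N_G(a)\cap\bar N_G(d)|$ and $|\bar N_G(a)\cap N_G(d)|$), not the single global $k$ you postulate; the paper needs a separate argument (the Steiner construction with the extra $Z,W$ layers) to equalize the thresholds, which is a nontrivial further step. Until you produce gadgets with these isolating/filtering properties and the right cut-side argument showing they add exactly a computable constant, the proof is not complete.
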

\Cref{thm:1.1} gives the first super cubic lower bounds for all-pairs vertex connectivity problems in the standard undirected case. Moreover, the bound is tight for \emph{combinatorial} algorithms. Indeed, a naive algorithm is to call max flow $O(n^{2})$ times, one for each pair of vertices, which takes $\Ohat(mn^{2}) = \Ohat(n^4)$ time. 

It is interesting to compare APVC with the \emph{all-pairs shortest paths} (APSP) problem, a central problem in fine-grained complexity. It is conjectured that the right complexity of APSP in weighted graphs is $\Thetahat(n^3)$ (see e.g.~\cite{WW10}).
\Cref{thm:1.1} shows that, for general algorithms, APVC in unweighted graphs is strictly harder than APSP in weighted graphs, assuming $\omega > 2$, 

\paragraph{Steiner vertex connectivity (\Cref{sect:SteinerLowerBound}).} 
Next, we study the \emph{Steiner vertex connectivity} problem.
In this problem, given a set of vertices $T$, we need to compute the minimum vertex connectivity among all pairs of vertices in $T$.
Even though Steiner vertex connectivity looks much easier than APVC, we can extend the same lower bound
of \Cref{thm:1.1} for APVC to work for the Steiner case too. 

Towards this hardness, we propose a variant of the 4-clique conjecture, the \textit{edge-universal} 4-clique conjecture, which postulates that, given any subset of demand edges  $E_{\dem}\subseteq E(G)$, checking if every edge in $E_{\dem}$ is contained in a 4-clique requires $\Omegahat(n^4)$ time for combinatorial algorithms (see \Cref{conj:AllEdge4Clique} for details).\footnote{Note that this problem is at least as hard as the problem when $E_{\dem} = E(G)$, i.e. when we need to check if every edge is contained in a 4-clique.}

\begin{theorem}
Assuming the edge-universal 4-clique conjecture, the Steiner vertex connectivity problem on an undirected unweighted graph with $n$ vertices requires $\Omegahat(n^4)$ time for combinatorial algorithms.
\label{thm:SteinerVertexConn}
\end{theorem}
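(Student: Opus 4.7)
The plan is to adapt the reduction behind Theorem 1.1 (from 4-clique to APVC) so that the hardness is witnessed by a single Steiner vertex connectivity query on $O(n)$ vertices. Given an instance $(G, E_{\dem})$ of edge-universal 4-clique with $G$ on $n$ vertices, I would construct $(H, T)$ with $|V(H)| = O(n)$ such that $\text{Steiner-VC}(H,T) \geq k$ if and only if every edge in $E_{\dem}$ lies in some 4-clique of $G$, for an appropriate threshold $k$. Any combinatorial $o(n^4)$-time algorithm for Steiner VC would then, via this reduction, decide edge-universal 4-clique in $o(n^4)$ combinatorial time, contradicting the assumed conjecture.

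The first step is to extract from the APVC reduction of Theorem 1.1 the following per-edge structural property: for each pair $e = \{a,b\}$ with $a,b \in V(G)$, the graph $H_0$ produced by that reduction contains a designated canonical pair $(s_e, t_e) \in V(H_0) \times V(H_0)$ such that $\text{VC}_{H_0}(s_e, t_e) \geq k$ when $e$ is contained in a 4-clique of $G$ and $\text{VC}_{H_0}(s_e, t_e) < k$ otherwise. In other words, the reduction used for APVC already encodes the 4-clique status of each single edge $e$ into the connectivity of one specific pair, so the only thing left to customize is which pairs the Steiner instance is allowed to see.

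The second step is to put all of these per-edge witnesses, and only them, into the terminal set: set $T := \{s_e, t_e : e \in E_{\dem}\}$. Then $\text{Steiner-VC}(H_0, T) = \min_{(x,y)\in T\times T} \text{VC}_{H_0}(x,y)$, so the reduction is complete provided that the only pairs of $T$ achieving values below $k$ are the canonical ones $(s_e, t_e)$ with $e \in E_{\dem}$. To enforce this, I would augment $H_0$ to $H$ by attaching a small auxiliary set $X$ of high-degree vertices that supplies at least $k$ internally vertex-disjoint paths between every non-canonical pair in $T \times T$, but is attached asymmetrically so that no new $s_e$-to-$t_e$ path is created for any canonical pair. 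A natural template is to attach $X$ only through the ``source side'' of a bipartite separator that each canonical $(s_e, t_e)$ pair straddles, so that paths through $X$ connect $s_e$ to $s_f$, $t_e$ to $t_f$, or $s_e$ to $t_f$ for $e \neq f$, yet are blocked for $(s_e, t_e)$ by the same small vertex cut that already separates them when $e$ lies in no 4-clique.

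The main obstacle is exactly this selective augmentation: one must boost connectivity for all $\Theta(|E_{\dem}|^2)$ non-canonical pairs of $T$ past $k$ while leaving every canonical $(s_e, t_e)$ cut undisturbed. The Theorem 1.1 construction is symmetric enough that any heavy-handed augmentation would also raise the canonical connectivities and collapse the reduction, so the construction must exploit an intrinsic asymmetry of the APVC gadget (for instance, the orientation of the paths that encode a common neighbor pattern $a\text{-}b\text{-}c\text{-}d$) and route auxiliary paths that are compatible with non-canonical endpoint combinations but incompatible with a single $(s_e, t_e)$ pair. Once this gadget is in place, the equivalence $\text{Steiner-VC}(H,T) \geq k \iff \forall e \in E_{\dem},\ e \text{ lies in a 4-clique}$ follows directly, and the lower bound transfers because $|V(H)| = O(n)$.
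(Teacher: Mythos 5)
Your high-level plan — reuse the APVC hard instance, designate one ``canonical'' terminal pair per demand edge, and then selectively boost every non-canonical terminal pair above a fixed threshold while leaving canonical pairs alone — is indeed the spirit of the paper's reduction (the paper takes $T = A \cup D$ and uses bipartite-clique attachments $A'$, $D'$ to handle pairs within $A$ or within $D$, and direct edges $E_{AD}$ to handle non-demand pairs across $A \times D$). But your proposal contains a real gap in the first step, and it is not a presentation gap: you assume that the Theorem~\ref{thm:1.1} instance $H_0$ already satisfies ``$\kappa_{H_0}(s_e,t_e) \ge k$ iff $e$ is in a 4-clique'' for a single, global $k$. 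That is false. In \Cref{lemma:reduction} the yes/no boundary for the pair $(a,d)$ is the \emph{pair-dependent} quantity $4n + |N_G(a)\cap\bar N_G(d)| + |\bar N_G(a)\cap N_G(d)| + 1$, which can vary over the whole range $[4n+1,\,5n+1]$ as $(a,d)$ ranges over $E_{\dem}$. A Steiner query returns only a global minimum, so with pair-dependent thresholds no amount of ``boosting the non-canonical pairs out of the way'' can make a single threshold decision correct: a yes-edge with a small threshold and a no-edge with a large threshold can produce the same minimum.

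What the paper does, and what your sketch is missing, is an \emph{equalization} step, not merely selective avoidance: before attaching the $A'$, $D'$, $E_{AD}$ machinery, it attaches two extra vertex sets $Z$ and $W$ with edges determined by $N_G$ and $\bar N_G$ so that every pair $(a,d)$ gains exactly $|N_G(a)\cap N_G(d)| + |\bar N_G(a)\cap\bar N_G(d)|$ additional internally vertex-disjoint $a$--$d$ paths. Since the four intersection sizes sum to $n$, this pushes every demand pair's threshold to the uniform value $5n+1$, after which the single-threshold Steiner query is meaningful. Crucially this gadget deliberately \emph{does} alter the canonical pairs' connectivities — by a per-pair amount that is computable and cancels the nonuniformity — rather than leaving them ``undisturbed'' as your augmentation principle suggests. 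Without this idea your reduction does not go through; with it, the rest of your outline (terminals $A\cup D$, bipartite cliques for same-side pairs, direct edges for non-demand cross pairs) matches the paper.
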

We note that our reduction would imply hardness for general algorithms too if we conjectured the hardness of the edge-universal 4-clique problem for general algorithms.

\paragraph{Upper bounds for general density (\Cref{sect:UpperBound}).} 
On sparse graphs, we observe that there is still some discrepancy between our lower bounds and the naive algorithm. 
More precisely, we observe that our lower bounds for combinatorial algorithms for all-pairs and Steiner vertex connectivity can be easily extended to $\Omegahat(m^2)$ in a graph with $m$ edges. 
However, the naive algorithm requires $\Ohat(mn^2)$ time. For example, in sparse graphs, the naive algorithm takes $\Ohat(n^3)$ time, while our lower bound is $\Omegahat(n^2)$. 

We fix the above discrepancy by showing improved algorithms in sparse graphs via the following result.
\begin{theorem}
Given an $n$-vertex $m$-edge unweighted graph, there is an algorithm to solve the APVC problem in $\hat{O}(m^{11/5})$ time with high probability. Assuming that the \emph{element connectivity Gomory-Hu tree} can be constructed in $\Ohat(m)$ time, the running time can be improved to $\hat{O}(m^2)$.
\label{thm:SparseFastAlgo}
\end{theorem}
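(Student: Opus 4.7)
The plan is to combine a degree-based threshold decomposition with a randomized reduction from APVC to element-connectivity Gomory--Hu trees on sampled terminal sets. Fix a threshold $k$ to be tuned later. Call a pair $(u,v)$ \emph{large} if $\kappa(u,v) > k$ and \emph{small} otherwise. Since $\kappa(u,v) \leq \min(\deg(u),\deg(v))$, every large pair has both endpoints of degree exceeding $k$, and by the handshake lemma there are at most $2m/k$ such vertices, hence at most $O(m^2/k^2)$ large pairs. I handle each large pair with one vertex max flow via \cite{CKL+22} in $\hat{O}(m)$ time, contributing $\hat{O}(m^3/k^2)$ overall.

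For small pairs I exploit the following identity: for non-adjacent $u,v \in T$, one has $\kappa_e^G(u,v; T) \geq \kappa(u,v)$ always (convert an element cut to a vertex cut by replacing each cut edge with one of its endpoints, which is valid since $u,v$ are not endpoints of the cut edges), with equality whenever some minimum vertex cut for $(u,v)$ is disjoint from $T\setminus\{u,v\}$ (such a cut is itself a valid element cut). Sample $T$ by including each vertex independently with probability $p = \Theta(1/k)$ and build the element-connectivity Gomory--Hu tree on $T$; this gives $\kappa_e^G(u,v;T)$ for all $u,v\in T$. For a fixed small pair with some minimum vertex cut $C$ of size at most $k$,
\[ \Pr[u,v\in T \text{ and } C\cap T = \emptyset] \;\geq\; p^2(1-p)^k \;=\; \Omega(1/k^2). \]
Repeating the sampling $\Theta(k^2 \log n)$ times and taking the minimum $\kappa_e^G(u,v;T_i)$ over iterations in which $u,v\in T_i$ yields the correct $\kappa(u,v)$ for every small pair with high probability, by a union bound over the at most $n^2$ pairs. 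Adjacent pairs are reduced to the non-adjacent case by temporarily deleting the direct edge and adjusting the answer by $+1$.

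Balance. Under the hypothesis that element-connectivity Gomory--Hu trees can be constructed in $\hat{O}(m)$ time, each sampling iteration costs $\hat{O}(m)$, so the small-pair part costs $\hat{O}(k^2 m)$. Setting $k = \sqrt{m}$ balances $\hat{O}(k^2 m)$ against $\hat{O}(m^3/k^2)$ and gives the conditional bound $\hat{O}(m^2)$. Unconditionally, using the best currently known element-connectivity Gomory--Hu construction (which has polynomial overhead over a single max flow) and re-optimizing $k$ yields the stated $\hat{O}(m^{11/5})$.

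Main obstacle. The technical core is the correctness of the sampling reduction. Monotonicity ($\kappa_e^G(u,v;T)\ge\kappa(u,v)$ always) ensures that taking the minimum over iterations can never undershoot the true value, so the only thing to prove is that with high probability, for \emph{every} small pair simultaneously, at least one sample hits the right intersection pattern; this follows from the pairwise $\Omega(1/k^2)$ estimate and a union bound. The more delicate unconditional part is quantitatively matching the exponent $11/5$: this depends on invoking the precise current running time of the element-GH subroutine and on whether one can exploit a size-dependent bound for small sampled terminal sets to tune $k$ optimally.
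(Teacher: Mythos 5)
Your approach matches the paper's: threshold by vertex degree, handle all pairs of two high-degree vertices by direct max-flow calls, and handle pairs with a low-degree endpoint via repeated random sampling of a terminal set together with element-connectivity Gomory--Hu trees. The per-sample success probability $\Omega(1/k^2)$, the $\Theta(k^2\log n)$ repetitions, the union bound over pairs, and the conditional $\hat{O}(m^2)$ balancing with $k=\Theta(\sqrt m)$ are the same as in the paper (which in turn follows \cite{IN12,PSY22}). Your side-treatment of adjacent pairs is an unnecessary detour, and in the naive form would require a fresh computation per adjacent pair: the paper takes $\kappa_G(u,v)$ to be, via Menger, the minimum size of a cut $C\subseteq(V\setminus\{u,v\})\cup E$, so any element cut $C\subseteq E\cup(V\setminus U)$ with $u,v\in U$ is automatically a legal vertex cut and $\kappa'_{G,U}(u,v)\ge\kappa_G(u,v)$ holds for all pairs, adjacent or not, with no edge-replacement argument and no per-pair graph modification.

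The genuine gap is the unconditional $\hat{O}(m^{11/5})$ bound, which you acknowledge but do not derive. The missing ingredient is the construction from \cite{PSY22} of a $k$-truncated element-connectivity Gomory--Hu tree, i.e.\ one that reports $\min\{\kappa'_{G,U}(u,v),k\}$ for all $u,v\in U$, in $\hat{O}(mk)$ time. The speedup comes from truncation at the threshold $k$ (one never needs values above $k$ on the low-degree side), not from the sampled terminal set being small, which is the direction your ``main obstacle'' paragraph gestures toward. Plugging $\hat{O}(mk)$ into $\Theta(k^2\log n)$ repetitions gives $\hat{O}(mk^3)$ for the small-pair side (plus an $\tilde O(n^2)$ term for the queries, since each sampled set has $\tilde O(n/k)$ terminals); balancing $mk^3$ against $m^3/k^2$ gives $k=\Theta(m^{2/5})$ and the claimed $\hat{O}(m^{11/5})$, and substituting $\hat O(m)$ per tree under the conjecture gives the $k=\Theta(\sqrt m)$, $\hat O(m^2)$ conclusion you already obtained.
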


\Cref{thm:SparseFastAlgo} gives the first subcubic algorithm for computing all-pairs vertex connectivity in sparse graphs. Moreover, the bound $\Thetahat(m^2)$ is tight with our lower bounds for all density regimes, assuming the almost-linear-time construction of the element connectivity Gomory-Hu tree.

We note that the \emph{element connectivity Gomory-Hu tree} is a generalization of the well-known \emph{edge connectivity Gomory-Hu tree} \cite{GH61}, whose almost-linear-time construction was very recently shown by Abboud \textit{et al.} \cite{AKL+22}. It is quite believable that an element connectivity Gomory-Hu tree can also be constructed in almost-linear time. 

\paragraph{The complexity landscape of vertex connectivity.}

Based on these main results, we obtain some other results and corollaries on all variants of vertex connectivity problems, as summarized in \Cref{table}. In contrast to the edge connectivity problems which can all be solved in almost-linear time, there are three separate regimes for vertex connectivity. 
\begin{enumerate}
\item all-pairs and Steiner vertex connectivity have complexity $\Thetahat(n^{4})$,
\item single-source vertex connectivity has complexity $\Thetahat(n^{3})$, and
\item single-source-single-sink and global vertex connectivity have complexity $\Thetahat(n^{2})$.
\end{enumerate}
For graphs with general density, we obtain tight bounds of $\Thetahat(m^{2})$, $\Thetahat(m^{1.5})$, $\Thetahat(m)$, respectively, assuming Gomory-Hu trees for element connectivity can be computed in almost-linear time.

\begin{table}[]
\footnotesize

\caption{Upper bounds and lower bounds for connectivity problems}

\begin{tabular}{|c|c|c|c|c|}
\hline
 & Global & Single-Source & all-pairs & Steiner \\ \hline
\makecell{edge connectivity, \\ unweighted graphs}   & \makecell{$\tilde{O}(m)$\\\cite{Kar00}}  & $\hat{O}(m)$ \cite{AKL+22} & $\hat{O}(m)$ \cite{AKL+22} & $\hat{O}(m)$ \cite{LP20} \\ \hline
\multirow{2}*{\makecell{vertex connectivity, \\ unweighted graphs \\ with general density}}  &  &  \makecell{$\hat{\Omega}(m^{1.5})$ for comb. algo.,\\ \Cref{coro:SSVCDensity}} & \makecell{$\hat{\Omega}(m^2)$ for comb. algo.,\\ \Cref{coro:Density}}  & \makecell{$\hat{\Omega}(m^{2})$ for comb. algo.,\\Corollary of \Cref{thm:Steinerlowerbound}}  \\
\cline{3-5}
~ & \makecell{$\hat O(m)$\\\cite{LNP+21}} & \makecell{$\hat{O}(m^{1.5})$,\\assuming \Cref{conj:GHtree},\\\Cref{thm:SSVCupper}} & \makecell{$\hat{O}(m^{2})$,\\assuming \Cref{conj:GHtree},\\\Cref{thm:APVCupper}} & \makecell{$\hat{O}(m^{2})$,\\assuming \Cref{conj:GHtree},\\\Cref{thm:APVCupper}}\\
\cline{3-5}
~ & ~ & \makecell{$\hat{O}(m^{5/3})$,\\\Cref{thm:SSVCupper}} & \makecell{$\hat{O}(m^{11/5})$,\\\Cref{thm:APVCupper}} & \makecell{$\hat{O}(m^{11/5})$,\\\Cref{thm:APVCupper}}\\
\hline
\multirow{3}*{\makecell{vertex connectivity, \\ dense unweighted graphs\\$m=\Theta(n^{2})$}} & ~ & \makecell{$\hat{\Omega}(n^{3})$ for comb. algo.,\\
\Cref{coro:SSVC}}  & \makecell{$\hat{\Omega}(n^{4})$ for comb. algo.,\\\Cref{thm:APVClowerbound}}  & \makecell{$\hat{\Omega}(n^{4})$ for comb. algo.\\\Cref{thm:Steinerlowerbound}} \\ 
\cline{3-5}
~ & ~ & ~ & \makecell{$\hat{\Omega}(n^{\omega(1,2,1)})$ for all algo.,\\\Cref{remark:generalAPVC}}& ~\\
\cline{3-5}
~ & \makecell{$\hat{O}(n^2)$\\\cite{LNP+21}} & $\hat{O}(n^{3})$ trivially & $\hat{O}(n^{4})$ trivially & $\hat{O}(n^{4})$ trivially\\
\hline
\makecell{vertex connectivity,\\sparse weighted graphs,\\ $m=\tilde{O}(n)$} & ~ & ~ & $\hat{\Omega}(n^{3})$ \cite{AKT20} & ~ \\
\hline

\end{tabular}

\label{table}

\end{table}

\paragraph{Discussions and Open Problems.}
Our lower bounds for combinatorial algorithms are particularly relevant to the context of vertex connectivity since basically all algorithms for the problems are indeed combinatorial. 
There are a few exceptions \cite{linial1988rubber,abboud2018faster}, but these algorithms are still far from optimal (even cannot break our combinatorial lower bounds). 
It is a very interesting open problem whether one can bypass our combinatorial lower bounds using fast matrix multiplications, or show conditional lower bounds for general algorithms that match our combinatorial lower bounds.

\subsection{Technical Overview}
\label{sect:overview}
To prove \Cref{thm:1.1}, we are will reduce the 4-clique problem to the APVC problem. Previously, Abboud et al.~\cite{abboud2018faster} showed the hardness of all-pairs vertex connectivity in \emph{directed} graphs using the 4-clique problem, which inspired our reduction. However, the techniques are not strong enough to work on undirected graphs. In more details, the hard instance of \cite{abboud2018faster} is a directed acyclic graph with four layers, and so they only need to consider directed paths of length at most 3. In contrast, when we consider undirected graphs, paths connecting sources and sinks can be much more complex, which requires more advanced techniques and more careful arguments. Let us sketch our construction below.

Starting from a 4-clique instance $G$, it is helpful to consider its 4-partite version $G_{4p}$, which is simply constructed by duplicating $V(G)$ into 4 groups $A,B,C,D$ and copying $E(G)$ for each pair of different groups (see \Cref{def:4partite} for a formal definition). A natural way to answer the 4-clique problem on $G$ is then checking for each pair of adjacent $a\in A$ and $d\in D$, whether there exists an adjacent pair of vertices $b\in B$ and $c\in C$ that is adjacent to both $a$ and $d$ (call such $(b,c)$ a \emph{4-clique witness} of $(a,d)$). 

To correspond this to a vertex connectivity problem, for each pair of $a$ and $d$, consider a 4-layer graph $\hat{H}_{ad}$ defined as follows. Let $B_{a}$ denote the set of vertices in $B$ adjacent to $a$ (also define $B_{d}$, $C_{a}$ and $C_{d}$ similarly). Then $B_{a}\cap B_{d}$ (resp. $C_{a}\cap C_{d}$) are vertices in $B$ (resp. $C$) adjacent to both $a$ and $d$. The vertices of $\hat{H}_{ad}$ are
$V(\hat{H}_{ad})=\{a\}\cup(B_{a}\cap B_{d})\cup(C_{a}\cap C_{d})\cup\{d\}$, where the first layer (resp. the last layer) has only a single vertex $a$ (resp. $d$), and the second layer (resp. the third layer) has vertices $B_{a}\cap B_{d}$ (resp. $C_{a}\cap C_{d}$). The edge set of $\hat{H}_{ad}$ is $E(\hat{H}_{ad})=\{(a,b)\mid b\in B_{a}\cap B_{d}\}\cup E_{G_{4p}}(B_{a}\cap B_{d},C_{a}\cap C_{d})\cup \{(c,d)\mid c\in C_{a}\cap C_{d}\}$\footnote{Here $E_{G_{4p}}(B_{a}\cap B_{d},C_{a}\cap C_{d})\subseteq E(G_{4p})$ denote the set of edges connecting $B_{a}\cap B_{d}$ and $C_{a}\cap C_{d}$ in $G_{4p}$.}, which connects vertex $a$ (resp. $d$) to each second-layer (resp. third layer) vertex, and connects the second layer $B_{a}\cap B_{d}$ and the third layer $C_{a}\cap C_{d}$ using the same edges in $G_{4p}$. One can simply observe that a 4-clique witness $(b,c)$ of $(a,d)$ exists if and only if $\kappa_{\hat{H}_{ad}}(a,d)\geq 1$\footnote{In the overview, we use $\kappa_{H}(a,d)$ to denote the vertex connectivity between $a$ and $d$ in a graph $H$.}. To check the existence of 4-clique witnesses for all pairs $(a,d)$ simultaneously, our final APVC instance $H$ will be a combination of all $\hat{H}_{ad}$, and the main challenge is to make the combination compact.

We overcome this challenge by introducing two modules called the \emph{source-sink isolating gadgets} and the \emph{set-intersection filter}. Interestingly, our technique for proving time lower bounds is inspired by the space-lower bound techniques. More specifically, the source-sink isolating gadget is inspired by the construction in \cite{PSY22}.

The intuition behind these two modules is as follows. When considering $\kappa_{\hat{H}_{ad}}(a,d)$ of a specific pair of $a\in A$ and $d\in D$, we will somehow ``remove'' redundant vertices in $H\setminus \hat{H}_{ad}$ so that $\kappa_{\hat{H}_{ad}}(a,d)$ can be derived from $\kappa_{H}(a,d)$. The high-level idea to remove a redundant vertex $v$ in $H$ is to create a ``flow'' path from $a$ to $d$ through $v$ (e.g. a simple path made up of two edges $(a,v)$ and $(v,d)$). By a simple flow-cut argument, this path will enforce that $v$ appears in any $(a,d)$-vertex cut, while bringing some additive deviations to estimate $\kappa_{\hat{H}_{ad}}(a,d)$ as $\kappa_{H}(a,d)$. These modules apply this simple rule in a more general way. The source-sink isolating gadget will remove all vertices in $A$ and $D$ except $a$ and $d$, and the intersection patterns will generate $B_{a}\cap B_{d}$ and $C_{a}\cap C_{d}$ by removing other vertices in $B$ and $C$. The remaining graph will then be exactly $\hat{H}_{ad}$, and the additive deviations between $\kappa_{\hat{H}_{ad}}(a,d)$ and $\kappa_{H}(a,d)$ can be computed and subtracted easily.

The proof of \Cref{thm:SteinerVertexConn} extends the above idea to reduce a Steiner vertex connectivity problem to an edge-universal 4-clique problem. Consider an edge-universal 4-clique instance $G$. Based on the above construction of $H$, by creating more ``flow'' paths, we can guarantee that the additive deviations between $\kappa_{\hat{H}_{ad}}(a,d)$ and $\kappa_{H}(a,d)$ for all pair of $(a,d)$ will be the same, say a value $K$. Therefore, for each pair $(a,d)$, $\kappa_{H}(a,d)$ is either at least $K+1$ or equal to $K$, and there is no 4-clique containing $a$ and $d$ in the original graph $G$ if and only if the latter case happens. Finally, checking the Steiner vertex connectivity of terminal set $A\cup D$ suffices to answer the edge-universal 4-clique problem on $G$.

Towards the upper bound of the APVC problem on sparse graphs in \Cref{thm:SparseFastAlgo}, our algorithm uses the following scheme. Let the input graph be $G$ and let $k$ be a degree threshold to separate vertices into two parts, the high-degree part $V_{h}=\{v\in V(G)\mid \deg_{G}(v)>k\}$ and the low-degree part $V_{l}=\{v\in V(G)\mid \deg_{G}(v)\leq k\}$. For pairs $(u,v)$ such that $u,v\in V_{h}$, we can compute $\kappa_{G}(u,v)$ by simply calling max flows, which takes totally $O(m^{2}/k^{2})$ calls since $|V_{h}|=O(m/k)$. For other pairs $(u,v)$ with $u\in V_{l}$ or $v\in V_{l}$, there will be $\kappa_{G}(u,v)\leq k$, because the vertex connectivity of $u$ and $v$ is upper bounded by their degrees. The vertex connectivity oracle in \cite{PSY22} can exactly capture all-pairs vertex connectivity bounded by $k$, which takes $\tilde{O}(k^{2})$ black-box calls to Gomory-Hu trees for element connectivity whose construction time is currently $\hat{O}(mk)$. The whole algorithm takes $\hat{O}(m^{11/5})$ time by choosing a proper $k$, and the running time will be immediately improved to $\hat{O}(m^{2})$ if Gomory-Hu trees for element connectivity can be constructed in almost linear time.

\subsection{Organization}

We will start with some basic notations and introduce conjectures in \Cref{sect:prelim}. In \Cref{sect:APVClowerbound}, we will show the conditional lower bound of the APVC problem. In \Cref{sect:SteinerLowerBound}, we will extend the approach in \Cref{sect:APVClowerbound} to show the conditional lower bound of the Steiner vertex connectivity problem. In \Cref{sect:UpperBound}, we will show a simple APVC algorithm for graphs with general density.

\section{Preliminaries}
\label{sect:prelim}

In this paper, we use standard graph theoretic notations. For a graph $G$, we use $V(G)$ and $E(G)$ to denote its vertex set and edge set. For any vertex set $V'\subseteq V(G)$, we let $G[V']$ denote the subgraph induced by $V'$ and let $G\setminus V'$ be a short form of $G[V(G)\setminus V']$. For two graphs $G_{1}$ and $G_{2}$ which vertex sets $V(G_{1})$ and $V(G_{2})$ may intersect, we let $G_{1}\cup G_{2}$ denote the graph with vertex set $V(G_{1})\cup V(G_{2})$ and edge set $E(G_{1})\cup E(G_{2})$. For two subset of vertices $V_{1},V_{2}\subseteq V(G)$, we let $E_{G}(V_{1},V_{2})$ denote the set of edges directly connecting $V_{1}$ and $V_{2}$. For a vertex $v\in G$, we let $N_{G}(v)=\{u\mid (u,v)\in E(G)\}$ denote its neighbor set, and let $\bar{N}_{G}(v)=V(G) \setminus N_{G}(v)$ denote its non-neighbors.

\paragraph{Vertex connectivity.}
In a graph $G$, the vertex connectivity for two vertices $u,v\in V(G)$, denoted by $\kappa_G(u,v)$, is the maximum number of internally vertex-disjoint paths from $u$ to $v$. By Menger's theorem, $\kappa_{G}(u,v)$ is equal to the size of minimized subsets $C\subseteq(V(G)\setminus\{u,v\})\cup E(G)$ deleting which from $G$ will disconnect $u$ and $v$.

\paragraph{Vertex Connectivity problems.} In this paper, we will consider four vertex connectivity problems, i.e. the global, single-source, all-pairs, and Steiner vertex connectivity problems. The edge connectivity versions of these problems are analogous.

\begin{itemize}
    \item \textbf{The global vertex connectivity problem.} Given an undirected unweighted graph $G$, the global vertex connectivity problem (the global-VC problem) asks the global vertex connectivity, denoted by $\kappa_G$, where $\kappa_G=\min_{u,v\in G}\kappa_G(u,v)$.
    \item \textbf{The single-source vertex connectivity problem.} Given an undirected unweighted graph $G$ with a source vertex $s$, a single source vertex connectivity problem (the SSVC problem) asks $\kappa_{G}(s,v)$ for all other vertices $v\in G$.
    \item \textbf{The all-pairs vertex connectivity problem.} Given an undirected unweighted graph $G$, the all-pairs vertex connectivity problem (the APVC problem) asks $\kappa_{G}(u,v)$ for all pairs of $u,v\in G$.
    \item \textbf{The Steiner vertex connectivity problem.} Given an undirected unweighted graph $G$ with a terminal set $T\subseteq V(G)$, the Steiner vertex connectivity problem (the Steiner-VC problem) asks the Steiner vertex connectivity of $T$, denoted by $\kappa_G(T)$, where $\kappa_G(T)=\min_{u,v\in T}\kappa_G(u,v)$.
\end{itemize}

\paragraph{The 4-clique conjecture.}
Given an $n$-vertex undirected graph $G$, the $k$-clique problem is to decide whether there is a clique with $k$ vertices in $G$. The $k$-clique problem can be solved in $O(n^{k})$ time trivially, and a more efficient combinatorial algorithm takes running time $O(n^{k}/\log^{k}n)$ \cite{Vas09}. The popular $k$-clique conjecture (see e.g.~\cite{BGL17}) suggests that there is no combinatorial algorithm for the $k$-clique problem with $O(n^{k-\epsilon})$ running time for any constant $\epsilon>0$. In \Cref{sect:APVClowerbound}, we will use this conjecture in the case $k=4$.

\begin{conjecture}[4-clique conjecture]
There is no combinatorial algorithm that solves the 4-clique problem for $n$-vertex graphs in $O(n^{4-\epsilon})$ time for any constant $\epsilon>0$.
\label{conj:4clique}
\end{conjecture}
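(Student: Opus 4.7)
The final statement is a conjecture from fine-grained complexity, not a theorem, so I cannot honestly sketch a proof: establishing an unconditional $\Omega(n^{4-\epsilon})$ lower bound against combinatorial algorithms would be a breakthrough far beyond current techniques, since we do not even possess unconditional nontrivial combinatorial lower bounds for triangle detection. What I can offer instead is the plan one would pursue to \emph{formalize} or \emph{base} the conjecture on a more primitive hypothesis, together with the obstacle that has so far kept it an assumption.

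The natural plan is to reduce the 4-clique conjecture to the more primitive combinatorial BMM conjecture, which postulates that Boolean $n\times n$ matrix multiplication (equivalently, triangle detection) requires $\Omega(n^{3-\epsilon})$ time for any combinatorial algorithm. I would attempt the contrapositive: given a hypothetical combinatorial $O(n^{4-\epsilon})$ algorithm $\mathcal{A}$ for 4-clique, construct a combinatorial $O(n^{3-\epsilon'})$ algorithm for triangle detection for some $\epsilon'>0$ depending on $\epsilon$. A natural candidate is to enumerate a vertex $v$ of the hoped-for triangle, form a graph by attaching a new universal vertex to $N(v)$ and pass it to $\mathcal{A}$, and amortize across vertex choices using a balanced high/low-degree decomposition so that the total work drops below $n^{3-\epsilon'}$.

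The main obstacle, and the reason this is only a plan rather than a proof, is that such a self-reduction from $k$-clique to $(k-1)$-clique is not known to preserve tight exponents: fixing a vertex produces a triangle instance on $N(v)$, whose size can already be $\Theta(n)$, and naively summing $n$ calls to an $n^{3}$-time routine yields $n^{4}$ with no saving; every proposed way to win back an $n^{\epsilon}$ factor from structural properties of neighborhoods runs into the same barrier that obstructs improvements for triangle detection itself. A weaker but potentially tractable backup plan is to prove the bound in restricted combinatorial models such as bounded-depth combinatorial circuits, monotone algebraic circuits, or the cell-probe model, which would at least provide unconditional evidence in a restricted regime. Absent either route, the paper, following standard practice in fine-grained complexity, uses the statement as an axiom and leverages it to derive conditional lower bounds such as Theorem~\ref{thm:1.1}.
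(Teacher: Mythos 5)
You are correct that this is a conjecture, not a theorem, and the paper does not prove it — it states \Cref{conj:4clique} as a hypothesis (citing \cite{BGL17}) and uses it as the basis for the conditional lower bounds. Your recognition of this, and your honest refusal to fabricate a proof of an unconditional combinatorial lower bound, is exactly the right reading of the paper.

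Your discussion of a would-be reduction from combinatorial BMM is a reasonable sketch of why the conjecture is plausible and where the obstacle lies; the observation that fixing a vertex of a hypothetical 4-clique yields a triangle instance on $N(v)$ of size up to $\Theta(n)$, so that $n$ calls to an $n^{3-\epsilon'}$ triangle algorithm only give $n^{4-\epsilon'}$ rather than a saving in the other direction, correctly identifies why $4$-clique is not known to be \emph{equivalent} to BMM hardness and must instead be conjectured separately. One small clarification worth making: the known relationship runs in the opposite direction — a fast (combinatorial) $4$-clique algorithm would give a fast (combinatorial) triangle algorithm by the standard "split each of two parts" trick, so $4$-clique hardness is at least as strong an assumption as combinatorial BMM hardness, and the $4$-clique conjecture is genuinely a stronger hypothesis rather than a consequence of BMM. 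The paper implicitly relies on that stronger hypothesis and does not attempt to derive it from anything more primitive.
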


For each 4-clique instance $G$, it is equivalent to consider its 4-partite form $G_{4p}$ as defined below. Note that a 4-clique in a 4-partite graph should contain exactly one vertex from each group, and the original graph has a 4-clique if and only if the 4-partite graph $G_{4p}$ has a 4-clique.

\begin{definition}[4-partite graph $G_{4p}$]
Given an undirected graph $G$, the 4-partite graph $G_{4p}$ of $G$ has vertex set $V(G_{4p})=\{v_{A},v_{B},v_{C},v_{D}\mid v\in V(G)\}$, and we let $A=\{v_{A}\mid v\in V(G)\},B=\{v_{B}\mid v\in V(G)\},C=\{v_{C}\mid v\in V(G)\},D=\{v_{D}\mid v\in V(G)\}$ be four groups partitioning $V(G_{4p})$. The edge set $E(G_{4p})=\{(u_{X},v_{Y})\mid (u,v)\in E(G),X,Y\in\{A,B,C,D\},X\neq Y\}$.
\label{def:4partite}
\end{definition}

\paragraph{The edge-universal 4-clique conjecture.}
We consider a variant of the 4-clique problem, called the \emph{edge-universal 4-clique problem}. Given an undirected graph $G$ and a subset of \emph{demand} edges $E_{\dem}\subseteq E(G)$, this problem asks if every edge in $E_{\dem}$ is contained by a 4-clique. We suggest the following conjecture on this problem.

\begin{conjecture}[Edge-universal 4-clique conjecture]
There is no combinatorial algorithm that answers the all edge 4-clique problem for $n$-vertex graphs in $O(n^{4-\epsilon})$ time for any constant $\epsilon>0$.
\label{conj:AllEdge4Clique}
\end{conjecture}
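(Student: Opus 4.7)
The plan is to establish \Cref{conj:AllEdge4Clique} by a fine-grained reduction from the standard 4-clique conjecture (\Cref{conj:4clique}) with only a constant blow-up in the vertex count. If successful, an $O(n^{4-\epsilon})$ combinatorial algorithm for edge-universal 4-clique would yield an $O(n^{4-\epsilon'})$ combinatorial algorithm for standard 4-clique, contradicting \Cref{conj:4clique}. Given a 4-clique instance $G$, I would first pass to its 4-partite form $G_{4p}$ (\Cref{def:4partite}) and then build an edge-universal instance $(G', E_{\dem})$ on $\Theta(n)$ vertices whose answer encodes whether $G_{4p}$ has a 4-clique.

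Concretely, the first step is to design \emph{padding gadgets} ensuring that every edge of $G'$ outside $E_{\dem}$ is trivially contained in a 4-clique; a universally-connected triangle attached to $V(G_{4p})$ suffices. The second step is to design the query part: for each candidate outer pair $(a,d) \in A \times D$, introduce a dedicated query edge $e_{ad} \in E_{\dem}$ together with auxiliary vertices mirroring $N_{G_{4p}}(a)$ and $N_{G_{4p}}(d)$, so that $e_{ad}$ is certified by a 4-clique in $G'$ iff there is a witness $(b,c) \in (B_a \cap B_d) \times (C_a \cap C_d)$ in $G_{4p}$ of the kind described in \Cref{sect:overview}. Since $|A \times D| = n^2$ and each gadget uses $O(1)$ auxiliary vertices, $G'$ still has $\Theta(n)$ vertices, so the reduction preserves the conjectured $n^{4-o(1)}$ lower bound.

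The main obstacle will be reconciling the existential semantics of the 4-clique problem with the universal semantics of edge-universal 4-clique: a single oracle call returning ``yes'' asserts that \emph{every} candidate pair admits a witness, while ``$G$ has a 4-clique'' only asserts that \emph{some} pair does. I would try to resolve this by inverting the polarity of the query gadget, so that $e_{ad}$ lies in a 4-clique of $G'$ iff the pair $(a,d)$ \emph{fails} to have a witness in $G_{4p}$. Under this construction ``every $e_{ad}$ is in a 4-clique'' is equivalent to ``no pair admits a witness'', i.e.\ ``$G$ has no 4-clique'', so a single oracle call decides 4-clique on $G$. Designing such an inverted gadget is the delicate part, because encoding the non-existence of a witness via adjacency typically requires either complementation tricks or auxiliary enumeration, either of which threatens to inflate the vertex count beyond $\Theta(n)$.

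If a direct gadget construction cannot be made completely lossless, a fallback is to emulate the Williams--Williams style subcubic equivalences between triangle detection and all-edges triangle / Boolean matrix multiplication, now at the 4-clique level, aiming for an equivalence between edge-universal 4-clique and 4-clique detection via a tripartite tensor-product of $G$. This route would be pursued if the inverted gadget turns out to require super-constant overhead per pair.
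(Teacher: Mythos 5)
This statement is a \emph{conjecture}, not a theorem. The paper does not prove \Cref{conj:AllEdge4Clique}; it introduces it as a new, separately-posited hardness assumption, and it is used only as a hypothesis for \Cref{thm:Steinerlowerbound}. There is no proof of this statement in the paper to compare your attempt against. The paper's only remarks about it are (i) that the problem is at least as hard as the special case $E_{\dem}=E(G)$, and (ii) that the relationship between the edge-universal 4-clique problem and the standard 4-clique problem mirrors the relationship between the \emph{hitting set} problem and \emph{orthogonal vectors}, and between \emph{quantified SETH} and \emph{SETH}. In all of those precedents, the ``$\forall\exists$'' version is taken as a fresh, plausibly stronger assumption, precisely because no reduction from the ``$\exists$'' version is known. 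Your proposal is therefore attempting to establish something the authors deliberately do not claim.

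The specific obstacle you flag is indeed the fatal one, and it is worth being precise about why your workaround does not resolve it. You want a gadget so that the demand edge $e_{ad}$ lies in a 4-clique of $G'$ if and only if the pair $(a,d)$ has \emph{no} witness $(b,c)$ in $G_{4p}$. The right-hand side is a universal statement over $\Theta(n^2)$ candidate pairs $(b,c)$, while the left-hand side asks for the \emph{existence} of two vertices completing $e_{ad}$ to a 4-clique. Encoding ``for all $(b,c)$, not a witness'' as ``there exist two auxiliary vertices $u,v$ with certain adjacencies to $a,d$'' would require those two auxiliary vertices to somehow certify a property that is jointly determined by the entire neighborhoods of $a$ and $d$; with $O(1)$ auxiliary vertices per pair and only pairwise adjacency available, there is no room to store or verify that information. (Naively, one could enumerate, but then the auxiliary-vertex budget blows up past $\Theta(n)$.) Complementing the graph also does not help, since ``$(a,b,c,d)$ is not a 4-clique'' requires at least one of six non-edges, which is a disjunction, not a single clique condition. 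Your fallback via Williams--Williams-style equivalences likewise does not address the quantifier inversion: those equivalences relate problems with the same quantifier structure (e.g.\ APSP and negative-triangle), whereas what you need is a reduction from $\exists$ to $\forall\exists$, which is exactly what is open for the OV/hitting-set pair. If such a reduction were found, it would be a noteworthy result in its own right; the paper sidesteps the difficulty by positing \Cref{conj:AllEdge4Clique} directly.
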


We note our formulation of the edge-universal 4-clique problem is at least as hard as the problem of checking if every edge is contained in some 4-clique by fixing $E_{\dem} = E(G)$. We choose to present this formulation that allows any $E_{\dem} \subseteq E(G)$ because it only strengthens our hardness result and shows the flexibility of our reduction.

The difference between the edge-universal 4-clique conjecture vs. the 4-clique conjecture is that we switch the quantifier in the definition of the problems. This allows us to obtain new hardness results. 
This strategy for proving conditional lower bounds has been studied and done several times in the literature of fine-grained complexity \cite{gao2018completeness,abboud2016approximation,abboud2022scheduling}.
For example, the relationship between the edge-universal 4-clique problem vs.~the 4-clique problem is the same as the relationship between the well-known \emph{orthogonal vector} problem vs.~the \emph{hitting set} problem introduced in \cite{abboud2016approximation}, and \emph{SETH} vs. \emph{quantified SETH} introduced in \cite{abboud2022scheduling}.

\paragraph{Gomory-Hu trees for element connectivity}
Gomory-Hu trees are cut-equivalent trees introduced by Gomory and Hu \cite{GH61} to \emph{capture} all-pairs edge connectivity in weighted graphs. More precisely, given a Gomory-Hu tree, the edge connectivity of any given pair of vertices can be queried in nearly constant time. Very recently, a breakthrough by \cite{AKL+22} showed that a Gomory-Hu tree can be constructed in $\tilde{O}(n^{2})$ time for a weighted graph and $\hat{O}(m)$ time for an unweighted graph. For vertex connectivity, it has been shown by \cite{Ben95} that there are no such cut-equivalent trees. See also \cite{PSY22} for a more general space lower bound forbidding the existence of such trees for vertex connectivity.

Element connectivity is the notion of connectivity that is related to  vertex connectivity, and yet Gomory-Hu trees have been shown to exist for element connectivity (see e.g. \cite{CRX15}). Given an undirected graph $G$ and a terminal set $U\subseteq V(G)$, the element connectivity for two vertices $u,v\in U$, denoted by $\kappa'_{G, U}(u,v)$, is the size of minimized set $C\subseteq E(G)\cup(V(G)\setminus U)$ whose removal will disconnect $u$ and $v$. An element connectivity Gomory-Hu tree for the graph $G$ and terminal set $U$ will capture $\kappa'_{G,U}(u,v)$ for all pairs of $u,v\in U$.

In \cite{PSY22}, they consider a variant called $k$-Gomory-Hu tree for element connectivity, which given an additional parameter $k$, will capture the value $\min\{\kappa'_{G,U}(u,v),k\}$ for all $u,v\in U$ (namely we can query $\min\{\kappa'_{G,U}(u,v),k\}$ for each given $u,v\in U$ in nearly constant time). By generalizing the $(1+\epsilon)$-approximate Gomory-Hu tree algorithm by \cite{LP21} to the element connectivity setting, the following result was obtained by \cite{PSY22}.

\begin{theorem}
Given an $n$-vertex $m$-edge undirected unweighted graph $G$, a terminal set $U\subseteq V(G)$ and a parameter $k$, there is a randomized algorithm to construct a $k$-Gomory-Hu tree for element connectivity in $\hat{O}(mk)$ time with high probability.

\label{thm:GHtree}
\end{theorem}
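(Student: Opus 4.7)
The plan is to first reduce $k$-bounded element connectivity to $k$-bounded edge connectivity via a standard vertex-splitting gadget, and then adapt the approximate Gomory-Hu tree construction of Li and Panigrahi~\cite{LP21} so that every max-flow subroutine can be truncated at flow value $k$. Concretely, from $G$ I build an unweighted graph $G'$ on $n+|V(G)\setminus U|$ vertices by splitting every non-terminal $v\in V(G)\setminus U$ into two copies $v^{\iin},v^{\out}$ joined by a unit edge and routing every edge of $G$ formerly incident to $v$ through these two copies according to any fixed orientation; terminals and original edges are kept with unit capacity. A short exchange argument shows that for all $u,v\in U$ the edge connectivity $\kappa_{G'}(u,v)$ equals the element connectivity $\kappa'_{G,U}(u,v)$: an element cut using $a$ edges and $b$ non-terminals maps to an edge cut of size $a+b$ consisting of the corresponding original edges together with the split-edges of the chosen non-terminals, and conversely any edge cut of $G'$ projects back to an element cut of equal or smaller size. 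Hence a $k$-bounded edge Gomory-Hu tree on $G'$ restricted to $U$ is exactly a $k$-bounded element Gomory-Hu tree for $(G,U)$.

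Next I would run the Li-Panigrahi pivot-based recursion for Steiner Gomory-Hu trees on $G'$ with terminal set $U$, truncating every max-flow / isolating-cuts invocation at flow value $k$. At each recursion level we sample a pivot $s$ from the remaining terminals, use the isolating-cuts lemma of~\cite{LP20} to extract a single-source mincut structure rooted at $s$ via $O(\log^{2}n)$ max-flow calls, and recurse on the Steiner subinstances determined by the resulting partition. Since every cut we care about has value at most $k$, each max flow only needs to certify up to $k$ units of flow: $k$ rounds of augmenting-path search suffice, yielding per-call cost $\hat{O}(mk)$. The isolating-cuts reduction batches many source-sink pairs with only an $O(\log n)$ overhead, and the balanced-pivot analysis of~\cite{LP21} keeps the recursion depth to $\polylog(n)$. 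The returned tree captures $\min\{\kappa'_{G,U}(u,v),k\}$ exactly: when the true connectivity is at most $k$ the truncated flow returns a genuine minimum cut, and otherwise it certifies the cap $k$, which is exactly what a $k$-Gomory-Hu tree must report.

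Altogether the algorithm performs $\polylog(n)$ many $k$-truncated max-flow computations on $G'$, each in $\hat{O}(mk)$ time, for an overall running time of $\hat{O}(mk)$; the high-probability guarantee is inherited from the randomized pivot sampling of~\cite{LP21} and amplified by standard repetition. The main obstacle I anticipate is verifying that the pivot-balancedness and vertex-sampling arguments of~\cite{LP21} port cleanly to the Steiner setting on the gadgeted graph $G'$: one has to check that sampling pivots from $U$ rather than all of $V(G')$ still drives the recursion depth to polylog, and that the cut-projection between $G$ and $G'$ does not disturb the subinstance structure that~\cite{LP21} maintains across recursive levels. Once those pieces are in place, the $\hat{O}(mk)$ bound follows by simply summing the costs of all $k$-bounded max-flow invocations along the recursion.
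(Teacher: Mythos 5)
Your reduction from element connectivity to edge connectivity via an undirected vertex split with a fixed orientation is incorrect, and this breaks the whole argument. The in/out split gadget is sound only for \emph{directed} flow networks. In your undirected $G'$, a path can pass through $v^{\iin}$ alone — entering and leaving via two edges that were both oriented ``into'' $v$ — without ever touching the split edge $\{v^{\iin},v^{\out}\}$, so deleting the split edge does not simulate deleting $v$. Concretely, take terminals $U=\{a,b,c,d\}$, one non-terminal $v$, and edges $\{a,v\},\{b,v\},\{c,v\},\{d,v\},\{a,c\},\{b,d\}$. Then $\kappa'_{G,U}(a,b)=1$, realized by deleting $v$. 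If the orientation sends $a\to v$, $b\to v$, $v\to c$, $v\to d$, then $G'$ has edges $\{a,v^{\iin}\},\{b,v^{\iin}\},\{v^{\out},c\},\{v^{\out},d\},\{a,c\},\{b,d\},\{v^{\iin},v^{\out}\}$, and the two paths $a$--$v^{\iin}$--$b$ and $a$--$c$--$v^{\out}$--$d$--$b$ are edge-disjoint, giving $\kappa_{G'}(a,b)=2\ne 1$. So the forward direction of your exchange argument (an element cut lifts to an edge cut of the same size) fails: $\{\{v^{\iin},v^{\out}\}\}$ is not a cut of $G'$. The converse direction does hold, but only yields $\kappa_{G'}\ge\kappa'_{G,U}$, so a Gomory--Hu tree of $G'$ overestimates element connectivities and is not what the theorem requires.

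For reference, the paper does not prove \Cref{thm:GHtree}; it imports it from \cite{PSY22}, who obtain it by generalizing the Li--Panigrahi approximate Gomory--Hu recursion of \cite{LP21} to work \emph{directly} with element connectivity (element mincuts, element isolating cuts, etc.) rather than by passing through an undirected edge-connectivity surrogate graph. Your high-level outline of a pivot recursion with $k$-truncated max flows is in the spirit of that generalization, but a correct proof must stay in the element connectivity world throughout; the directed vertex-split network can only be used locally inside individual $s$-$t$ max-flow calls, where directedness is unproblematic, not as a global surrogate on which to build the tree.
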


Given the similarity of Gomory-Hu trees for edge connectivity and element connectivity, and the recent breakthrough of $\hat{O}(m)$-time construction algorithm for edge connectivity Gomroy-Hu tree, it seems reasonable to conjecture that element connectivity Gomory-Hu tree can also be constructed in $\hat{O}(m)$ time. 

\begin{conjecture}

Given an $n$-vertex $m$-edge undirected unweighted graph $G$ and a terminal set $U\subseteq V(G)$, an element connectivity Gomory-Hu tree can be constructed in $\hat{O}(m)$ time.

\label{conj:GHtree}

\end{conjecture}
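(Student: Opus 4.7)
The plan is to adapt the recent $\hat{O}(m)$-time Gomory-Hu tree construction for edge connectivity of Abboud et al.~\cite{AKL+22} to the element-connectivity setting. A natural first attempt is the textbook reduction from element connectivity to (weighted) edge connectivity: for every non-terminal $v$, introduce two copies $v^{\iin},v^{\out}$ joined by an edge of capacity $1$, and route every original edge through these copies with capacity $m+1$, larger than any possible element cut. In the resulting graph $G'$ on $O(n+m)$ vertices and edges, the edge connectivity between two terminals equals their element connectivity in $G$, so an edge-connectivity Gomory-Hu tree of $G'$ on the terminal set $U$ projects back to an element-connectivity Gomory-Hu tree of $G$. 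However, \cite{AKL+22} achieve $\hat{O}(m)$ only in the unweighted case and $\tilde{O}(n^{2})$ in the weighted case, so this black-box reduction does not yield $\hat{O}(m)$ on sparse inputs.

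The next step is to open up \cite{AKL+22} and push its subroutines through for this particular class of graphs, whose integer capacities lie in $\{1,m+1\}$. Their pipeline consists of (i) a Li--Panigrahi style isolating-cuts procedure that, with $\polylog(n)$ max-flow calls, simultaneously computes isolating minimum cuts for many terminals; (ii) a recursive Steiner-style decomposition that stitches these isolating cuts into a Gomory-Hu tree; and (iii) the $\hat{O}(m)$-time max-flow of \cite{CKL+22}, which already supports integer capacities polynomially bounded in $m$. Subroutine (iii) ports to $G'$ for free, and (i) is stated for capacitated graphs, so the real work is in (ii): re-analysing the recursive contractions so that the total work across all levels remains $m^{1+o(1)}$ in the presence of split non-terminals and heavy transport edges.

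The main obstacle will be controlling this recursion. In \cite{AKL+22} the near-linear running time rests on a careful amortization showing that the total size of all contracted subinstances across the recursion tree is $\tilde{O}(m)$; on a generic weighted graph, this bound degrades by factors polynomial in the capacity ratio. The fix is to exploit the two-level structure of $G'$: contractions should either identify an entire split-pair $\{v^{\iin},v^{\out}\}$ as a single super-vertex or collapse a whole capacity-$(m+1)$ block at once, so that each elementary operation charges either to an original vertex or to an original edge of $G$. A cleaner alternative is to bypass the reduction altogether and develop an element-connectivity analog of the approximate Gomory-Hu tree of \cite{LP21}, together with its exact sparsification in \cite{AKL+22}, directly from the element-cut function, using its submodularity on terminal pairs. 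In either route correctness is routine --- it reduces to the standard Gomory-Hu exchange argument applied to the element-cut function --- and the substance of the proof is purely the running-time analysis.
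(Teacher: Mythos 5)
The statement you were asked to prove is Conjecture~\ref{conj:GHtree}, which the paper explicitly presents as an open problem (``We leave this conjecture as a very interesting open problem''); there is no proof of it in the paper to compare against.

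On its own terms, your write-up is a reasonable research plan but it is not a proof. The vertex-splitting reduction to weighted edge connectivity is the right starting point, and you correctly diagnose that the $\hat{O}(m)$ bound of \cite{AKL+22} holds only for unweighted graphs while the reduced graph $G'$ is weighted, so the black-box route fails on sparse inputs. But from that point on everything you write is conditional. You say the ``real work'' is re-analysing the recursive decomposition, you name ``controlling this recursion'' as the ``main obstacle,'' and your proposed fix is a one-sentence heuristic (contract split-pairs $\{v^{\iin},v^{\out}\}$ or whole capacity-$(m+1)$ blocks as units). That is precisely the hard part, and you give no argument that the amortization of \cite{AKL+22} actually survives it. The paper's own position is that this is exactly what is unknown: \Cref{thm:GHtree}, imported from \cite{PSY22}, currently gives only $\hat{O}(mk)$ for the $k$-bounded element-connectivity tree, and whether the unbounded tree can be built in $\hat{O}(m)$ is left open.

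Two further cautions. First, the claim that ``correctness is routine --- it reduces to the standard Gomory-Hu exchange argument applied to the element-cut function'' understates matters: element connectivity is a terminal (Steiner) notion, the Gomory-Hu tree must respect the terminal set $U$, and the existence of such a tree is itself a nontrivial structural fact (see \cite{CRX15}); the exchange/uncrossing argument has to be carried out on terminal cuts, not on arbitrary vertex subsets. Second, your alternative route --- port \cite{LP21} to the element setting and then layer on the exact sparsification of \cite{AKL+22} --- is plausible in spirit (indeed \cite{PSY22} already ported \cite{LP21} to obtain \Cref{thm:GHtree}), but the exact-sparsification step is the genuinely new ingredient of \cite{AKL+22}, and you give no indication of whether or how it carries over. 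In short: you have correctly identified the problem and the obstacles, but you have not closed any of them.
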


We leave this conjecture as a very interesting open problem.

\section{The Lower Bound for the APVC Problem}
\label{sect:APVClowerbound}

In this section, we will prove \Cref{thm:APVClowerbound}, a conditional lower bound of the APVC problem in undirected unweighted graphs conditioning on the 4-clique conjecture. Concretely, we will show a reduction from the 4-clique problem to the APVC problem.

\begin{theorem}
Assuming \Cref{conj:4clique}, for $n$-vertex undirected unweighted graphs, there is no combinatorial algorithm that solves the APVC problem in $O(n^{4-\epsilon})$ time for any constant $\epsilon > 0$. 
\label{thm:APVClowerbound}
\end{theorem}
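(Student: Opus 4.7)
The plan is to give a fine-grained reduction that turns an $n$-vertex 4-clique instance $G$ into an APVC instance $H$ with $N=O(n)$ vertices, such that a single all-pairs vertex connectivity computation on $H$ reveals, for every candidate pair $(a,d)$, whether $G$ contains a 4-clique extending the corresponding vertices. Any combinatorial APVC algorithm running in time $O(N^{4-\epsilon})$ would then solve 4-clique in time $O(n^{4-\epsilon})$, contradicting \Cref{conj:4clique}.

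First I would pass to the 4-partite form $G_{4p}$ with parts $A,B,C,D$ via \Cref{def:4partite}, so that a 4-clique of $G$ corresponds to picking one vertex in each part with all six pairs mutually adjacent. Fixing any adjacent $a\in A$ and $d\in D$, the existence of a 4-clique extending $\{a,d\}$ is equivalent to the existence of an edge $(b,c)\in E(G_{4p})$ with $b\in B_a\cap B_d$ and $c\in C_a\cap C_d$. As sketched in the technical overview, this is precisely the question of whether $\kappa_{\hat{H}_{ad}}(a,d)\geq 1$ in the four-layer graph $\hat{H}_{ad}$ whose middle layers are $B_a\cap B_d$ and $C_a\cap C_d$, with edges $(a,b)$ for every middle-$B$ vertex, $(c,d)$ for every middle-$C$ vertex, and the $G_{4p}$-edges between the two middle layers.

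The main technical step is to encode all of these $\hat{H}_{ad}$ simultaneously inside a single graph $H$ of size $O(n)$. I would start from $G_{4p}$ together with the natural $a$-to-$B$ and $d$-to-$C$ edges, and attach two kinds of gadgets on top: a \emph{source-sink isolating gadget} that forces every minimum $(a,d)$-vertex cut in $H$ to contain all of $(A\cup D)\setminus\{a,d\}$, and a \emph{set-intersection filter} that forces every such cut to contain exactly $B\setminus(B_a\cap B_d)$ on the $B$-side and $C\setminus(C_a\cap C_d)$ on the $C$-side. Both gadgets implement the same primitive: to force a vertex $v$ into every minimum $(a,d)$-cut, one routes many internally vertex-disjoint paths from $a$ to $d$ through $v$ via cheap auxiliary vertices, so that \emph{not} cutting $v$ would leak enormous additional connectivity. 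If the gadgets are designed so that together they contribute the same total $K$ of mandatory cut vertices for every pair $(a,d)\in A\times D$, then the residual connectivity satisfies $\kappa_H(a,d)=K+\kappa_{\hat{H}_{ad}}(a,d)$, and thresholding $\kappa_H(a,d)$ at $K+1$ answers the 4-clique question for each pair.

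The hard part will be verifying this additive identity. The upper bound $\kappa_H(a,d)\leq K+\kappa_{\hat{H}_{ad}}(a,d)$ follows by exhibiting an explicit cut consisting of the $K$ mandatory gadget vertices together with a minimum cut of $\hat{H}_{ad}$. The matching lower bound is the delicate direction: one must show that no $(a,d)$-cut in $H$ can escape the structure of $\hat{H}_{ad}$ by routing through other vertices of $A,B,C,D$ or around the gadgets, an issue that simply does not appear in the layered directed construction of \cite{abboud2018faster} where paths have bounded length. I plan to handle this by dimensioning the gadgets so that omitting any mandatory vertex from a candidate cut opens strictly more than $K+\kappa_{\hat{H}_{ad}}(a,d)$ internally disjoint $(a,d)$-paths; any $(a,d)$-cut of size at most $K+\kappa_{\hat{H}_{ad}}(a,d)$ must then contain exactly the mandatory set on the gadgets, and its restriction to the middle layers is a cut of $\hat{H}_{ad}$ of size at most $\kappa_{\hat{H}_{ad}}(a,d)$. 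Once the identity is established, running an $O(N^{4-\epsilon})$ combinatorial APVC algorithm on $H$ and inspecting the $O(n^{2})$ pairs in $A\times D$ completes the reduction and contradicts \Cref{conj:4clique}.
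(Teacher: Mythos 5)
Your high-level approach matches the paper's: reduce from 4-clique to APVC via a compact graph $H$ built on the 4-partite form $G_{4p}$, using a source-sink isolating gadget plus set-intersection filters, and read off 4-clique existence for each candidate pair $(a,d)\in A\times D$ by thresholding $\kappa_H(a,d)$.

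There is one substantive discrepancy worth flagging. You assume the gadgets can be dimensioned so that the additive shift $K$ between $\kappa_H(a,d)$ and $\kappa_{\hat{H}_{ad}}(a,d)$ is \emph{uniform} over all pairs $(a,d)$, and then you threshold at $K+1$. The paper's APVC construction does not do this: with the isolating gadget and the two intersection filters alone, the shift works out to $4n + |N_G(a)\cap\bar{N}_G(d)| + |\bar{N}_G(a)\cap N_G(d)|$, which depends on $(a,d)$. The paper therefore computes a pair-specific threshold for each $(a,d)$, at an extra cost of $O(n^3)$ time, which is within budget. Your uniform-$K$ plan is not wrong --- the paper achieves exactly that uniformity later, in the Steiner reduction, by attaching compensating ``flow'' paths through two auxiliary vertex sets $Z$ and $W$ --- but you would need those extra gadgets, which your sketch does not describe. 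Either route works for APVC; just be aware that the basic gadgets alone do not give a uniform shift, and if you stick with uniform $K$ you need to actually construct the compensator.

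A smaller inaccuracy: your description of the forcing primitive (``routes many internally vertex-disjoint paths through $v$ ... so that not cutting $v$ would leak enormous additional connectivity'') overstates the mechanism. In the paper's gadgets each forced vertex sits on exactly one extra $a$--$d$ path, and the proof simply exhibits a collection of $K' + \kappa_{\hat{H}_{ad}}(a,d)$ internally disjoint paths for the lower bound and an explicit cut of that same size for the upper bound (with $K'$ the pair-specific shift). There is no ``overwhelming penalty'' argument, and none is needed: the paper never claims that every minimum $(a,d)$-cut must contain the mandatory vertices, only that one cut of the right size exists, which, together with the path collection, pins down $\kappa_H(a,d)$ exactly. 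Your proposed argument (``omitting any mandatory vertex opens strictly more than $K+\kappa$ disjoint paths'') is a heavier tool than required and would need extra care to actually verify; the path-plus-cut pairing is both simpler and sufficient.
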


Given an $n$-vertex 4-clique instance $G$, let $G_{4p}$ be the corresponding 4-partite graph defined in \Cref{def:4partite} (where $V(G_{4p})$ is partitioned into 4 groups $A,B,C,D$). We start with some notations. For each vertex $a\in A$, we use $B_{a}=\{b\in B|(a,b)\in E(G_{4p})\}$ to denote the neighbors of $a$ in $B$ and let $\bar{B}_{a}=B\setminus B_{a}$. Analogously, $C_{a}$ is the set of vertices in $C$ adjacent to $a$ and $\bar{C}_{a}=C\setminus C_{a}$. For each $d\in D$, we define $B_{d},\bar{B}_{d},C_{d},\bar{C}_{d}$ in a similar way. 

As discussed in \Cref{sect:overview}, our 4-clique instance $H$ will be constructed using two kinds of modules, the source-sink isolating gadgets and the set-intersection filters, which will be introduced in \Cref{sect:CleanupGadget} and \Cref{sect:IntersectionPattern} respectively. After that, the final construction of $H$ and the proof of \Cref{thm:APVClowerbound} will be completed in \Cref{sect:APVCreduction}.

\subsection{The Source-Sink Isolating Gadget}
\label{sect:CleanupGadget}

We first introduce the \emph{source-sink isolating gadget}. Basically, for an undirected graph $R$ and two disjoint groups of vertices $X,Y\subseteq V(R)$, a source-sink isolating gadget $Q(X,Y)$ (or just $Q$ for short) is a graph on vertices $X\cup Y$ with  additional vertices outside $R$. Its formal guarantee is as follows.

\begin{lemma}
Given an undirected graph $R$ and two disjoint groups of vertices $X,Y\subseteq V(R)$, there is a graph $Q$ with $V(Q)\cap V(R)=X\cup Y$ and $|V(Q)|=O(|X|+|Y|)$ such that for any $x\in X,y\in Y$ with $(x,y)\notin E(R)$,
\[\kappa_{R\cup Q}(x,y)=\kappa_{R_{xy}}(x,y)+|X|+|Y|,\]
where $R_{xy}=R\setminus((X\cup Y)\setminus\{x,y\})$. Such a graph $Q$ is called a source-sink isolating gadget, and moreover, the construction of $Q$ is independent from $R$.
\label{lemma:CleanupGadget}
\end{lemma}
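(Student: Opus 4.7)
The plan is to construct $Q$ by introducing, for each $v \in X \cup Y$, two fresh auxiliary vertices $p_v$ and $q_v$ outside $V(R)$, and designing their adjacencies so that the gadget contributes exactly $|X| + |Y|$ extra $(x,y)$-connectivity for every valid query. Explicitly, when $v \in X$ I would join $p_v$ to $v$ and to every vertex of $Y$, and join $q_v$ to $v$ and to every vertex of $X \setminus \{v\}$; the case $v \in Y$ is symmetric with $X$ and $Y$ swapped. Thus $|V(Q)| = 3(|X|+|Y|)$, no $(x,y)$-edge is added for any $x\in X,y\in Y$, and the construction depends only on $X, Y$, not on $R$.

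For the lower bound $\kappa_{R\cup Q}(x,y) \geq \kappa_{R_{xy}}(x,y) + |X|+|Y|$, I would exhibit that many internally vertex-disjoint $(x,y)$-paths. Two short ones lie inside $Q$: $x\!-\!p_x\!-\!y$ and $x\!-\!p_y\!-\!y$, each of length $2$. For every $w \in W := (X\cup Y)\setminus \{x,y\}$ I would produce one length-$4$ path whose only non-$w$ internal vertices are $p_w$ and $q_w$: if $w \in X$ take $x\!-\!q_w\!-\!w\!-\!p_w\!-\!y$, and if $w \in Y$ take $x\!-\!p_w\!-\!w\!-\!q_w\!-\!y$. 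Finally, a maximum family of vertex-disjoint $(x,y)$-paths inside $R_{xy}$ supplies $\kappa_{R_{xy}}(x,y)$ more. Since $V(R_{xy})$ avoids $W$ and meets $V(Q)$ only at $\{x,y\}$, while the $p_{\cdot},q_{\cdot}$ used in distinct paths above are private to different $v$'s, the entire collection is internally disjoint.

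For the matching upper bound I would display the cut $C = W \cup \{p_x, p_y\} \cup C^{*}$, where $C^{*}$ is a minimum $(x,y)$-cut inside $R_{xy}$, giving $|C| = (|X|+|Y|-2) + 2 + \kappa_{R_{xy}}(x,y)$. To check that $C$ separates $x$ from $y$ in $R \cup Q$, I would observe that inside $R$ the remaining subgraph is exactly $R_{xy} \setminus C^{*}$, in which $x$ is cut off from $y$; and inside $Q$, after removing $W \cup \{p_x, p_y\}$, every surviving auxiliary vertex loses all but at most one of its original neighbours, so a short case analysis shows that each $p_v$ or $q_v$ becomes a pendant at either $x$ or $y$, giving no $(x,y)$-route through $Q$. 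Combining the two pieces yields the claimed equality.

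The main obstacle I expect is calibrating the gadget to contribute \emph{exactly} $|X|+|Y|$ rather than more or less. Equipping each $v$ with a single auxiliary connected to both $X$ and $Y$ falls short, because building a private length-$4$ path through every $w \in W$ needs two distinct aux vertices per $w$, and they inevitably collide with the aux vertices used for other $w'$'s. Doubling the aux with symmetric bipartite connections instead overshoots, by creating too many length-$2$ direct $(x,y)$-paths. The asymmetric design above resolves this tension: the cross-side $p_v$ guarantees that only $p_x$ and $p_y$ are simultaneously adjacent to $x$ and $y$, while the same-side $q_v$ supplies the extra hop from source to intermediate, or from intermediate to sink, without introducing a new length-$2$ bypass.
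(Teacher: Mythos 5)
Your gadget is isomorphic to the paper's: your $p_v$ for $v\in X$ plays the role of $v$'s copy in the paper's set $X_1$ (matched to $v$ and complete to $Y$), your $q_v$ for $v\in X$ is a vertex of $X_2$ (complete to all of $X$, since "$v$ and $X\setminus\{v\}$" is just $X$), and symmetrically for $Y$. The lower-bound path family and the upper-bound cut $W\cup\{p_x,p_y\}$ coincide with the paper's, so the proof is correct and follows essentially the same route.
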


The reason we call the graph $Q$  a source-sink isolating gadget is that by adding $Q$ into the input graph $R$ the vertex connectivity between any pair of source $x\in X$ and sink $y \in Y$ in $R\cup Q$, i.e., $\kappa_{R\cup Q}(x,y)$, can be derived from their connectivity in $R_{xy}$, i.e., $\kappa_{R_{xy}}(x,y)$. But the graph $R_{xy}$, as defined in  \Cref{lemma:CleanupGadget}, is just the graph $R$ after removing all source and sink vertices in $X$ and $Y$ except $x$ and $y$. That is, the gadget  ``isolates'' the pair $x$ and $y$ from the rest. We will use this gadget in \Cref{sect:APVCreduction}.

\begin{proof}
We construct $Q$ in the following way. We create duplicated sets $X_{1}, X_{2}$ of $X$, and also $Y_{1}, Y_{2}$ of $Y$. For each vertex $x\in X$, we let $\hat{x}_{1}\in X_{1}$ and $\hat{x}_{2}\in X_{2}$ denote copies of $x$ in $X_{1}$ and $X_{2}$ respectively if there is no other specification (for each $y\in Y$, $\hat{y}_{1},\hat{y}_{2}$ are defined similarly). The vertex set of $Q$ is $V(Q)=X\cup X_{1}\cup X_{2}\cup Y\cup Y_{1}\cup Y_{2}$, and the edge set is
\begin{align*}
    E(Q)=&\{(x,\hat{x}_{1})\mid x\in X\}\cup\{(x_{1},y)\mid x_{1}\in X_{1},y\in Y\}\cup\\
    &\{(y,\hat{y}_{1})\mid y\in Y\}\cup\{(y_{1},x)\mid y_{1}\in Y_{1},x\in X\}\cup\\
    &\{(x,x_{2})\mid x\in X,x_{2}\in X_{2}\}\cup\\
    &\{(y,y_{2})\mid y\in Y,y_{2}\in Y_{2}\}.
\end{align*}
The construction of $Q$ is illustrated in \Cref{CG}.

\begin{figure}[ht]
    \centering    \includegraphics[width=\textwidth]{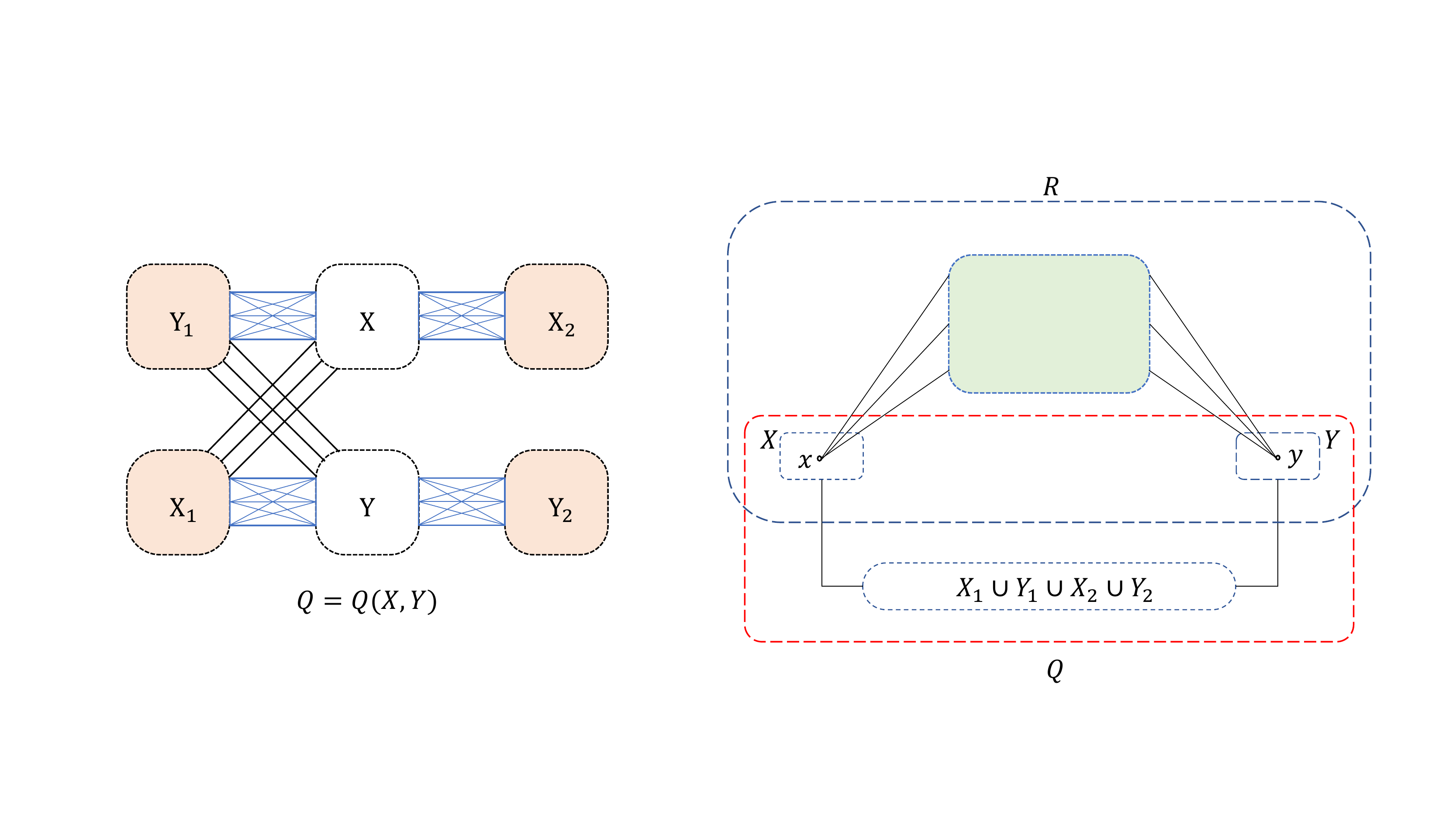}
    \caption{The source-sink isolating gadget $Q$ and the whole graph $R\cup Q$. There are bipartite cliques between $X$ and $Y_1,X_2$, as well as $Y$ and $X_1,Y_2$, and there are perfect matchings between $X$ and $X_1$, $Y$ and $Y_1$.}
    \label{CG}
\end{figure}

Fixing some $x\in X$ and $y\in Y$, we let $\kappa=\kappa_{R_{xy}}(x,y)$ for short. We first show $\kappa_{R\cup Q}(x,y)\geq \kappa+|X|+|Y|$. From the flow view of vertex connectivity, there are $\kappa$ internal vertex-disjoint paths from $x$ to $y$ in $R_{xy}$. Combining \Cref{claim:PathNumberQ} and $V(Q)\cap V(R_{xy})=\{x,y\}$, there are $\kappa+|X|+|Y|$ internal vertex-disjoint paths in $R\cup Q$. Therefore, $\kappa_{R\cup Q}(x,y)\geq \kappa+|X|+|Y|$.

\begin{claim}
There are $|X|+|Y|$ internal vertex-disjoint paths from $x$ to $y$ in $Q$. 
\label{claim:PathNumberQ}
\end{claim}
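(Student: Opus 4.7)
The plan is to exhibit $|X|+|Y|$ internally vertex-disjoint $x$-to-$y$ paths in $Q$ by constructing two families: an ``$X$-indexed'' family of size $|X|$ and a ``$Y$-indexed'' family of size $|Y|$. I will arrange things so that the internal vertices of the $X$-indexed paths lie entirely in $(X\setminus\{x\})\cup X_{1}\cup X_{2}$ and those of the $Y$-indexed paths lie entirely in $(Y\setminus\{y\})\cup Y_{1}\cup Y_{2}$. Disjointness between the two families is then immediate since those six pieces of $V(Q)$ are mutually disjoint by construction; only intra-family disjointness requires explicit checking.

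For the $X$-indexed family, associate one path $P_{x'}$ to each $x'\in X$. When $x'=x$, take $P_{x}=(x,\hat{x}_{1},y)$, which uses the matching edge $x\hat{x}_{1}$ and the biclique edge $\hat{x}_{1}y$ from $E_{Q}(X_{1},Y)$. When $x'\neq x$, take $P_{x'}=(x,\hat{x}'_{2},x',\hat{x}'_{1},y)$: the edges $x\hat{x}'_{2}$ and $\hat{x}'_{2}x'$ exist because $X$ is bicliqued with $X_{2}$, the edge $x'\hat{x}'_{1}$ is the matching edge, and $\hat{x}'_{1}y$ is a biclique edge between $X_{1}$ and $Y$. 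The internal vertex sets, $\{\hat{x}_{1}\}$ and $\{\hat{x}'_{2},x',\hat{x}'_{1}\}$ for $x'\neq x$, are pairwise disjoint because distinct choices of $x'$ give distinct vertices in each of $X$, $X_{1}$, $X_{2}$.

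The $Y$-indexed family is defined symmetrically: take $(x,\hat{y}_{1},y)$ for $y'=y$, and $(x,\hat{y}'_{1},y',\hat{y}'_{2},y)$ for $y'\neq y$, with edges supplied by the $Y_{1}$-$X$ biclique (for $x\hat{y}'_{1}$), the $Y$-$Y_{1}$ matching (for $y'\hat{y}'_{1}$), and the $Y$-$Y_{2}$ biclique (for $y'\hat{y}'_{2}$ and $\hat{y}'_{2}y$). Summing gives $|X|+|Y|$ internally vertex-disjoint paths, proving the claim. The main obstacle is just picking the right routings so that the $X$-indexed and $Y$-indexed families stay confined to disjoint portions of $V(Q)$; the construction above does this cleanly, so the remaining work is purely a bookkeeping check that each listed edge truly belongs to $E(Q)$.
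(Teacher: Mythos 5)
Your proposal constructs exactly the same $|X|+|Y|$ paths as the paper's proof (the two-edge path through the matched $X_1$-vertex, the four-edge paths routed through $X_2$-$X$-$X_1$ for each $x'\neq x$, and their symmetric $Y$-side counterparts), and verifies their internal disjointness by the same confinement-to-disjoint-parts observation. This is a correct proof and essentially identical to the paper's; your explicit bookkeeping of which edges lie in which biclique or matching is a slightly more careful write-up of the same argument.
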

\begin{proof}
The first path is $x\to \hat{x}_{1}\to y$. Each of the next $|X|-1$ paths corresponds to each $x'\in X$ s.t. $x'\neq x$, the path $x\to \hat{x}'_{2}\to x'\to \hat{x}'_{1}\to y$ concretely, where $\hat{x}'_{1}$ and $\hat{x}'_{2}$ are copies of $x'$ in $X_{1}$ and $X_{2}$. Symmetrically, there is a path $x\to y_{1}\to y$ and $|Y|-1$ paths, each of which corresponds to each $y'\in Y$ s.t. $y'\neq y$ (namely the path $x\to \hat{y}'_{1}\to y'\to \hat{y}'_{2}\to y$, where $\hat{y}'_{1}$ and $\hat{y}'_{2}$ are copies of $y'$ in $Y_{1}$ and $Y_{2}$). Observe that these $|X|+|Y|$ paths are internal vertex-disjoint.
\end{proof}

We then argue from the cut view that $\kappa_{R\cup Q}(x,y)\leq \kappa+|X|+|Y|$. Consider the vertex set $S_{Q}=\{x'\mid x'\in X,x'\neq x\}\cup\{y'\mid y'\in Y,y'\neq y\}\cup\{\hat{x}_{1},\hat{y}_{1}\}$. After removing $S_{Q}$ from $R\cup Q$, observe that vertices in both $R$ and $Q$ are only $x$ and $y$, so a simple path from $x$ to $y$ in graph $(R\cup Q)\setminus S_{Q}$ will be totally inside subgraphs $Q\setminus S_{Q}$ or $R\setminus S_{Q}$. Note that $x$ and $y$ are disconnected in $Q\setminus S_{Q}$. Moreover, subgraph $R\setminus S_{Q}$ is exactly $R_{xy}$, so removing $\kappa$ vertices can disconnect $x$ and $y$ in $R\setminus S_{Q}$. In conclusion, in graph $R\cup Q$, $x$ and $y$ can be disconnected by removing $|S_{Q}|+\kappa$ vertices, so $\kappa_{R\cup Q}(x,y)\leq\kappa + |X| + |Y|$.

Finally, the size of $Q$ follows directly from the construction.

\end{proof}

\subsection{The Set-Intersection Filter}
\label{sect:IntersectionPattern}
We now introduce the set-intersection filter.
For each $a\in A$, $d\in D$, the set-intersection filter $P_{ad}^{B}$ is a subgraph of the final $H$, which will ``filter'' the intersection $B_{a}\cap B_{d}$ from the whole set $B$ as \Cref{lemma:IntersectionPatternB} shows. It is constructed as follows. Let $V(P_{ad}^{B})=\{a\}\cup B\cup B'\cup \{d\}$, where $B'$ duplicates vertices in $B$. For each vertex $b\in B$, we use $\hat{b}'$ to denote its copy in $B'$, and for each (non-)neighbor sets $B_{a},\bar{B}_{a},B_{d},\bar{B}_{d}\subseteq B$, we use $B'_{a},\bar{B}'_{a},B'_{d},\bar{B}'_{d}\subseteq B'$ to denote their counterparts respectively. The edge set of $P_{ad}^{B}$ is constructed by
\begin{align*}
E(P_{ad}^{B})=&\{(a,b)\mid b\in B\}\cup\{(b,d)\mid b\in \bar{B}_{d}\}\cup\\
&\{(a,\hat{b}')\mid b\in B_{a}\}\cup\{(b',d)\mid b'\in B'\}\cup\\
&\{(b,\hat{b}')\mid b\in B\}.
\end{align*}
See \Cref{IP} for an illustration of $P^{B}_{ad}$.

\begin{figure}[ht]
    \centering
    \includegraphics[width=\textwidth]{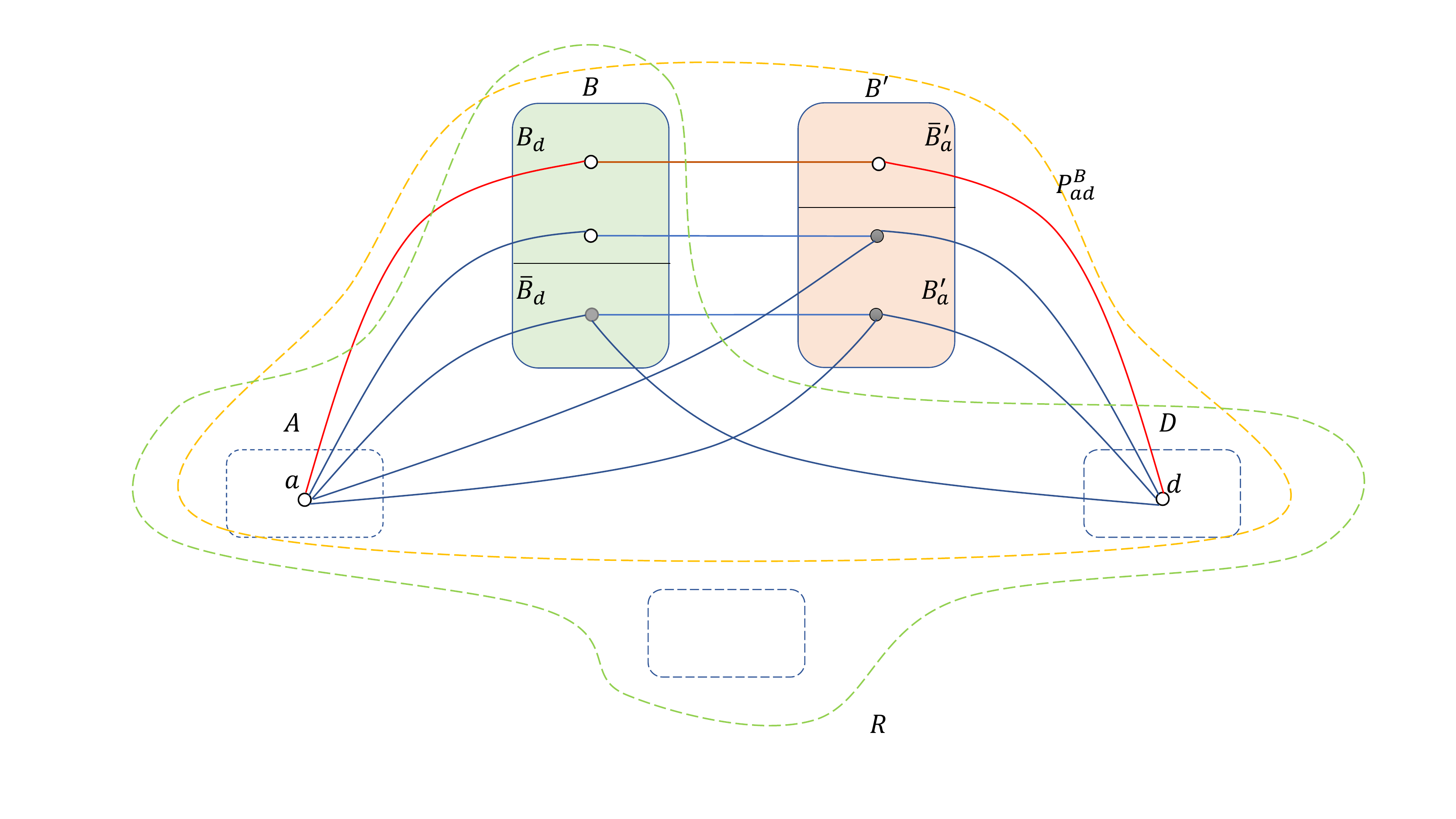}
    \caption{The set-intersection filter. The sets $P_{ad}^{B}$ and $R$ are the areas surrounded by dotted lines.}
    \label{IP}
\end{figure}

The construction of $E(P^{B}_{ad})$ can be interpreted in the following intuitive way. 
\begin{itemize}
\item First, the edges $\{(a,b)\mid b\in B\}$ and $\{(b,d)\mid b\in\bar{B}_{d}\}$ create vertex-disjoint paths of the format $a\to b\to d$ for all $b\in\bar{B}_{d}$, which implies $\bar{B}_{d}$ will be cut from $B$ in every vertex cut of $(a,d)$. 
\item Second, the edges $\{(a,\hat{b}')\mid b\in B_{a}\}$ and $\{(b',d)\mid b'\in B'\}$ create vertex-disjoint paths of the format $a\to \hat{b}'\to d$ for all $b\in B_{a}$, which analogously implies that $B'_{a}$ will be cut from $B'$ in every $(a,d)$-vertex cut.

\item Third, for every $(a,d)$-vertex cut, after $\bar{B}_{d}$ and $B'_{a}$ are cut from $B$ and $B'$ respectively from the above discussion, either $b$ or $\hat{b}'$ should be cut for all $b\in B_{d}\cap \bar{B}_{a}$. The reason is that the edges $\{(b,\hat{b}')\mid b\in B\}$ form a matching between $B$ and $B'$, which will create vertex-disjoint paths of the format $a\to b\to \hat{b}'\to d$ for all $b\in B_{d}\cap\bar{B}_{a}$. 

\end{itemize}

Therefore, suppose that in the third step we choose to cut vertex $b$ of all $b\in B_{d}\cap \bar{B}_{a}$. (In the formal proof of \Cref{lemma:IntersectionPatternB}, we will see that, cutting $b$ rather than $\hat{b}'$ for all $b\in B_{d}\cap \bar{B}_{a}$ is always a better choice when considering vertex min cut between $a$ and $d$.)
Now $\bar{B}_{d}$ is cut in the first step and $B_{d}\cap \bar{B}_{a}$ is cut in the third step, so
vertices in $B$ that survive are $B\setminus\bar{B}_{d}\setminus(B_{d}\cap\bar{B}_{a})=B_{a}\cap B_{d}$. Therefore, the set-intersection filter indeed obtain $B_{a}\cap B_{d}$ as desired. 

\begin{lemma}
For each $a\in A,d\in D$, the set-intersection filter $P^{B}_{ad}$ has the following property. Let $R$ be an undirected graph such that $V(R)\cap V(P^{B}_{ad})=\{a\}\cup B\cup\{d\}$, then
\[
    \kappa_{R\cup P^{B}_{ad}}(a,d)=\kappa_{R^{B}_{ad}}(a,d)+|\bar{B}_{d}|+|B_{a}|+|B_{d}\cap\bar{B}_{a}|,
\]
where $R^{B}_{ad}=(R\cup P^{B}_{ad})\setminus((B\setminus(B_{a}\cap B_{d}))\cup B')$ equivalent to $(R\setminus (B\setminus(B_{a}\cap B_{d})))\cup \{(a,b)\mid b\in B_{a}\cap B_{d}\}$, that is, the graph starting from $R$, removing non $B_{a}\cap B_{d}$ vertices and then adding edges connecting $a$ and each $b\in B_{a}\cap B_{d}$.
\label{lemma:IntersectionPatternB}
\end{lemma}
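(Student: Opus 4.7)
The plan is a standard Menger-style two-sided argument. For the lower bound $\kappa_{R\cup P^B_{ad}}(a,d)\ge \kappa_{R^B_{ad}}(a,d)+|\bar B_d|+|B_a|+|B_d\cap\bar B_a|$, I would start from a family of $\kappa:=\kappa_{R^B_{ad}}(a,d)$ internal vertex-disjoint $(a,d)$-paths inside $R^B_{ad}$ (existing by Menger's theorem) and augment it with three families of paths that live entirely in $P^B_{ad}$ and whose internal vertices lie in $V(P^B_{ad})\setminus V(R^B_{ad})=(B\setminus(B_a\cap B_d))\cup B'$: (i) $a\to b\to d$ for each $b\in\bar B_d$, (ii) $a\to\hat b'\to d$ for each $b\in B_a$, and (iii) $a\to b\to\hat b'\to d$ for each $b\in B_d\cap\bar B_a$. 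Each edge used in these three families appears in $E(P^B_{ad})$ by construction, and the families contribute $|\bar B_d|$, $|B_a|$, and $|B_d\cap\bar B_a|$ paths respectively.

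Pairwise internal vertex-disjointness is the first sanity check. Families (i) and (iii) use $B$-vertices from $\bar B_d$ and $B_d\cap\bar B_a$, which are disjoint; families (ii) and (iii) use $B'$-vertices $\hat b'$ with $b\in B_a$ and $b\in\bar B_a$, again disjoint; families (i) and (ii) share no internal vertex since one sits in $B$ and the other in $B'$. Because all these auxiliary internal vertices lie outside $V(R^B_{ad})\setminus\{a,d\}$, they also do not conflict with the $\kappa$ base paths. Summing yields $\kappa+|\bar B_d|+|B_a|+|B_d\cap\bar B_a|$ disjoint $(a,d)$-paths in $R\cup P^B_{ad}$, as desired.

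For the upper bound, I would fix a minimum $(a,d)$-separator $S^\ast\subseteq V(R^B_{ad})\setminus\{a,d\}$ in $R^B_{ad}$ of size $\kappa$ and set $S=S^\ast\cup(B\setminus(B_a\cap B_d))\cup B'_a$. Using the partition identity $B\setminus(B_a\cap B_d)=\bar B_d\,\sqcup\,(B_d\cap\bar B_a)$ (valid since $\bar B_d\cap B_d=\emptyset$), the size of $S$ is exactly $\kappa+|\bar B_d|+|B_d\cap\bar B_a|+|B_a|$. To verify that $S$ separates $a$ from $d$ in $R\cup P^B_{ad}$, I observe that the only $B'$-vertices surviving $S$ are those in $\bar B'_a$, and I inspect the five edge families of $E(P^B_{ad})$ against these survivors: the edges $(a,\hat b')$ require $b\in B_a$, so none survive; the matching edges $(b,\hat b')$ require $b\in\bar B_a\subseteq B\setminus(B_a\cap B_d)$, so $b$ is removed; hence each surviving $\hat b'$ has $d$ as its unique neighbor in $(R\cup P^B_{ad})\setminus S$. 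Consequently any simple $(a,d)$-path avoids $\bar B'_a$ and therefore lies inside $R^B_{ad}\setminus S^\ast$, which is disconnected by choice of $S^\ast$.

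The main obstacle I anticipate is the bookkeeping in the cut direction: ruling out every residual way for $a$ to reach $d$ through the surviving $B'$-vertices or through a surviving $b\in B_a\cap B_d$ that somehow bypasses $S^\ast$ using edges in $R$. Correctness hinges on removing $b$ (rather than its copy $\hat b'$) for $b\in B_d\cap\bar B_a$, so that the matching edges $(b,\hat b')$ cannot leak a path from $a$ through $b$ to $d$ via possible $R$-edges incident to $b$; this is precisely the ``better choice'' indicated in the paper's intuitive paragraph. Once this case analysis is done, the two inequalities combine to give the stated equality.
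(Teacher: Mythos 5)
Your two-sided argument matches the paper's own proof: the same three path families (via $\bar B_d$ in $B$, $B'_a$ in $B'$, and $B_d\cap\bar B_a$ through the matching) give the lower bound, and the same separator $S_P=\bar B_d\cup B'_a\cup(B_d\cap\bar B_a)$ with the same observation (surviving $B'$-vertices have $d$ as their only neighbor, so they can be ignored, reducing to $R^B_{ad}$) gives the upper bound. The only cosmetic difference is that you merge $S_P$ with a fixed minimum separator $S^\ast$ of $R^B_{ad}$ and verify the union directly, whereas the paper first removes $S_P$, argues the residual connectivity equals $\kappa_{R^B_{ad}}(a,d)$, and then invokes that value; these are the same argument in a different order.
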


\begin{proof}
Let $\kappa=\kappa_{R^{B}_{ad}}(a,d)$ for short. We first show $\kappa_{R\cup P^{B}_{ad}}(a,d)\geq \kappa+|\bar{B}_{d}|+|B_{a}|+|B_{d}\cap \bar{B}_{a}|$. There are $\kappa$ internal vertex-disjoint paths from $a$ to $d$ in $R^{B}_{ad}$. Because $V(P^{B}_{ad})\cap V(R^{B}_{ad})=(B_{a}\cap B_{d})\cup\{a,d\}$, the paths from $a$ to $d$ in $P^{B}_{ad}\setminus(B_{a}\cap B_{d})$ are internal disjoint with those in $R^{B}_{ad}$, and there are $|\bar{B}_{d}|+|B_{a}|+|B_{d}\cap \bar{B}_{a}|$ of them by \Cref{claim:PathNumberP}. Therefore, we have $\kappa+|\bar{B}_{d}|+|B_{a}|+|B_{d}\cap\bar{B}_{a}|$ internal vertex-disjoint paths from $a$ to $d$ in $R\cup P^{B}_{ad}$.

\begin{claim}
There are $|\bar{B}_{d}|+|B_{a}|+|B_{d}\cap \bar{B}_{a}|$ internal vertex-disjoint paths from $a$ to $d$ in $P^{B}_{ad}\setminus(B_{a}\cap B_{d})$.
\label{claim:PathNumberP}
\end{claim}
\begin{proof}
The first $|\bar{B}_{d}|$ paths correspond to vertices $b\in\bar{B}_{d}$, each of which has a path $a\to b\to d$. The next $|B_{a}|$ paths correspond to vertices $b\in B_{a}$, each of which has a path $a\to \hat{b}'\to d$. The last $|B_{d}\cap \bar{B}_{a}|$ paths correspond to vertices $b\in B_{d}\cap \bar{B}_{a}$, each of which has a path $a\to b\to \hat{b}'\to d$.
\end{proof}

We then complete the proof by showing $\kappa_{R\cup P^{B}_{ad}}(a,d)\leq \kappa + |\bar{B}_{d}|+|B_{a}|+|B_{d}\cap \bar{B}_{a}|$. Let $S_{P}=\{b\in B\mid b\in \bar{B}_{d}\}\cup\{\hat{b}'\in B'\mid b\in B_{a}\}\cup\{b\in B\mid b\in B_{d}\cap \bar{B}_{a}\}$ be a vertex cut of $(a,d)$ in $P^{B}_{ad}$. Observe that in $(R\cup P^{B}_{ad})\setminus S_{P}$, a path from vertex $a$ to a vertex $b'\in B'\setminus S_{P}$ must go through vertex $d$, because each $b'\in B'\setminus S_{P}$ only connects to $d$ after removing $S_{P}$. Therefore, it is safe to ignore $B'\setminus S_{P}$ when considering the vertex connectivity between $a$ and $d$ in graph $(R\cup P^{B}_{ad})\setminus S_{P}$, i.e. 
\[
\kappa_{(R\cup P^{B}_{ad})\setminus S_{P}}(a,d)=\kappa_{(R\cup P^{B}_{ad})\setminus (S_{P}\cup B')}(a,d)=\kappa_{R^{B}_{ad}}(a,d)=\kappa,
\]
which means by further removing $\kappa$ vertices, we can disconnect $a$ and $d$ in $(R\cup P_{ad}^{B})\setminus S_{P}$.

In conclusion, we can remove $|S_{P}|+\kappa$ vertices to disconnect $a$ and $d$ in $R\cup P^{B}_{ad}$, which implies $\kappa_{R\cup P^{B}_{ad}}(a,d)\leq \kappa + |\bar{B}_{d}|+|B_{a}|+|B_{d}\cap \bar{B}_{a}|$.

\end{proof}

For each $a\in A,d\in D$, we also define the set-intersection filter $P^{C}_{ad}$ similarly but symmetrically. Let $V(P^{C}_{ad})=\{a\}\cup C'\cup C\cup \{d\}$, where $C'$ duplicates vertices in $C$. For each $c\in C$, let $\hat{c}'$ denote its copy in $C'$. For each (non-)neighbor sets $C_{a},\bar{C}_{a},C_{d},\bar{C}_d\subseteq C$, let $C'_{a},\bar{C}'_{a},C'_{d},\bar{C}'_d\subseteq C'$ denote their counterparts respectively.
The edge set is
\begin{align*}
    E(P^{C}_{ad})=&\{(a,c)\mid c\in \bar{C}_{a}\}\cup\{(c,d)\mid c\in C\}\cup\\
    &\{(a,c')\mid c'\in C'\}\cup\{(\hat{c}',d)\mid c\in C_{d}\}\cup\\
    &\{(\hat{c}',c)\mid c\in C\}.
\end{align*}
The pattern $P^{C}_{ad}$ will also have similar properties as shown below.

\begin{lemma}
For each $a\in A,d\in D$, the set-intersection filter $P^{C}_{ad}$ has the following property. Let $R$ be an undirected graph such that $V(R)\cap V(P^{C}_{ad})=\{a\}\cup C\cup\{d\}$, then
\[
    \kappa_{R\cup P^{C}_{ad}}(a,d)=\kappa_{R^{C}_{ad}}(a,d)+|C_{d}|+|\bar{C}_{a}|+|C_{a}\cap\bar{C}_{d}|,
\]
where $R^{C}_{ad}=(R\cup P^{C}_{ad})\setminus((C\setminus(C_{a}\cap C_{d}))\cup C')$ equivalent to $(R\setminus(C\setminus(C_{a}\cap C_{d})))\cup\{(c,d)\mid c\in C_{a}\cap C_{d}\}$, that is, the graph starting from $R$, removing non $C_{a}\cap C_{d}$ vertices and then adding edges connecting $d$ and each $c\in C_{a}\cap C_{d}$.
\label{lemma:IntersectionPatternC}
\end{lemma}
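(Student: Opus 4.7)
The plan is to prove \Cref{lemma:IntersectionPatternC} by mirroring the proof of \Cref{lemma:IntersectionPatternB}, exploiting the essentially symmetric construction of $P^{C}_{ad}$ (with the roles of ``close to $a$'' and ``close to $d$'' swapped relative to $P^{B}_{ad}$, and the matching now placed on the $d$-side via edges $(\hat{c}',c)$ and $(\hat{c}',d)$ for $c\in C_{d}$). I would establish the lower bound $\kappa_{R\cup P^{C}_{ad}}(a,d)\geq \kappa_{R^{C}_{ad}}(a,d)+|C_{d}|+|\bar{C}_{a}|+|C_{a}\cap\bar{C}_{d}|$ by exhibiting internally vertex-disjoint paths from $a$ to $d$ in $P^{C}_{ad}\setminus(C_{a}\cap C_{d})$, which are disjoint from any $\kappa_{R^{C}_{ad}}(a,d)$ paths in $R^{C}_{ad}$ because $V(P^{C}_{ad})\cap V(R^{C}_{ad})=\{a,d\}\cup(C_{a}\cap C_{d})$.

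The three families of disjoint paths in $P^{C}_{ad}\setminus(C_{a}\cap C_{d})$ are: (i) for each $c\in \bar{C}_{a}$, the length-two path $a\to c\to d$ using the edges $(a,c)$ with $c\in\bar{C}_{a}$ and $(c,d)$ with $c\in C$; (ii) for each $c\in C_{d}$, the path $a\to \hat{c}'\to d$ using $(a,\hat{c}')$ with $\hat{c}'\in C'$ and $(\hat{c}',d)$ with $c\in C_{d}$; and (iii) for each $c\in C_{a}\cap\bar{C}_{d}$, the path $a\to \hat{c}'\to c\to d$, which traverses the matching edge $(\hat{c}',c)$. A direct inspection shows these $|C_{d}|+|\bar{C}_{a}|+|C_{a}\cap\bar{C}_{d}|$ paths use distinct intermediate vertices.

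For the matching upper bound, I would choose the cut $S_{P}=\{c\in\bar{C}_{a}\}\cup\{\hat{c}'\mid c\in C_{d}\}\cup\{c\in C_{a}\cap\bar{C}_{d}\}$ of size exactly $|\bar{C}_{a}|+|C_{d}|+|C_{a}\cap\bar{C}_{d}|$, and argue that in $(R\cup P^{C}_{ad})\setminus S_{P}$ the surviving vertices of $C'$ (namely the copies $\hat{c}'$ with $c\in \bar{C}_{d}$) are ``useless'' for any simple $a$-$d$ path. Indeed, such a $\hat{c}'$ loses its edge to $d$ (which required $c\in C_{d}$), and its matching neighbor $c\in\bar{C}_{d}$ is itself in $S_{P}$ (as either $c\in\bar{C}_{a}\subseteq S_{P}$ or $c\in C_{a}\cap\bar{C}_{d}\subseteq S_{P}$), so $\hat{c}'$ is left adjacent only to $a$. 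Thus
\[\kappa_{(R\cup P^{C}_{ad})\setminus S_{P}}(a,d)=\kappa_{(R\cup P^{C}_{ad})\setminus(S_{P}\cup C')}(a,d)=\kappa_{R^{C}_{ad}}(a,d),\]
yielding $\kappa_{R\cup P^{C}_{ad}}(a,d)\leq |S_{P}|+\kappa_{R^{C}_{ad}}(a,d)$.

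The only substantive point to verify carefully, which is the natural analogue of the corresponding step in \Cref{lemma:IntersectionPatternB}, is that the residual $C'$-vertices really become degree-one neighbors of $a$ after cutting $S_{P}$; once this is checked, both directions combine to give equality. I expect no new obstacles beyond bookkeeping, since $P^{C}_{ad}$ is constructed as the mirror image of $P^{B}_{ad}$ under swapping $(a,d)$ and the roles of ``in''/``out'' neighborhoods, and the choice of $S_{P}$ is forced by which side of each matching edge $(\hat{c}',c)$ is cheaper to cut (always the $C$-side when $c\in C_{a}\cap\bar{C}_{d}$, and the $C'$-side when $c\in C_{d}$, exactly matching the additive correction term).
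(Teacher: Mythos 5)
Your proof is correct and is exactly the mirror of the paper's proof of \Cref{lemma:IntersectionPatternB}, which is the route the paper clearly intends (the paper omits the proof of \Cref{lemma:IntersectionPatternC} with only the remark that $P^{C}_{ad}$ ``will also have similar properties''). Your three path families, your choice of cut $S_P$, and the observation that surviving $C'$-vertices become degree-one neighbors of $a$ after cutting $S_P$ all check out and match the structure of the $P^B_{ad}$ argument under the symmetry $a\leftrightarrow d$, $B\mapsto C$, $B'\mapsto C'$.
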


\subsection{The Final Construction of the APVC Instance}
\label{sect:APVCreduction}

We are now ready to construct the final APVC instance $H$. For each $a\in A,d\in D$, we first construct a graph $H_{ad}$ as follows. Let $P^{B}_{ad}$ and $P^{C}_{ad}$ be the set-intersection filters defined in \Cref{sect:IntersectionPattern}. Then the graph $H_{ad}$ will be defined by
\[
H_{ad}=P^{B}_{ad}\cup P^{C}_{ad}\cup G_{4p}[B\cup C],
\]
which is the union of two set-intersection filters with edges in the graph $G_{4p}$ connecting $B$ and $C$. The final graph then will be constructed by
\[
H=\bigcup_{a\in A,d\in D}H_{ad}\cup Q(A,D),
\]
where $Q(A,D)$ is the source-sink isolating gadget from \Cref{lemma:CleanupGadget} given $R=\bigcup_{a\in A,d\in D}H_{ad}$ and the sets $A,D$. See \Cref{APVC} below for an illustration.

\begin{figure}[ht]
    \centering
    \includegraphics[width=1\textwidth]{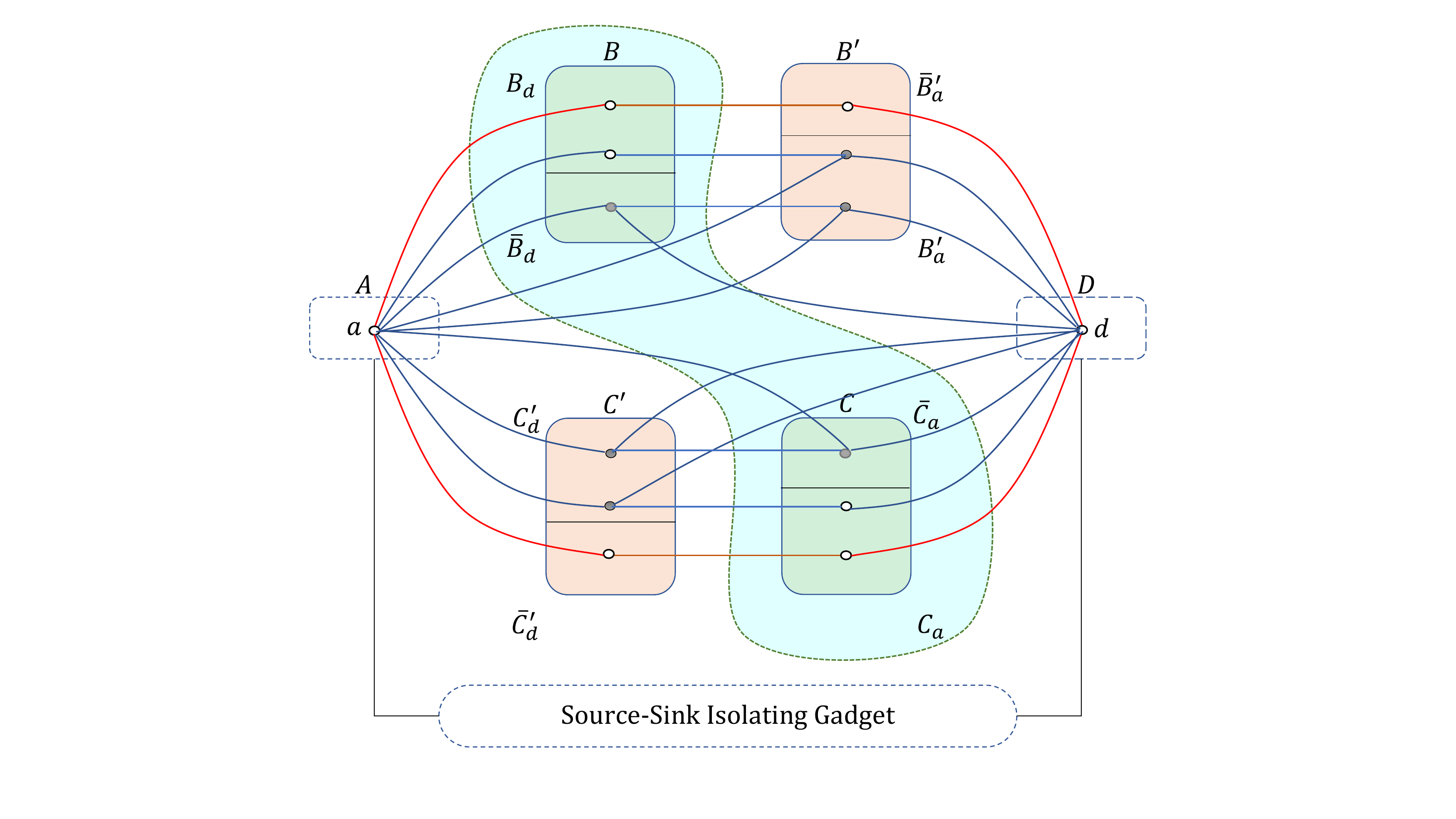}
    \caption{The graph $H$. The edges connecting $B$ and $C$ (i.e. $G_{4p}[B\cup C]$) are omitted in the highlighted part.}
    
    \label{APVC}
\end{figure}

\begin{lemma}
For each $a\in A$ and $d\in D$ of $G_{4p}$, we have
\begin{equation}
    \kappa_{H}(a,d)\geq 4n+|N_{G}(a)\cap \bar{N}_{G}(d)| + |\bar{N}_{G}(a)\cap N_{G}(d)|.
\label{eq:ConnLB}
\end{equation}
Furthermore, without ambiguity let $a$ and $d$ also denote their original vertices in $G$. Then there is a 4-clique in $G$ containing $a$ and $d$ if and only if $a$ and $d$ are adjacent in $G_{4p}$ and
\begin{equation}
    \kappa_{H}(a,d)\geq 4n+|N_{G}(a)\cap \bar{N}_{G}(d)| + |\bar{N}_{G}(a)\cap N_{G}(d)|+1.
\label{eq:Conn=4clique}
\end{equation}
\label{lemma:reduction}
\end{lemma}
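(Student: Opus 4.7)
The plan is to compute $\kappa_H(a,d)$ exactly by peeling off the three gadgets in sequence: the source-sink isolating gadget $Q(A,D)$, then the set-intersection filter $P^B_{ad}$, and finally $P^C_{ad}$. First I would apply \Cref{lemma:CleanupGadget} with $R = \bigcup_{a',d'} H_{a'd'}$, $X = A$, and $Y = D$; the hypothesis $(a,d)\notin E(R)$ holds because none of $P^{B}_{a'd'}$, $P^{C}_{a'd'}$, or $G_{4p}[B\cup C]$ contains a direct $A$-$D$ edge. This yields $\kappa_H(a,d) = \kappa_{R_{ad}}(a,d) + 2n$, where $R_{ad} = R\setminus((A\cup D)\setminus\{a,d\})$. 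I would then observe that $R_{ad}$ coincides with $H_{ad}$: since the duplicates $B'$ and $C'$ are shared across all filters, the edges of $\bigcup_{a',d'} P^{B}_{a'd'}$ not incident to any removed $A$- or $D$-vertex coincide with those of $P^{B}_{ad}$ (and symmetrically for $P^C$), while $G_{4p}[B\cup C]$ is untouched.

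Next I would apply \Cref{lemma:IntersectionPatternB} to $H_{ad}$, decomposed as $R'\cup P^{B}_{ad}$ with $R' := P^{C}_{ad}\cup G_{4p}[B\cup C]$; the hypothesis $V(R')\cap V(P^{B}_{ad}) = \{a,d\}\cup B$ holds because $R'$ contains no $B'$-vertex. This gives $\kappa_{H_{ad}}(a,d) = \kappa_{R^{B}_{ad}}(a,d) + c_B$ with $c_B := |\bar{B}_d| + |B_a| + |B_d\cap\bar{B}_a|$. An analogous application of \Cref{lemma:IntersectionPatternC} to $R^{B}_{ad}$ (with base $R^{B}_{ad}\setminus C'$, for which the intersection with $V(P^{C}_{ad})$ is $\{a,d\}\cup C$) then gives $\kappa_{R^{B}_{ad}}(a,d) = \kappa_{(R^{B}_{ad})^C}(a,d) + c_C$ with $c_C := |C_d| + |\bar{C}_a| + |C_a\cap\bar{C}_d|$. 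A direct check shows that $(R^{B}_{ad})^C$ is precisely the $4$-layer graph $\hat{H}_{ad}$ on $\{a,d\}\cup(B_a\cap B_d)\cup(C_a\cap C_d)$ described in \Cref{sect:overview}, whose edges are $\{(a,b):b\in B_a\cap B_d\}$, $\{(c,d):c\in C_a\cap C_d\}$, together with the $G_{4p}$-edges between $B_a\cap B_d$ and $C_a\cap C_d$. Using $|\bar{B}_d| = n-|B_d|$ and $|B_d\cap\bar{B}_a| = |B_d|-|B_a\cap B_d|$, a routine calculation yields $c_B = n + |N_G(a)\cap\bar{N}_G(d)|$ and $c_C = n + |\bar{N}_G(a)\cap N_G(d)|$, so
\[
\kappa_H(a,d) \;=\; \kappa_{\hat{H}_{ad}}(a,d) + 4n + |N_G(a)\cap\bar{N}_G(d)| + |\bar{N}_G(a)\cap N_G(d)|.
\]

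Inequality \eqref{eq:ConnLB} follows from $\kappa_{\hat{H}_{ad}}(a,d)\geq 0$. For \eqref{eq:Conn=4clique}, every $a$-$d$ path in $\hat{H}_{ad}$ must have the form $a\to b\to c\to d$ with $b\in B_a\cap B_d$, $c\in C_a\cap C_d$, and $(b,c)\in E(G_{4p})$; such a path exists iff a 4-clique witness for $(a,d)$ exists, which combined with $(a,d)\in E(G_{4p})$ gives precisely a 4-clique in $G$ containing $a$ and $d$. The step I expect to require the most care is verifying $R_{ad} = H_{ad}$ in the source-sink reduction: this is the place where the shared (rather than per-pair) structure of $B'$ and $C'$ is essential, collapsing all the $P^{B/C}_{a'd'}$ onto the single pair $(P^B_{ad}, P^C_{ad})$ so that the two filter lemmas then apply cleanly and sequentially.
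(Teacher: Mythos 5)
Your proof is correct and follows essentially the same sequential-peeling approach as the paper: isolate $(a,d)$ via $Q(A,D)$, then apply the two set-intersection filter lemmas in turn, landing on $\hat{H}_{ad}$ and tallying the additive deviations. The only slight imprecision is the claim that \emph{every} $a$-$d$ path in $\hat{H}_{ad}$ has the form $a\to b\to c\to d$ (paths may zigzag in the bipartite middle), but what you actually need — that a path exists iff a length-3 path exists — is immediate from the layered structure, so the argument goes through exactly as in the paper.
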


\begin{proof}
Fixing some $a\in A,d\in D$, we first apply \Cref{lemma:CleanupGadget} on $H$ and vertex sets $A,D$, which gives $\kappa_{H}(a,d)=\kappa_{R_{ad}}(a,d)+|A|+|D|$,
where $R_{ad}=(\bigcup_{a\in A,d\in D}H_{ad})\setminus((A\cup D)\setminus\{a,d\})$. In fact, the graph $R_{ad}$ is exactly $H_{ad}$, because the induced subgraphs $H_{ad}[B\cup B'\cup C\cup C']$ are the same for all $a\in A,d\in D$ by construction. Therefore, we have 
\begin{equation}
\kappa_{H}(a,d)=\kappa_{H_{ad}}(a,d)+|A|+|D|.
\label{eq:1}
\end{equation}

Recall that $H_{ad}=P^{B}_{ad}\cup P^{C}_{ad}\cup G_{4p}[B\cup C]$. Let $R_{1}=P^{C}_{ad}\cup G_{4p}[B\cup C]$. We apply \Cref{lemma:IntersectionPatternB} on $P^{B}_{ad}$ and $R_{1}$, which gives
\begin{equation}
\kappa_{H_{ad}}(a,d)=\kappa_{R_{1}\cup P^{B}_{ad}}(a,d)=\kappa_{R'_{1}}(a,d)+|\bar{B}_{d}|+|B_{a}|+|B_{d}\cap \bar{B}_{a}|,
\label{eq:2}
\end{equation}
where
\begin{align*}
R'_{1}&=(R_{1}\setminus(B\setminus(B_{a}\cap B_{d})))\cup\{(a,b)\mid b\in B_{a}\cap B_{d}\}\\
&=P^{C}_{ad}\cup G_{4p}[(B_{a}\cap B_{d})\cup C]\cup\{(a,b)\mid b\in B_{a}\cap B_{d}\}.
\end{align*}

Let $R_{2}=G_{4p}[(B_{a}\cap B_{d})\cup C]\cup\{(a,b)\mid b\in B_{a}\cap B_{d}\}$. We apply \Cref{lemma:IntersectionPatternC} on $P^{C}_{ad}$ and $R_{2}$, which gives
\begin{equation}
\kappa_{R'_{1}}(a,d)=\kappa_{R_{2}\cup P^{C}_{ad}}(a,d)=\kappa_{R'_{2}}(a,d)+|C_{d}|+|\bar{C}_{a}|+|C_{a}\cap\bar{C}_{d}|,
\label{eq:3}
\end{equation}
where
\begin{align*}
R'_{2}=&(R_{2}\setminus(C\setminus(C_{a}\cap C_{d})))\cup \{(c,d)\mid c\in C_{a}\cap C_{d}\}\\
=&G_{4p}[(B_{a}\cap B_{d})\cup(C_{a}\cap C_{d})]\cup\{(a,b)\mid b\in B_{a}\cap B_{d}\} \cup\\ & \{(c,d)\mid c\in C_{a}\cap C_{d}\}.
\end{align*}

We use $\hat{H}_{ad}$ to denote $R'_{2}$ in the remaining proof and note that it is equivalent to the definition of $\hat{H}_{ad}$ in \Cref{sect:overview}. Because $\kappa_{\hat{H}_{ad}}(a,d)\geq 0$, combining \Cref{eq:1,eq:2,eq:3}, we get
\begin{equation}
\kappa_{H}(a,d)\geq |A|+|D|+|\bar{B}_{d}|+|B_{a}|+|B_{d}\cap\bar{B}_{a}|+|C_{d}|+|\bar{C}_{a}|+|C_{a}\cap\bar{C}_{d}|.
\label{eq:leastconn}
\end{equation}

We now prove the second part of the lemma. If $a$ and $d$ are not adjacent in $G_{4p}$, they are not adjacent in $G$ either, so there is no 4-clique in $G$ containing them. Otherwise, $a$ and $d$ are adjacent, we claim that there is a 4-clique containing $a$ and $d$ in $G_{4p}$ (which is equivalent to the existence of a 4-clique containing $a$ and $d$ in $G$ by \Cref{def:4partite}) if and only if $\kappa_{\hat{H}_{ad}}(a,d)\geq 1$. If the 4-clique exists, let $b\in B$ and $c\in C$ be the other vertices in the 4-clique. Then there is a path $a\to b\to c\to d$ in $\hat{H}_{ad}$ from the construction, which implies $\kappa_{\hat{H}_{ad}}(a,d)\geq 1$. On the other hand, if $\kappa_{\hat{H}_{ad}}(a,d)\geq 1$, there is a path $a\to b\to c\to d$, and $(a,b,c,d)$ forms a 4-clique in $G_{4p}$. Combining this claim with \Cref{eq:1,eq:2,eq:3} gives that there is a 4-clique containing $a$ and $d$ in $G$ if and only if
\begin{equation}
\kappa_{H}(a,d)\geq |A|+|D|+|\bar{B}_{d}|+|B_{a}|+|B_{d}\cap\bar{B}_{a}|+|C_{d}|+|\bar{C}_{a}|+|C_{a}\cap\bar{C}_{d}|+1.
\label{eq:threshold}
\end{equation}
Finally, by the construction of $G_{4p}$, we have $|A|=|B|=|C|=|D|=n$, $|B_{a}|=|C_{a}|$, $|B_{d}|=|C_{d}|$, $|B_{d}\cap \bar{B}_{a}|=|N_{G}(d)\cap\bar{N}_{G}(a)|$ and $|C_{a}\cap\bar{C}_{d}|=|N_{G}(a)\cap\bar{N}_{G}(d)|$. Combining them with Inequalities (\ref{eq:leastconn}) and (\ref{eq:threshold}) completes the proof.

\end{proof}

\begin{proof}[Proof of \Cref{thm:APVClowerbound}]
Assume for contradiction that there exists a combinatorial algorithm $\mathcal{A}$ for the APVC problem with running time $O(n^{4-\epsilon})$ for some constant $\epsilon>0$. Let $G$ be an arbitrary 4-clique instance. We first construct the 4-partite graph $G_{4p}$ and the graph $H$ following the construction in this section. Note that $V(H)=V(Q)\cup A\cup B\cup B'\cup C\cup C'\cup D$, so $|V(H)|=O(n)$ by the construction and \Cref{lemma:CleanupGadget}. Also, $H$ can be constructed in $O(n^{2})$ time directly. By \Cref{lemma:reduction}, we can solve the 4-clique problem by first running $\mathcal{A}$ on graph $H$ and then checking for each adjacent $a\in A,d\in D$ whether $\kappa_{H}(a,d)$ reaches the threshold value $4n+|N_{G}(a)\cap \bar{N}_{G}(d)| + |\bar{N}_{G}(a)\cap N_{G}(d)|+1$. This takes $O(n^{4-\epsilon})+O(n^{3})=O(n^{4-\epsilon})$ time because computing the threshold value for each $a\in A,d\in D$ takes totally $O(n^{3})$ extra time. This contradicts \Cref{conj:4clique}.

\end{proof}

\begin{remark}
The proof of \Cref{thm:APVClowerbound} basically shows that if the APVC problem can be solved in time $T_{\APVC}(n)$, then the 4-clique problem can be solved in time $T_{\fclique}(n)=O(T_{\APVC}(n)+n^{3})$. We note that this relation can be improved to $T_{\fclique}(n)=O(T_{\APVC}(n)+n^{\omega})$ if we use fast matrix multiplication to speed up the reduction. For general algorithms, another version of the 4-clique conjecture (see e.g. \cite{DV22}) states that solving the 4-clique problem requires $n^{\omega(1,2,1)-o(1)}$ time. Therefore, assuming this conjecture, solving the APVC problem also requires $n^{\omega(1,2,1)-o(1)}$ time for general algorithms.
\label{remark:generalAPVC}
\end{remark}

\subsection{Further Results}

In this section, we will show several corollaries from the lower bound of the APVC problem. 

The first corollary is a conditional lower bound of the SSVC problem.

\begin{corollary}

\label{coro:SSVC}

Assuming \Cref{conj:4clique}, for $n$-vertex undirected unweighted graphs, there is no combinatorial algorithm that solves the SSVC problem in $O(n^{3-\epsilon})$ time for any constant $\epsilon > 0$.
\end{corollary}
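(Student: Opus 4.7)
The plan is to piggyback on the APVC hardness construction from \Cref{sect:APVClowerbound} rather than build a new gadget. The critical observation is that the reduction in \Cref{lemma:reduction} only requires evaluating $\kappa_{H}(a,d)$ for pairs $(a,d)$ with $a \in A$ and $d \in D$; it never needs connectivities between arbitrary pairs of vertices. Therefore, a fast single-source oracle is already enough to decide 4-clique, provided we invoke it once per vertex of $A$.

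Concretely, I would proceed as follows. First, suppose for contradiction that some combinatorial algorithm $\mathcal{A}_{\mathrm{SSVC}}$ solves SSVC in $O(n^{3-\epsilon})$ time for some constant $\epsilon>0$. Given a 4-clique instance $G$ on $n$ vertices, build the 4-partite graph $G_{4p}$ and the graph $H$ exactly as in \Cref{sect:APVCreduction}; by the same accounting as in the proof of \Cref{thm:APVClowerbound}, $H$ has $O(n)$ vertices and can be constructed in $O(n^2)$ time. Next, for each $a \in A$ run $\mathcal{A}_{\mathrm{SSVC}}$ on $H$ with source $a$; this produces $\kappa_{H}(a,v)$ for every $v \in V(H)$, and in particular for every $d \in D$. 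Since $|A|=n$ and $|V(H)|=O(n)$, the total cost of these invocations is $n \cdot O(n^{3-\epsilon}) = O(n^{4-\epsilon})$.

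Finally, apply the threshold test from \Cref{lemma:reduction}: for each adjacent pair $(a,d)$ with $a\in A$, $d\in D$, check whether
\[
\kappa_{H}(a,d) \;\geq\; 4n + |N_{G}(a)\cap \bar N_{G}(d)| + |\bar N_{G}(a)\cap N_{G}(d)| + 1.
\]
The right-hand side thresholds for all $n^2$ pairs $(a,d)$ can be precomputed in $O(n^{3})$ time (e.g., by producing the $n\times n$ intersection sizes via straightforward combinatorial bookkeeping), and the comparisons themselves take $O(n^2)$. Thus the whole procedure decides 4-clique on $G$ in $O(n^{4-\epsilon}) + O(n^{3}) = O(n^{4-\epsilon})$ combinatorial time, contradicting \Cref{conj:4clique}.

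There is no real technical obstacle: all the heavy lifting has already been done in \Cref{lemma:reduction}, and the corollary is a direct consequence of the fact that the APVC hard instance is secretly a ``single-source'' hard instance, queried once per source $a\in A$. The only point worth double-checking is that one never needs $\kappa_H(u,v)$ for pairs outside $A \times D$, which is immediate from the statement of \Cref{lemma:reduction}; consequently, the $n$ calls to $\mathcal{A}_{\mathrm{SSVC}}$ (one per $a \in A$) really do suffice.
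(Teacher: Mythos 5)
Your proof is correct and takes essentially the same approach as the paper's: the paper simply observes that any $O(n^{3-\epsilon})$-time SSVC algorithm yields an $O(n^{4-\epsilon})$-time APVC algorithm by calling it once per vertex as source, and then invokes \Cref{thm:APVClowerbound} as a black box, whereas you unpack the 4-clique reduction and note that only the $n$ sources in $A$ are needed. Both arguments amount to the same $O(n)$ SSVC calls on the $O(n)$-vertex hard instance $H$, so the extra refinement (restricting sources to $A$) buys nothing asymptotically.
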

\begin{proof}
Assume for the contradiction that the SSVC problem can be solved in $O(n^{3-\epsilon})$ time. Then for an APVC instance $G$, we may treat every vertex in $G$ as a source to get the correct output. It takes $O(n)$ SSVC calls and therefore the complexity is $O(n \cdot n^{3-\epsilon})=O(n^{4-\epsilon})$, which contradicts \Cref{thm:APVClowerbound} assuming \Cref{conj:4clique}.
\end{proof}

The second corollary is a conditional lower bound of the APVC problem for graphs with general density.

\begin{corollary}

\label{coro:Density}

Given any constant $\delta\in[0,1]$, assuming \Cref{conj:4clique}, there is no combinatorial algorithm that solves the APVC problem for $n$-vertex $m$-edge unweighted graphs, where $m=\Theta(n^{1+\delta})$ with running time $O(m^{2-\epsilon})$ for any constant $\epsilon > 0$.

\end{corollary}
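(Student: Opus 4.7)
The plan is a straightforward \emph{padding} reduction on top of Theorem~\ref{thm:APVClowerbound}: take the dense APVC instance $H$ constructed from a 4-clique instance in Section~\ref{sect:APVCreduction} and pad it with isolated vertices to drive the density down to $\Theta(n^{1+\delta})$ without destroying the reduction. First, starting from an arbitrary $N$-vertex 4-clique instance $G$, I would apply the construction of Section~\ref{sect:APVCreduction} to obtain $H$ with $|V(H)| = \Theta(N)$ and $|E(H)| = \Theta(N^2)$; the latter holds unconditionally because the source-sink isolating gadget $Q(A,D)$ contains bipartite cliques between $A$ and $X_2$ and between $D$ and $Y_2$, already contributing $\Theta(N^2)$ edges regardless of the density of $G$.

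Next, set $n$ to be the smallest integer such that $n^{1+\delta} \geq C\,|E(H)|$ for a sufficiently large absolute constant $C$, so that $n = \Theta(N^{2/(1+\delta)})$. Since $\delta \in [0,1]$, we have $2/(1+\delta) \in [1,2]$, so $n \geq |V(H)|$ for all large $N$. Let $H'$ be obtained from $H$ by adding $n - |V(H)|$ isolated vertices, yielding $|V(H')|=n$ and $|E(H')|=|E(H)|=\Theta(N^2)=\Theta(n^{1+\delta})=\Theta(m)$, as required. Because every added vertex is isolated, $\kappa_{H'}(u,v) = \kappa_H(u,v)$ for every $u,v \in V(H)$, and in particular Lemma~\ref{lemma:reduction} continues to characterize the existence of a 4-clique in $G$ through queries of $\kappa_{H'}(a,d)$ for adjacent $a\in A$, $d\in D$ against the corresponding threshold.

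Finally, suppose some combinatorial algorithm solved APVC in $O(m^{2-\epsilon})$ time on such graphs. Building $H'$ and computing all thresholds takes $O(N^3)$ time (by Remark~\ref{remark:generalAPVC} and the proof of Theorem~\ref{thm:APVClowerbound}), so the whole 4-clique procedure would run in time
\[
O(m^{2-\epsilon}) + O(N^3) = O\!\left(n^{(1+\delta)(2-\epsilon)}\right) + O(N^3) = O\!\left(N^{2(2-\epsilon)}\right) + O(N^3) = O\!\left(N^{4-2\epsilon}\right),
\]
where the key exponent simplification uses $n = \Theta(N^{2/(1+\delta)})$ so that $n^{(1+\delta)(2-\epsilon)} = N^{2(2-\epsilon)}$. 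This contradicts Conjecture~\ref{conj:4clique}.

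There is really no main obstacle: the only points to verify are (i) that $|E(H)|=\Theta(N^2)$ holds even when $G$ itself is sparse, which is handled by the bipartite cliques inside $Q(A,D)$, and (ii) that padding with isolated vertices preserves every relevant $\kappa(u,v)$, which is immediate. The entire argument is thus a one-line algebraic consequence of Theorem~\ref{thm:APVClowerbound} once the padding is in place.
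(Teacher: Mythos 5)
Your proof is correct and takes the same padding approach as the paper: take the dense hard instance $H$ from Theorem~\ref{thm:APVClowerbound} and add $\Theta(N^{2/(1+\delta)})$ isolated vertices so the resulting graph has $m=\Theta(n^{1+\delta})$ edges, then observe that an $O(m^{2-\epsilon})$-time APVC algorithm would yield an $O(N^{4-2\epsilon})$-time 4-clique solver. You additionally verify explicitly that $|E(H)|=\Theta(N^2)$ holds unconditionally via the bipartite cliques inside $Q(A,D)$, a point the paper simply asserts.
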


\begin{proof}
Assume that for some constants $\delta$ and $\epsilon$, such $O(m^{2-\epsilon})$-time algorithm $\mathcal{A}$ exists. Let $H$ be an $\hat{n}$-vertex $\hat{m}$-edge APVC hard instance with $\hat{m}=\Theta(\hat{n}^{2})$, constructed as above for some 4-clique instance. Let $G$ be the union of $H$ and $\Theta(\hat{m}^{1/(1+\delta)})$ isolated vertices. Observe that $G$ now has $n=\hat{n}+\Theta(\hat{m}^{1/(1+\delta)})$ vertices and $m=\hat{m}$ edges, i.e. $m=\Theta(n^{1+\delta})$. By applying algorithm $\mathcal{A}$ on $G$, the all-pairs vertex connectivity of $H$ can be computed in $O(m^{2-\epsilon})$, i.e. $O(\hat{n}^{4-2\epsilon})$ time, contradicting \Cref{conj:4clique} by the argument in the proof of \Cref{thm:APVClowerbound}.
\end{proof}

The last corollary is a conditional lower bound of the SSVC problem for graphs with general density. The proof is omitted since it is analogous to \Cref{coro:Density}.

\begin{corollary}

\label{coro:SSVCDensity}
Given any constant $\delta\in[0,1]$, assuming \Cref{conj:4clique}, there is no combinatorial algorithm that solves the SSVC problem for $n$-vertex $m$-edge unweighted graphs, where $m=\Theta(n^{1+\delta})$, with running time $O(m^{3/2-\epsilon})$ for any constant $\epsilon > 0$.
\end{corollary}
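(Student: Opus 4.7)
The plan is to combine the source-based reduction used in Corollary \ref{coro:SSVC}, which turns any SSVC algorithm into an APVC algorithm by running it from every source, with the isolated-vertex padding used in Corollary \ref{coro:Density}, which calibrates the edge density. Concretely, start from an arbitrary $N$-vertex 4-clique instance $G_0$ and build the APVC hard graph $H$ from \Cref{sect:APVClowerbound}; as shown there, $H$ has $\hat{n} = O(N)$ vertices and $\hat{m} = \Theta(N^{2})$ edges. Then form the padded graph $G$ by adding $\Theta(\hat{m}^{1/(1+\delta)}) = \Theta(N^{2/(1+\delta)})$ isolated vertices to $H$, so that $G$ has $n = \Theta(N^{2/(1+\delta)})$ vertices and $m = \hat{m} = \Theta(n^{1+\delta})$ edges, matching the target density regime.

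Next, suppose for contradiction that a combinatorial algorithm $\mathcal{A}$ solves SSVC in $O(m^{3/2-\epsilon})$ time on graphs with $m = \Theta(n^{1+\delta})$. Because the added vertices are isolated, $\kappa_G(a,d) = \kappa_H(a,d)$ for every pair $a,d \in V(H)$, so invoking $\mathcal{A}$ with source $a \in A$ yields $\kappa_H(a,d)$ for every $d \in D$. Running $\mathcal{A}$ from each $a \in A$, i.e., exactly $|A| = N$ invocations, produces all values needed by \Cref{lemma:reduction}, after which checking the threshold in \Cref{eq:Conn=4clique} for each adjacent pair $(a,d)$ decides the 4-clique instance. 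The total cost is
\begin{equation*}
O\!\left(N \cdot m^{3/2-\epsilon}\right) + O(N^{3}) \;=\; O\!\left(N \cdot N^{3-2\epsilon}\right) \;=\; O(N^{4-2\epsilon}),
\end{equation*}
contradicting \Cref{conj:4clique}.

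The only delicate point, and the step I expect to warrant an explicit remark rather than be a true obstacle, is to call SSVC exactly $|A| = N$ times instead of $n$ times. A naive loop over all $n$ padded sources would incur cost $O(n \cdot m^{3/2-\epsilon})$, which for small $\delta$ is too weak to contradict the $N^{4-o(1)}$ lower bound for 4-clique; but since isolated sources yield trivial output, they can be skipped, keeping the number of nontrivial SSVC calls at $O(N)$. With this accounting, the remainder of the argument proceeds exactly as in the proof of \Cref{coro:Density}, simply replacing the single APVC call by $N$ SSVC calls and adjusting the exponent bookkeeping accordingly.
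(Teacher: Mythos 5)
Your proof is correct and is the natural way to fill in the omitted proof that the paper describes as ``analogous to \Cref{coro:Density}'': pad the hard APVC instance $H$ with isolated vertices to reach the target density $m=\Theta(n^{1+\delta})$, then iterate the SSVC algorithm over the $|A|=N$ sources and apply \Cref{lemma:reduction}. You are also right to flag the key bookkeeping subtlety --- invoking SSVC only from the $N$ vertices of $A$ rather than from all $n=\Theta(N^{2/(1+\delta)})$ vertices of the padded graph --- since a naive loop over all sources would only give $O(N^{3+2/(1+\delta)-2\epsilon})$, which fails to contradict \Cref{conj:4clique} when $\delta<1$.
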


\section{The Lower Bound for Steiner Vertex Connectivity Problem}
\label{sect:SteinerLowerBound}

In this section, we will prove \Cref{thm:Steinerlowerbound}, a conditional lower bound of the Steiner vertex connectivity problem in undirected unweighted graphs, conditioning on the edge-universal 4-clique problem.

\begin{theorem}
For $n$-vertex undirected unweighted graphs, assuming \Cref{conj:AllEdge4Clique}, there is no combinatorial algorithm that solves the Steiner vertex connectivity problem in $O(n^{4-\epsilon})$ time for any constant $\epsilon > 0$. 
\label{thm:Steinerlowerbound}
\end{theorem}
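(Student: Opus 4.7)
The plan is to reduce the edge-universal 4-clique problem to Steiner vertex connectivity by extending the APVC reduction of \Cref{sect:APVClowerbound}. Given an instance $(G, E_{\dem})$, I would first build $G_{4p}$ and the graph $H$ exactly as in \Cref{sect:APVClowerbound}, so by \Cref{lemma:reduction} we have $\kappa_H(a,d) = 4n + |N_G(a)\triangle N_G(d)| + \kappa_{\hat H_{ad}}(a,d)$ for every $a \in A,\, d \in D$, with $\kappa_{\hat H_{ad}}(a,d) \geq 1$ iff some 4-clique of $G$ contains $\{a,d\}$. The pair-dependent deviation $K_{ad} = 4n + |N_G(a)\triangle N_G(d)|$ prevents Steiner vertex connectivity from reading off the 4-clique answer directly, so the core task is to add flow paths that equalize this deviation to a single constant $K$ across all pairs, as the overview promises.

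To equalize, I would augment $H$ with three disjoint families of padding vertices. The first is a set $P = \{p_v : v \in V(G)\}$ where $p_v$ is adjacent to $a \in A$ iff $v \in N_G(a)$ and to $d \in D$ iff $v \in N_G(d)$; for every pair $(a,d)$ this yields exactly $|N_G(a) \cap N_G(d)|$ internally vertex-disjoint paths $a \to p_v \to d$. The second is a complementary set $P'$ with $p'_v \sim a$ iff $v \notin N_G(a)$ and $p'_v \sim d$ iff $v \notin N_G(d)$, yielding $|\bar N_G(a) \cap \bar N_G(d)|$ paths. The third is a universal set $P''$ of $n$ pads each adjacent to all of $A \cup D$, yielding $n$ paths. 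The total padding contribution for every pair is $n + |N_G(a)\cap N_G(d)| + |\bar N_G(a) \cap \bar N_G(d)| = 2n - |N_G(a)\triangle N_G(d)|$, which cancels the variable part of $K_{ad}$ exactly. Combined with the source-sink isolating gadget $Q(A,D)$ of \Cref{lemma:CleanupGadget}, the augmented graph $H^{\ast}$ then satisfies $\kappa_{H^{\ast}}(a,d) = 6n + \kappa_{\hat H_{ad}}(a,d)$ uniformly. The lower bound on $\kappa_{H^{\ast}}(a,d)$ follows by concatenating the vertex-disjoint paths of \Cref{claim:PathNumberQ,claim:PathNumberP} with the direct 2-edge paths through each contributing padding vertex, and the matching upper bound follows from taking the explicit cuts of \Cref{lemma:CleanupGadget,lemma:IntersectionPatternB,lemma:IntersectionPatternC} together with all padding vertices lying on some $a$-$d$ path.

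With the uniform $K = 6n$ in hand, I would take the terminal set $T = A \cup D$, so that $\kappa_{H^{\ast}}(a,d)$ is either $6n$ or at least $6n+1$, equaling $6n$ exactly when no 4-clique of $G$ contains $\{a,d\}$. The expected main obstacle is that $T \times T$ also contains pairs $(a,d) \notin E_{\dem}$, and many such pairs likewise have no 4-clique containing them, which would spuriously force Steiner-VC down to $6n$ regardless of the edge-universal 4-clique answer. To handle this I would preprocess $G$ by attaching a small set of auxiliary helper vertices selectively adjacent with respect to the endpoint set $V_{\dem}$ of $E_{\dem}$ (for instance, universal helpers avoiding $V_{\dem}$ together with per-demand-endpoint private helpers that avoid only the demand partner), mirrored into the $B$ and $C$ layers of the 4-partite construction. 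Their role is to force $\kappa_{\hat H^{\ast}_{ad}}(a,d) \geq 1$ for every non-demand pair while introducing no new 4-clique containing any edge of $E_{\dem}$; verifying this selectivity property is the hardest part of the proof.

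Finally, $H^{\ast}$ has $O(n)$ vertices and $O(n^2)$ edges and is constructible in $O(n^2)$ time, so a combinatorial $O(N^{4-\epsilon})$-time algorithm for Steiner vertex connectivity on an $N$-vertex graph would yield an $O(n^{4-\epsilon})$-time combinatorial algorithm for the edge-universal 4-clique problem on $n$-vertex graphs, contradicting \Cref{conj:AllEdge4Clique} and establishing \Cref{thm:Steinerlowerbound}.
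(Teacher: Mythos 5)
Your equalization step matches the paper's: your sets $P$ and $P'$ are exactly the paper's $Z$ and $W$, providing $|N_G(a)\cap N_G(d)|$ and $|\bar N_G(a)\cap \bar N_G(d)|$ extra paths respectively, so that the deviation becomes uniform (your universal pad set $P''$ is superfluous since $4n+|N_G(a)\triangle N_G(d)|+|N_G(a)\cap N_G(d)|+|\bar N_G(a)\cap\bar N_G(d)|=5n$ is already pair-independent; it only shifts the threshold from $5n$ to $6n$). The high-level plan --- equalize deviations so Steiner-VC over $T=A\cup D$ detects the minimum --- is the paper's strategy. However, there are two concrete gaps.

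First, you never handle terminal pairs inside the same group, i.e. $(a_1,a_2)$ with $a_1,a_2\in A$ or $(d_1,d_2)$ with $d_1,d_2\in D$. Steiner-VC minimizes over all pairs in $T\times T$, and nothing in your construction forces $\kappa_{H^\ast}(a_1,a_2)\geq 6n+1$; the paths through $Q$, $P$, $P'$, $P''$ available to a within-$A$ pair can fall well short (e.g.\ $P$ contributes only $|N_G(a_1)\cap N_G(a_2)|$, possibly $0$). The paper handles this with extra vertex sets $A',D'$ of size $10n$ forming bipartite cliques to $A$ and $D$, which immediately gives $\kappa_J(a_1,a_2)\geq 10n$ (\Cref{lemma:steiner1}). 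You need an analogous padding.

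Second, for non-demand pairs $(a,d)\notin E_{\dem}$ you propose auxiliary helper vertices mirrored into the $B$ and $C$ layers to force $\kappa_{\hat H^\ast_{ad}}(a,d)\geq 1$ without creating new 4-cliques on demand edges, and you acknowledge you have not verified this selectivity. This is genuinely delicate: any $B$--$C$ path that survives the set-intersection filters for the pair $(a,d)$ must correspond to a $b\in B_a\cap B_d$ and $c\in C_a\cap C_d$ with $(b,c)$ an edge, which is exactly a 4-clique witness, so it is not clear such helpers can avoid creating witnesses. The paper sidesteps this entirely with a much simpler device: for each $(a,d)\notin E_{\dem}$ it adds the direct edge $(a,d)$ (the set $E_{AD}$), which contributes exactly one extra internally-disjoint path ($a\to d$ with no internal vertex) and hence raises $\kappa_J(a,d)$ by $1$ above the threshold without touching the $B,C$ layers (\Cref{lemma:steiner2}). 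You should adopt that mechanism; your proposed route as stated is an open gap and likely unworkable.
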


Given an $n$-vertex edge-universal 4-clique instance $G$ with a set $E_{\dem}$ of \emph{demand} edges, let $H$ be the hard APVC instance we construct in \Cref{sect:APVClowerbound}. We will strengthen $H$ to another graph $J$ such that the edge-universal 4-clique problem in $G$ can be reduced to a Steiner vertex connectivity problem in $J$ of the terminal set $A\cup D$. As mentioned in \Cref{sect:overview}, the main idea is to add extra ``flow'' paths from $A$ to $D$ to make the additive deviations between $\kappa_{J}(a,d)$ and $\kappa_{\hat{H}_{ad}}(a,d)$ uniform for all pairs of $a\in A$ and $d\in D$. Furthermore, to exclude the interference of vertex connectivity of pairs $(a_{1}, a_{2})$ s.t. $a_{1}, a_{2}\in A$ or $(d_{1},d_{2})$ s.t. $d_{1},d_{2}\in D$, we artificially add large additive deviations for these pairs.

The construction of $J$ is as follows and see \Cref{fig:ST} for an illustration. The vertex set $V(J)=V(H)\cup Z\cup W\cup A'\cup D'$, where $Z, W, A', D'$ are disjoint groups of additional vertices to create extra ``flow'' paths. Concretely, $Z$ and $W$ will be copies of original vertex set $V(G)$, and $A'$ and $D'$ are additional sets of vertices of size $|A'|=|D'|=10n$. Let 
\begin{align*}
E_{Z} =& \{(a, z)|a\in A, z\in Z,(a,z) \in E(G)\} \cup \\ &\{(d, z)|d\in D,z\in Z,(d,z) \in E(G)\}
\end{align*}
and
\begin{align*}
E_{W} =& \{(a, w)|a\in A, w\in W,(a,w) \not \in E(G)\} \cup \\ &\{(d, w)|d\in D,w\in W,(d,w) \not \in E(G)\}
\end{align*}
be the extra edges to ``equalize the deviation'' of all pairs $(a,d)$ between $A$ and $D$. Let
\[
    E_{A'}=\{(a,a')\mid a\in A,a'\in A'\}
\]
and
\[
    E_{D'}=\{(d,d')\mid d\in D,d'\in D'\}
\]
be extra edges that bring large deviations to pairs inside $A$ or $D$. Finally, we construct a set of extra edges
\[
    E_{AD}=\{(a,d)\mid a\in A,d\in D,(a,d)\notin E_{\dem}\}
\]
to prevent non-demand pairs $(V(G)\times V(G))\setminus E_{\dem}$ from affecting the Steiner connectivity value. The whole edge set $E(J)=V(H)\cup E_{Z}\cup E_{W}\cup E_{A'}\cup E_{D'}\cup E_{AD}$.

\begin{figure}[ht]
    \centering
    \includegraphics[width=1\textwidth]{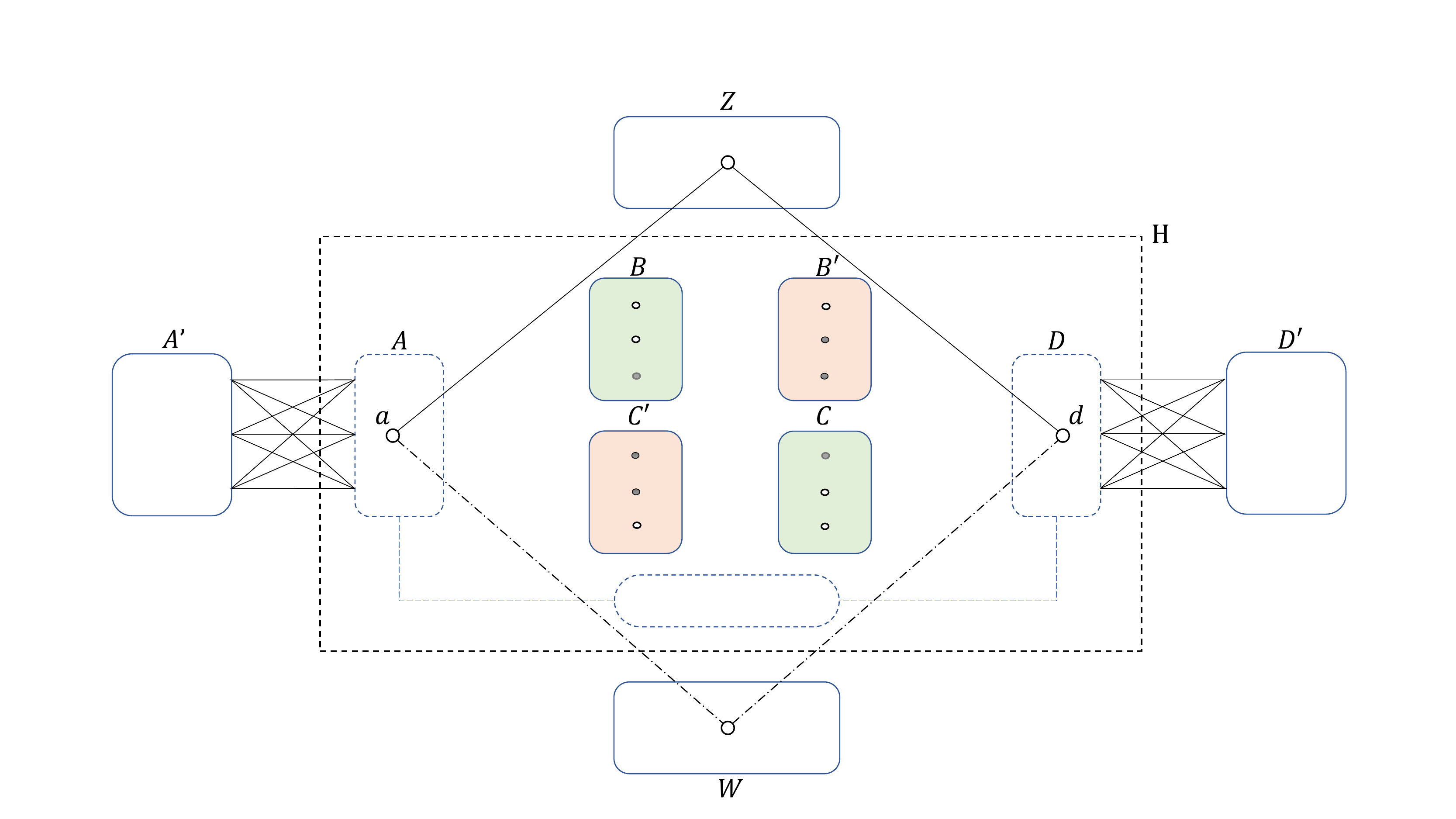}
    \caption{
    The graph $J$. 
    There are bipartite cliques between $A$ and $A'$, $D$ and $D'$, and the edges between $A,D$ and $Z,W$ are linked based on $E_Z,E_W$.
    }
    \label{fig:ST}
\end{figure}

The correctness of the reduction will be established by the following lemmas.

\begin{lemma}
\label{lemma:steiner1}
In graph $J$, for each $a_{1},a_{2}\in A$ with $a_{1}\neq a_{2}$ and $d_{1},d_{2}\in D$ with $d_{1}\neq d_{2}$, we have $\kappa_{J}(a_{1},a_{2})\geq 10n$ and $\kappa_{J}(d_{1},d_{2})\geq 10n$.
\end{lemma}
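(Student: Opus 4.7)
The plan is to exhibit $10n$ internally vertex-disjoint paths between any two vertices in $A$ (respectively, $D$) and then invoke Menger's theorem. The key observation is that the edge set $E_{A'}$ was constructed precisely to form a complete bipartite graph between $A$ and $A'$ with $|A'| = 10n$, and analogously $E_{D'}$ forms a complete bipartite graph between $D$ and $D'$ with $|D'| = 10n$. These gadgets are designed to add a uniform deviation of $10n$ for pairs within $A$ or within $D$, dominating any "useful" connectivity that might otherwise confuse the Steiner computation.

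First, I would fix distinct $a_1, a_2 \in A$. For each $a' \in A'$, the path $a_1 \to a' \to a_2$ exists in $J$ because $(a_1, a'), (a', a_2) \in E_{A'}$ by construction. These $|A'| = 10n$ paths each use a distinct internal vertex $a' \in A'$, and no vertex of $A'$ appears on more than one of them, so they are internally vertex-disjoint. By Menger's theorem, $\kappa_J(a_1, a_2) \geq 10n$. The argument for distinct $d_1, d_2 \in D$ is completely symmetric, using the $10n$ paths $d_1 \to d' \to d_2$ indexed by $d' \in D'$.

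There is no real obstacle here; the lemma is essentially a direct consequence of how $A'$ and $D'$ were constructed. The only subtle point worth stating explicitly is that the paths $a_1 \to a' \to a_2$ only use vertices in $\{a_1, a_2\} \cup A'$, and since $A' \cap V(H) = \emptyset$ and $A' \cap (Z \cup W \cup D') = \emptyset$ by construction, there is no accidental collision with any other part of $J$. Thus the lemma follows immediately, and the bound $10n$ is in fact achieved with equality up to the contribution of the rest of $J$, which is exactly the role this gadget plays in the later Steiner argument.
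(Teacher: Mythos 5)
Your proof is correct and follows exactly the same approach as the paper: for distinct $a_1, a_2 \in A$, the $10n$ length-two paths $a_1 \to a' \to a_2$ indexed by $a' \in A'$ (using the bipartite clique $E_{A'}$) are internally vertex-disjoint, giving $\kappa_J(a_1, a_2) \geq 10n$, with the symmetric argument via $D'$ for pairs in $D$.
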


\begin{proof}

For each pair $a_{1},a_{2}\in A$ with $a_{1}\neq a_{2}$, we can construct at least $10n$ internally vertex-disjoint paths from $a_{1}$ to $a_{2}$ via vertices in $A'$ and edges in $E_{A'}$, so $\kappa_{J}(a_{1},a_{2})\geq 10n$. Similarly, $\kappa_{J}(d_{1},d_{2})\geq 10n$.

\end{proof}

\begin{lemma}
\label{lemma:steiner2}
In graph $J$, for each $a \in A$, $d \in D$ with $(a,d) \not \in E_{\dem}$, $\kappa_{J}(a, d) \geq 5n +1$.
\end{lemma}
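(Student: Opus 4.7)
The plan is to prove the bound by producing an explicit family of internally vertex-disjoint paths from $a$ to $d$ in $J$, organized by which sub-gadget of the construction they traverse. The guiding observation is that $V(H)$, $Z$, and $W$ are pairwise disjoint vertex sets, so any collection of paths whose internal vertices lie in different regions among these three (plus the direct edge, which has no internal vertex) automatically merges into an internally vertex-disjoint family. This reduces the task to independent counting in each region and then summing.

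First I would record the trivial direct edge contribution: since $(a,d) \notin E_{\dem}$, the edge $(a,d) \in E_{AD} \subseteq E(J)$ is a length-1 path from $a$ to $d$, contributing $+1$. Next I would count the length-2 paths through $Z$ and $W$. For each $v \in V(G)$, the edge definitions of $E_Z$ and $E_W$ give a path $a \to v_Z \to d$ precisely when $v \in N_G(a) \cap N_G(d)$, and a path $a \to v_W \to d$ precisely when $v \in \bar N_G(a) \cap \bar N_G(d)$. These paths use distinct internal vertices (each $v$ contributes at most one of each kind), so they are mutually internally vertex-disjoint, and their count is
\[
|N_G(a) \cap N_G(d)| + |\bar N_G(a) \cap \bar N_G(d)|.
\]
By inclusion-exclusion on the four-way partition of $V(G)$ induced by whether a vertex lies in $N_G(a)$ or $\bar N_G(a)$ and in $N_G(d)$ or $\bar N_G(d)$, this equals $n - |N_G(a) \cap \bar N_G(d)| - |\bar N_G(a) \cap N_G(d)|$.

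Then I would invoke Lemma~\ref{lemma:reduction} inside the subgraph $H \subseteq J$ to obtain at least $4n + |N_G(a) \cap \bar N_G(d)| + |\bar N_G(a) \cap N_G(d)|$ additional internally vertex-disjoint paths from $a$ to $d$, whose internal vertices lie entirely in $V(H) \setminus \{a,d\}$. Since this last set is disjoint from $Z \cup W$, these paths combine with the ones above with no conflict. Adding the three contributions:
\[
1 + \bigl(n - |N_G(a)\cap \bar N_G(d)| - |\bar N_G(a)\cap N_G(d)|\bigr) + \bigl(4n + |N_G(a)\cap \bar N_G(d)| + |\bar N_G(a)\cap N_G(d)|\bigr) = 5n+1,
\]
so $\kappa_J(a,d) \geq 5n+1$, as desired.

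I do not expect any real obstacle here; the reduction was deliberately designed so that the asymmetric terms $|N_G(a)\cap \bar N_G(d)|$ and $|\bar N_G(a)\cap N_G(d)|$ appearing in the $H$ bound of Lemma~\ref{lemma:reduction} are exactly cancelled by the complementary count of ``good'' vertices in $Z \cup W$, leaving a clean constant contribution of $n$ from the $Z/W$ gadgets. The only mild care required is verifying the bipartite rule from the definitions of $E_Z$ and $E_W$ (i.e., matching ``edge of $G$'' for $Z$ and ``non-edge of $G$'' for $W$) and confirming that all produced paths are indeed internally disjoint, both of which are immediate from the construction.
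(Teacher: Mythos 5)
Your proof is correct and follows essentially the same approach as the paper: count the direct edge from $E_{AD}$, the length-$2$ paths through $Z$ and $W$, and the $\geq 4n + |N_G(a)\cap\bar N_G(d)| + |\bar N_G(a)\cap N_G(d)|$ internally vertex-disjoint paths inside $H$ from \Cref{lemma:reduction}, then sum using the four-way partition of $V(G)$. The paper's write-up performs the identical count and cancellation; your explicit note that $V(H)$, $Z$, $W$ are pairwise disjoint (hence the path families merge with no conflict) just makes explicit what the paper leaves implicit.
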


\begin{proof}

From our construction, the number of nodes in $Z$ adjacent to both $a$ and $d$ is 
\[
|\{z \in Z| (a,z) \in E_Z, (d,z)\in E_Z\}| = |N_G(a) \cap N_G(d)|,
\]
and the number of nodes in $W$ adjacent to both $a$ and $d$ is
\[
|\{w \in W| (a,w) \in E_Z, (d,w)\in E_Z\}| = |\Bar{N}_G(a) \cap \Bar{N}_G(d)|.\]
Furthermore, there is an individual edge in $E_{AD}$ directly connecting $a$ and $d$ for each such $(a,d)\notin E_{\dem}$.

Therefore, we have $|N_G(a) \cap N_G(d)| + |\Bar{N}_G(a) \cap \Bar{N}_G(d)|+1$ extra vertex-disjoint paths from $a$ to $d$. Furthermore, Inequality (\ref{eq:ConnLB}) from \Cref{lemma:reduction} shows that $\kappa_{H}(a,d)\geq 4n + |N_G(a) \cap \Bar{N}_G(d)| + |\bar{N}_G(a) \cap N_G(d)|$, so we can construct a vertex-disjoint path set with size at least
\begin{align*}
4n &+ |N_G(a) \cap \Bar{N}_G(d)| + |\bar{N}_G(a) \cap N_G(d)| \\ &+ |N_G(a) \cap N_G(d)| + |\Bar{N}_G(a) \cap \Bar{N}_G(d)|+1 = 5n+1,
\end{align*}
which implies $\kappa_{J}(a,d)\geq 5n+1$.

\end{proof}

\begin{lemma}
\label{lemma:steiner3}
In graph $J$, for each $a \in A$, $d \in D$ with $(a,d) \in E_{\dem}$, there is a 4-clique containing $a$ and $d$ in $G$ if and only if $\kappa_{J}(a,d) \geq 5n+1$.

\end{lemma}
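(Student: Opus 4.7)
The plan is to establish both directions of the equivalence by extending the path/cut analysis of \Cref{lemma:steiner2} to the case $(a,d)\in E_{\dem}$, where the crucial difference is that no direct $(a,d)$-edge is available in $E_{AD}$.

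For the forward direction, if $G$ contains a 4-clique through $a$ and $d$, then \Cref{lemma:reduction} upgrades the lower bound to $\kappa_H(a,d)\geq 4n+|N_G(a)\cap\bar N_G(d)|+|\bar N_G(a)\cap N_G(d)|+1$. As in \Cref{lemma:steiner2}, I augment the vertex-disjoint paths in $H$ with the extra paths $a\to z\to d$ for each $z\in N_G(a)\cap N_G(d)\subseteq Z$ and $a\to w\to d$ for each $w\in\bar N_G(a)\cap\bar N_G(d)\subseteq W$, all of which are internally disjoint from the $H$-paths and from each other. Using $|N_G(a)\cap N_G(d)|+|N_G(a)\cap\bar N_G(d)|+|\bar N_G(a)\cap N_G(d)|+|\bar N_G(a)\cap\bar N_G(d)|=n$, the total path count reaches $5n+1$.

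For the reverse direction I would prove the contrapositive: if no 4-clique of $G$ contains $a$ and $d$, then $\kappa_J(a,d)\leq 5n$, witnessed by an explicit separator $C$. Since $\kappa_{\hat H_{ad}}(a,d)=0$, chaining the equalities produced by \Cref{lemma:CleanupGadget}, \Cref{lemma:IntersectionPatternB}, and \Cref{lemma:IntersectionPatternC} yields $\kappa_H(a,d)=4n+|N_G(a)\cap\bar N_G(d)|+|\bar N_G(a)\cap N_G(d)|$ exactly, and reading off the cut $S_Q$ from the proof of \Cref{lemma:CleanupGadget} shows that an optimal $(a,d)$-separator $C_H$ in $H$ can be chosen to contain $(A\cup D)\setminus\{a,d\}$. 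I then define
\[
    C = C_H \,\cup\, \{z\in Z : z\in N_G(a)\cap N_G(d)\} \,\cup\, \{w\in W : w\in\bar N_G(a)\cap\bar N_G(d)\},
\]
whose size is exactly $5n$ by the same partition identity.

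The main step is verifying that $C$ disconnects $a$ from $d$ in $J$, and this is the place where the structural assumption $(a,d)\in E_{\dem}$ is used. Any $(a,d)$-walk staying inside $H$ is blocked by $C_H$, so I only need to rule out walks that use some edge of $E_Z\cup E_W\cup E_{A'}\cup E_{D'}\cup E_{AD}$. The assumption $(a,d)\in E_{\dem}$ removes the $E_{AD}$-shortcut; every $a'\in A'$ (resp.\ $d'\in D'$) has neighbors only in $A$ (resp.\ $D$), and since $A\setminus\{a\}\subseteq C$ and $D\setminus\{d\}\subseteq C$ such vertices are dead ends; for any $z\in Z\setminus C$, $z$ is adjacent to at most one of $\{a,d\}$ (otherwise $z\in C$), and all of its other neighbors lie in $(A\cup D)\setminus\{a,d\}\subseteq C$, so $z$ cannot route a path from $a$ to $d$, and symmetrically for $W$. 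The principal obstacle I anticipate is executing this case analysis cleanly and confirming that the cut guaranteed by \Cref{lemma:CleanupGadget} really can be taken to contain $(A\cup D)\setminus\{a,d\}$; once that is settled, the bound $|C|=5n$ forces $\kappa_J(a,d)\leq 5n$, completing the contrapositive.
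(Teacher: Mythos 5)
Your argument is correct and is essentially the same as the paper's, with one minor structural difference in the reverse direction. The forward direction is identical: use \Cref{lemma:reduction} to get $4n+|N_G(a)\cap\bar N_G(d)|+|\bar N_G(a)\cap N_G(d)|+1$ disjoint paths inside $H$, then add $|N_G(a)\cap N_G(d)|+|\bar N_G(a)\cap\bar N_G(d)|$ more through $Z$ and $W$, and the four-way partition identity gives $5n+1$. For the reverse direction, the paper first applies \Cref{lemma:CleanupGadget} to $J$ directly (noting the gadget construction does not depend on $R$) to reduce to $J_{ad}$, then constructs a cut in $J_{ad}$; you instead read off the cut $S_Q\cup(\text{min cut in }H_{ad})$ from the \emph{proof} of \Cref{lemma:CleanupGadget} to exhibit an explicit $(a,d)$-separator $C_H$ in $H$ containing $(A\cup D)\setminus\{a,d\}$ (plus the copies $\hat a_1,\hat d_1$), then add the $Z$- and $W$-vertices adjacent to both $a$ and $d$ to get a separator of size exactly $5n$ in $J$. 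These are the same cut viewed two different ways, and the case analysis you sketch (walks in $H$ are blocked by $C_H$; $A',D'$ are dead ends since $A\setminus\{a\}$ and $D\setminus\{d\}$ lie in $C_H$; surviving $Z,W$ vertices touch at most one of $a,d$; no $E_{AD}$ shortcut since $(a,d)\in E_{\dem}$) is exactly what is needed. The only caveat you flagged yourself — that the optimal cut can be taken to contain $(A\cup D)\setminus\{a,d\}$ — does hold, since $S_Q$ in the proof of \Cref{lemma:CleanupGadget} is precisely $\{a'\in A:a'\neq a\}\cup\{d'\in D:d'\neq d\}\cup\{\hat a_1,\hat d_1\}$ and the remaining part of the cut lies inside $H_{ad}$, which is disjoint from $(A\cup D)\setminus\{a,d\}$.
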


\begin{proof}

First, assume there is a 4-clique containing $a$ and $d$ in $G$. From \Cref{lemma:reduction}, we can construct $4n+|N_{G}(a)\cap \bar{N}_{G}(d)| + |\bar{N}_{G}(a)\cap N_{G}(d)|+1$ internally vertex-disjoint paths connecting $a$ and $d$ in $H$. An argument similar to the proof of \Cref{lemma:steiner2} shows that we have extra $|N_G(a) \cap N_G(d)| + |\Bar{N}_G(a) \cap \Bar{N}_G(d)|$ internally vertex-disjoint paths via vertices in $Z,W$ and edges in $E_{Z},E_{W}$. Therefore, in graph $J$, we can find a set of internally vertex-disjoint paths from $a$ to $d$ with size
\begin{align*}
4n &+ |N_G(d) \cap \Bar{N_G}(a)| + |N_G(a) \cap \Bar{N_G}(d)| + 1 \\ &+ |N_G(a) \cap N_G(d)| + |\Bar{N_G}(a) \cap \Bar{N_G}(d)| = 5n+1.
\end{align*}

Now assume there is no 4-clique containing $a$ and $d$ in $G$. From \Cref{lemma:CleanupGadget}, we define $J_{ad}=J \setminus ((A \cup D) \setminus \{a,d\})$ by removing vertices in $A$ and $D$ other than $a$ and $d$, and then we have 
\begin{equation}
\kappa_{J}(a,d)=\kappa_{J_{ad}}(a,d)+|A|+|D|.
\label{eq:8}
\end{equation}
Note that although the source-sink isolating gadget $Q(A,D)$ is constructed by applying \Cref{lemma:CleanupGadget} on graph $H$ and vertex sets $A,D$, the construction of $Q(A,D)$ is independent from $H$, so we can also apply \Cref{lemma:CleanupGadget} on graph $J$ and vertex sets $A,D$.

Next, we will show
\begin{equation}
\kappa_{J_{ad}}(a,d) \leq \kappa_{H_{ad}}(a,d) + |N_G(a) \cap N_G(d)| + |\Bar{N}_G(a) \cap \Bar{N}_G(d)|,
\label{eq:9}
\end{equation}

where $H_{ad}=H \setminus ((A \cup D)\setminus \{a,d\})$ as defined in the proof of \Cref{lemma:reduction}. The reason is that, compared to $H_{ad}$, the extra vertices in $J_{ad}$ are in $A', D', Z, W$, which are only directly connected to $a$ or $d$. Let $C_{H}$ be the minimum vertex cut of size $|C_{H_{ad}}|=\kappa_{H_{ad}}(a,d)$ whose removal disconnects $a$ and $d$ in $H_{ad}$. Let 
\begin{align*}
C_{J_{ad}} = & C_{H_{ad}}\cup \{z \in Z\mid (a,z) \in E_Z, (d,z)\in E_Z\} \cup\\ & \{w \in W\mid  (a,w) \in E_Z, (d,w)\in E_Z\}.
\end{align*}
We can easily observe that removing $C_{J_{ad}}$ will disconnect $a$ and $d$ in $J_{ad}$, which immediately implies Inequality (\ref{eq:9}).

Finally, combining Equation (\ref{eq:8}) and Inequality (\ref{eq:9}), we have
\begin{align*}
    \kappa_{J}(a,d)\leq& \kappa_{H_{ad}}(a,d)+|A|+|D|+ |N_G(a) \cap N_G(d)| \\ &+ |\Bar{N}_G(a) \cap \Bar{N}_G(d)|.
\end{align*}
From \Cref{lemma:reduction}, we know that when there is no 4-clique containing $a$ and $d$ in $G$, $\kappa_{H}(a,d)=4n+|N_{G}(a)\cap \bar{N}_{G}(d)| + |\bar{N}_{G}(a)\cap N_{G}(d)|$. Combining $\kappa_{H}(a,d)=\kappa_{H_{ad}}(a,d)+|A|+|D|$ (by \Cref{lemma:CleanupGadget} again),
\begin{align*}
\kappa_{J}(a,d) \leq& \kappa_{H_{ad}}(a,d)+|A|+|D| \\ &+ |N_G(a) \cap N_G(d)| + |\Bar{N}_G(a) \cap \Bar{N}_G(d)|\\
=&\kappa_{H}(a,d)+|N_G(a) \cap N_G(d)| + |\Bar{N}_G(a) \cap \Bar{N}_G(d)|\\
=&4n+|N_{G}(a)\cap \bar{N}_{G}(d)| + |\bar{N}_{G}(a)\cap N_{G}(d)| \\ &+ |N_G(a) \cap N_G(d)| + |\Bar{N}_G(a) \cap \Bar{N}_G(d)|\\
=&5n<5n+1.
\end{align*}
\end{proof}

We are now ready to prove \Cref{thm:Steinerlowerbound}.

\begin{proof}[Proof of \Cref{thm:Steinerlowerbound}]

Given an edge-universal 4-clique instance $G$ with a set $E_{\dem}$ of demand edges. We first construct the Steiner vertex connectivity instance $J$ with terminal set $A\cup D$ as shown above. Then $G$ with $E_{\dem}$ is a yes instance of the edge-universal 4-clique problem if and only if the Steiner vertex connectivity of $A\cup D$ in $J$ is at least $5n+1$, by \Cref{lemma:steiner1,lemma:steiner2,lemma:steiner3}.

For time analysis, note that $V(J)=O(n)$ by construction. Therefore, an $O(n^{4-\epsilon})$-time combinatorial algorithm for the Steiner vertex connectivity problem will imply an $O(n^{4-\epsilon})$ time combinatorial algorithm for the edge-universal 4-clique problem, which contradicts \Cref{conj:AllEdge4Clique}.

\end{proof}

\section{The Upper Bounds}
\label{sect:UpperBound}

In this section, we will show the upper bounds of the APVC problem and SSVC problem in undirected unweighted sparse graphs (see \Cref{thm:APVCupper} and \Cref{thm:SSVCupper} respectively). We only prove \Cref{thm:APVCupper} in detail and briefly mention the proof of \Cref{thm:SSVCupper} since they are analogous.

\begin{theorem}

Given an undirected unweighted graph $G$ with $n$ vertices and $m$ edges, there is a randomized algorithm that solves the APVC problem in $\hat O(m^{\frac{11}{5}})$ time with high probability. Furthermore, assuming \Cref{conj:GHtree}, the running time of this algorithm becomes $\hat O(m^2)$.

\label{thm:APVCupper}

\end{theorem}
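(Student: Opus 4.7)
The plan is to implement the high/low-degree split described in \Cref{sect:overview}. Fix a threshold $k$ and partition $V(G)$ into the high-degree set $V_h = \{v : \deg_G(v) > k\}$ and the low-degree set $V_l = V(G) \setminus V_h$. Handle the pairs in $V_h \times V_h$ by brute force and the remaining pairs (those with at least one low-degree endpoint) by the bounded-connectivity oracle underlying \Cref{thm:GHtree}. Finally, optimize $k$.

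For the high-high pairs, use that $\sum_v \deg_G(v) = 2m$ forces $|V_h| = O(m/k)$, so there are $O(m^2/k^2)$ pairs. For each such pair $(u,v)$ I would invoke the almost-linear-time max-flow algorithm of \cite{CKL+22} on the standard vertex-splitting gadget that turns $(s,t)$-vertex connectivity into $(s,t)$-edge connectivity; this costs $\hat O(m)$ per pair, so $\hat O(m^3/k^2)$ in total.

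For the remaining pairs, observe that if $u \in V_l$ or $v \in V_l$ then $\kappa_G(u,v) \le \min(\deg_G(u), \deg_G(v)) \le k$, so it suffices to compute $\min\{\kappa_G(u,v), k+1\}$. I would run the all-pairs bounded vertex-connectivity oracle from \cite{PSY22}, which answers such queries after $\tilde O(k^2)$ black-box calls to an element-connectivity Gomory-Hu tree construction (with suitably chosen terminal sets). Using \Cref{thm:GHtree} each such construction costs $\hat O(mk)$, giving $\hat O(mk^3)$ time overall for this regime. Summing the two regimes yields $\hat O(m^3/k^2 + mk^3)$; balancing at $k = m^{2/5}$ gives $\hat O(m^{11/5})$. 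Under \Cref{conj:GHtree} the element-connectivity Gomory-Hu tree costs $\hat O(m)$, so the second regime drops to $\hat O(mk^2)$ and balancing at $k = m^{1/2}$ yields $\hat O(m^2)$.

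There is no real obstacle once these two ingredients are in place; the only point to verify carefully is that the $\tilde O(k^2)$-call oracle from \cite{PSY22} is indeed sufficient to output exactly $\kappa_G(u,v)$ for every pair with a low-degree endpoint (rather than merely $\min\{\kappa_G(u,v),k\}$). This follows because for such pairs the true connectivity is at most $k$, so the truncation is vacuous. Correctness of the high-high part is immediate from Menger's theorem, and the overall running time is $\hat O(m^{11/5})$ unconditionally and $\hat O(m^2)$ under \Cref{conj:GHtree}, with high probability inherited from the randomized construction of \Cref{thm:GHtree}.
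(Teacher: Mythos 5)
Your proposal is correct and follows essentially the same approach as the paper: split vertices by degree threshold $k$, handle high-high pairs by $O(m^2/k^2)$ max-flow calls, handle pairs with a low-degree endpoint via the bounded vertex-connectivity oracle built from $\tilde O(k^2)$ element-connectivity Gomory-Hu trees (the paper isolates this step as a separate lemma with running time $\hat O(mk^3 + n^2)$, but the $n^2$ term you omit is dominated by $m^2$ and does not affect the final bound), then balance at $k = m^{2/5}$ (resp.\ $k=m^{1/2}$ under \Cref{conj:GHtree}).
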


\begin{theorem}

Given an undirected unweighted graph $G$ with $n$ vertices, $m$ edges and a source vertex, there is a randomized algorithm to solve the SSVC problem in $\hat O(m^{\frac{5}{3}})$ time with high probability. Furthermore, assuming \Cref{conj:GHtree}, the running time becomes $\hat O(m^{1.5})$.

\label{thm:SSVCupper}

\end{theorem}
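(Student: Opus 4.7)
The plan is to mirror the high/low-degree splitting scheme of the APVC algorithm of \Cref{thm:APVCupper}, exploiting the fact that in the single-source setting only one endpoint varies, which saves a factor of $k$ both in the max-flow step and in the capped-connectivity subroutine compared to the all-pairs case.

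First, I fix a degree threshold $k$ (to be optimized) and partition $V(G)\setminus\{s\}$ into $V_h=\{v:\deg_G(v)>k\}$ and $V_l=\{v:\deg_G(v)\le k\}$. The handshake lemma gives $|V_h|=O(m/k)$. Next, for every $v\in V_h$ I compute $\kappa_G(s,v)$ exactly by a single $(s,v)$ max-flow call in $\hat O(m)$ time using the max-flow algorithm of \cite{CKL+22}; summing over $V_h$ contributes $\hat O(m^2/k)$.

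For every $v\in V_l$ we have $\kappa_G(s,v)\le\deg_G(v)\le k$, so it suffices to compute $\min(\kappa_G(s,v),\,k+1)$ for each such $v$. I plan to invoke a single-source specialization of the capped vertex-connectivity subroutine of \cite{PSY22}: whereas the all-pairs version constructs $\tilde O(k^2)$ element-connectivity $k$-Gomory-Hu trees (on random terminal samples) so that, for every pair simultaneously, some sample isolates a min vertex cut, a sampling argument tailored to the fixed source $s$ needs only $\tilde O(k)$ such trees to handle every $v\in V_l$ with high probability. By \Cref{thm:GHtree} each tree costs $\hat O(mk)$, so this step takes $\hat O(mk^2)$ unconditionally; under \Cref{conj:GHtree} each tree costs $\hat O(m)$, reducing the step to $\hat O(mk)$.

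Balancing, the total cost is $\hat O(m^2/k+mk^2)$, minimized at $k=m^{1/3}$ to yield $\hat O(m^{5/3})$, and $\hat O(m^2/k+mk)$ assuming \Cref{conj:GHtree}, minimized at $k=m^{1/2}$ to yield $\hat O(m^{3/2})$. The main obstacle I foresee is justifying the single-source specialization of the oracle: one must show that a random terminal sampling tailored to $s$, repeated only $\tilde O(k)$ times, still makes the element-connectivity cut between $s$ and each low-degree $v$ coincide with a min $(s,v)$-vertex cut (capped at $k+1$) with high probability. The remaining arguments are a routine threshold optimization together with bookkeeping through the black-box GH-tree queries.
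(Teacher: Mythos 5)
Your proposal follows essentially the same high/low-degree splitting as the paper's proof of \Cref{thm:SSVCupper}, with the same balancing of $\hat O(m^2/k)$ for max-flows on $V_h$ against $\hat O(mk^2)$ (or $\hat O(mk)$ under \Cref{conj:GHtree}) for the low-degree part. The single-source specialization you flag as the main obstacle is exactly what the paper invokes ("a subroutine analogous to \Cref{lemma:SmallVC}") and works as you suspect: forcing $s$ into every sample $U_i$ means a random sample succeeds for a fixed $v\in V_l$ with probability $\Omega(1/k)$ rather than $\Omega(1/k^2)$, so $\tilde O(k)$ trees suffice and each low-degree query is answered correctly w.h.p.
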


A key subroutine to obtain the algorithm in \Cref{thm:APVCupper} is the subroutine shown in \Cref{lemma:SmallVC}, which can capture all vertex connectivity bounded by $k$. The proof of \Cref{lemma:SmallVC} has been shown in \cite{IN12,PSY22} and we defer it to \Cref{sect:OmittedProof} for completeness.

\begin{lemma}

Given an $n$-vertex $m$-edge undirected unweighted graph $G$ and a threshold $k$, there is a randomized algorithm that computes the value of $\min\{\kappa_{G}(u,v),k\}$ for all vertex pairs $(u,v)$ in $\hat{O}(mk^3+n^{2})$ time with high probability. Furthermore, assuming \Cref{conj:GHtree}, the running time of this algorithm becomes $\hat{O}(mk^2+n^{2})$.

\label{lemma:SmallVC}

\end{lemma}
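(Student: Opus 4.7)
The plan is to follow the sampling-based reduction from vertex connectivity to element connectivity used in \cite{IN12, PSY22}. The key observation is that for non-adjacent $u, v \in V(G)$ with $\kappa_G(u, v) = \kappa \leq k$ and a fixed minimum vertex cut $S$ of size $\kappa$, one has $\kappa'_{G, U}(u, v) = \kappa_G(u, v)$ whenever $u, v \in U$ and $S \cap U = \emptyset$. This lets us read small vertex connectivities off an element-connectivity Gomory-Hu tree whose terminal set $U$ happens to ``spare'' $S$, and the $k^{2}$ factor in the running time reflects the cost of arranging this by random sampling.

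Concretely, I would draw $T = \Theta(k^{2} \log n)$ independent terminal sets $U_{1}, \ldots, U_{T}$ by including each vertex in $U_{i}$ with probability $p = 1/k$. For each $U_{i}$ I invoke \Cref{thm:GHtree} to construct the $k$-Gomory-Hu tree for element connectivity on $(G, U_{i})$; each construction costs $\hat O(m k)$, totalling $\hat O(m k^{3})$, which drops to $\hat O(m k^{2})$ under \Cref{conj:GHtree}. Using the tree's near-constant-time queries, I then tabulate $\min\{\kappa'_{G, U_{i}}(u, v),\, k\}$ for every pair $u, v \in U_{i}$; the expected total work across trees is $\sum_{i} O(|U_{i}|^{2}) = O(T (np)^{2}) = \tilde O(n^{2})$. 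The final answer for each pair $(u, v)$ is the minimum of these tabulated values over the trees $i$ with $u, v \in U_{i}$, obtained via running minima in $\tilde O(n^{2})$ aggregation time. The overall running time is $\hat O(m k^{3} + n^{2})$, or $\hat O(m k^{2} + n^{2})$ under \Cref{conj:GHtree}.

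Correctness splits in two. If $\kappa_G(u, v) > k$, a short cut-replacement argument (given any element cut $C$, substitute each cut edge by one of its endpoints different from $u, v$, which exists because $u, v$ are non-adjacent, and retain the non-terminal cut vertices) shows $\kappa'_{G, U_{i}}(u, v) \geq \kappa_G(u, v) > k$, so every tabulated value equals $k$ and the output is $k$; adjacent pairs admit a trivial direct handling via their joining edge. If $\kappa_G(u, v) = \kappa \leq k$, the good event $\{u, v \in U_{i}$ and $S \cap U_{i} = \emptyset\}$ occurs with probability at least $p^{2}(1 - p)^{k} = \Omega(1/k^{2})$ in each trial; with $T = \Theta(k^{2} \log n)$ independent trials and a union bound over the $O(n^{2})$ pairs, every low-connectivity pair is covered by some successful trial with high probability, and the minimum tabulated value from such a trial equals $\kappa_G(u, v)$. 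The main subtlety is verifying the two-sided equality between vertex and element connectivity in the good event, which is the core lemma underlying \cite{IN12, PSY22} and follows from the cut-replacement above together with the fact that any vertex cut disjoint from $U$ is automatically a valid element cut.
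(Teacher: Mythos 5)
Your proposal matches the paper's proof essentially step for step: the same $\Theta(k^2 \log n)$ terminal sets sampled at rate $1/k$, the same $k$-Gomory-Hu trees for element connectivity from \Cref{thm:GHtree}, the same tabulation and aggregation giving $\hat O(mk^3+n^2)$ (or $\hat O(mk^2+n^2)$ under \Cref{conj:GHtree}), and the same correctness argument via hitting a fixed minimum cut with probability $\Omega(1/k^2)$ per trial. The only cosmetic difference is that you separate out adjacent pairs and give a cut-replacement argument for $\kappa'_{G,U_i}(u,v) \ge \kappa_G(u,v)$, whereas the paper's Menger-style definition of $\kappa_G(u,v)$ (allowing edges as well as non-$\{u,v\}$ vertices in the cut) makes that inequality immediate from containment of the feasible cut families and handles adjacent pairs with no special casing.
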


We now complete the proof of \Cref{thm:APVCupper} using \Cref{lemma:SmallVC}.

\begin{proof}[Proof of \Cref{thm:APVCupper}]
As discussed in \Cref{sect:overview}, our APVC algorithm will handle vertex connectivity between high-degree vertices and vertex connectivity involving low-degree vertices separately. Let $k$ be a degree threshold which will be chosen later. Given an input graph $G$, let $V_{h}=\{v\in V(G)\mid \deg_{G}(v)>k\}$ and $V_{l}=\{v\in V(G)\mid \deg_{G}(v)\leq k\}$. 

Note that for each vertex pair $(u,v)$ such that $u\in V_{l}$ or $v\in V_{l}$, the vertex connectivity $\kappa_{G}(u,v)$ is at most $k$. By \Cref{lemma:SmallVC}, the vertex connectivity of all such pairs can be computed exactly in time $\hat{O}(mk^{3}+n^2)$. Now consider all remaining pairs $(u,v)$ with $u\in V_{h}$ and $v\in V_{h}$. Since $|V_{h}|\leq O(m/k)$ by definition, there are $O(m^{2}/k^{2})$ remaining pairs. The vertex connectivity of each pair can be computed in $\hat{O}(m)$ time using one max flow call \cite{CKL+22}, so totally $\hat{O}(m^{3}/k^{2})$ time suffices for remaining pairs.

The total running time of the above algorithm is then $\hat{O}(mk^{3}+n^{2}+m^{3}/k^{2})$, by choosing $k=\Theta(m^{2/5})$, the running time will be $\hat{O}(m^{11/5})$. Assuming \Cref{conj:GHtree}, the running time will be improved to $\hat{O}(mk^{2}+n^{2}+m^{3}/k^{2})$, which is $\hat{O}(m^{2})$ by choosing $k=\Theta(m^{1/2})$.
\end{proof}

The proof of \Cref{thm:SSVCupper} is analogous and we sketch it below.

\begin{proof}[Proof of \Cref{thm:SSVCupper}]

Let $s\in V(G)$ be the source. Similarly, we partition $V(G)$ into two set $V_{h}=\{v\in V(G)\mid \deg_{G}(v)>k\}$ and $V_{l}=\{v\in V(G)\mid \deg_{G}(v)\leq k\}$ where $k$ is the degree threshold. The vertex connectivity $\kappa_{G}(s,v)$ for all pairs $(s,v)$ such that $v\in V_{l}$ can be computed in $\hat{O}(mk^{2}+n)$ time, using a subroutine analogous to \Cref{lemma:SmallVC}. The vertex connectivity of remaining pairs can be computed in $\hat{O}(m^{2}/k)$ by trivially calling max flows.

The total running time is $\hat{O}(mk^{2}+n+m^{2}/k)$, which will be $\hat{O}(m^{5/3})$ by choosing $k=\Theta(m^{1/3})$. Assuming \Cref{conj:GHtree}, the running time is $\hat{O}(mk+n+m^{2}/k)$, which will be $\hat{O}(m^{3/2})$ by choosing $k=\Theta(m^{1/2})$.

\end{proof}

\subsection{Proof of \Cref{lemma:SmallVC}}
\label{sect:OmittedProof}
The algorithm and analysis follow the ideas in \cite{IN12}.

The algorithm is as follows. First, we create $t=O(k^{2}\log n)$ sample sets $U_{1},...,U_{t}$, each of which is generated by sampling each vertex in $V(G)$ independently with probability $1/k$. Moreover, for each set $U_{i}$, we construct a $k$-Gomory-Hu tree for element connectivity using \Cref{thm:GHtree}. From \Cref{thm:GHtree}, for each set $U_{i}$ and each pair $u,v\in U_{i}$, we can query $a_{i}(u,v)=\min\{\kappa'_{G,U_{i}}(u,v),k\}$ in nearly constant time. Then for each $u,v\in V(G)$, we let the final output be $a(u,v)=\min\{a_{i}(u,v)\mid u,v\in U_{i}\}$. 

We then analyze the running time. The time to construct all $k$-Gomory-Hu trees is $\hat{O}(t\cdot mk)=\hat{O}(mk^{3})$. To compute all $a_{i}(u,v)$ and final output $a(u,v)$, each set $U_{i}$ will have size $\tilde{O}(n/k)$ w.h.p. by Chernoff bound, so the time is $\tilde{O}(t\cdot (n/t)^{2})=\tilde{O}(n^{2})$ (by aborting the algorithm when some $U_{i}$ has size not bounded by $\tilde{O}(n/k)$). Therefore, the total running time is $\hat{O}(mk^{3}+n^{2})$.

The correctness is shown as follows. Consider a fixed $u,v \in V(G)$. First, $a(u,v)$ is well-defined with high probability, because for one sample set $U_{i}$, $u$ and $v$ are inside $U_{i}$ and $a_{i}(u,v)$ is well-defined with probability $1/k^{2}$ and there are $O(k^{2}\log n)$ sample sets. Given that $a(u,v)$ is well-defined, we know $a(u,v)\geq \min\{\kappa_{G}(u,v),k\}$ since $\kappa'_{G,U_{i}}(u,v)\geq \kappa_{G}(u,v)$ for all $U_{i}$ by the definition of element connectivity. By the same reason, if $\kappa_{G}(u,v)\geq k$, we must have $a(u,v)=k=\min\{\kappa_{G}(u,v),k\}$ which is the correct answer. From now we suppose $\kappa_{G}(u,v)<k$ and we are going to show there exists $U_{i}$ such that $u,v\in U_{i}$ and $a_{i}(u,v)=\kappa_{G}(u,v)$ with high probability. Note that $\kappa'_{G,U_{i}}(u,v)=\kappa$ if $U_{i}$ is disjoint with some minimum $u$-$v$ vertex cut $C_{u,v}$. From our sampling strategy, $U_{i}$ contains $u,v$ and is disjoint with $C_{u,v}$ with probability $\frac{1}{k^{2}}(1-\frac{1}{k})^{k}=\Omega(1/k^{2})$. Because there are $O(k^{2}\log n)$ sample set, $U_{i}$ exists with high probability.

If we assuming \Cref{conj:GHtree}, we simply subsitute $k$-Gomory-Hu tree for element connectivity with element connectivity Gomory-Hu tree and follow the same algorithm and analysis. We will obtain an algorithm with running time $\hat{O}(mk^{2}+n^{2})$.


\newcommand{\etalchar}[1]{$^{#1}$}

\end{document}